\documentclass[11pt]{article}

\newif\ifsoda
\sodafalse

\newif\ifnocref
\nocreffalse

\newif\ifsodacut
\sodacutfalse

\newif\ifauthorscut
\authorscutfalse

\usepackage{fullpage}


\usepackage{graphicx}

\usepackage{amsmath}
\usepackage{amsthm}
\usepackage{amssymb}
\usepackage{natbib}
\usepackage{mathtools}
\usepackage{hyperref}
\usepackage{bm}
\usepackage{subfig}

\usepackage{tikz}
\usepackage{pgfplots}
\usetikzlibrary{arrows}

\usepackage{dsfont}

\usepackage[noabbrev]{cleveref}




\usepackage{amsmath,amsfonts,amssymb,amsthm,latexsym}
\usepackage{graphicx}
\usepackage{color}
\usepackage{bm}
\newtheorem{theorem}{Theorem}
\newtheorem{definition}{Definition}
\newtheorem{lemma}{Lemma}
\newtheorem{proposition}{Proposition}
\newtheorem{fact}{Fact}
\newtheorem{corollary}{Corollary}

\newcommand{\vecbz}{\textbf{z}}
\newcommand{\scrlh}{g}
\newcommand{\scrbh}{G}

\newcommand{\qedend}{\qedhere}

\newcommand{\zbar}{\bar{z}_1}
\newcommand{\zees}{\vecbz_{-1}}
\newcommand{\optzees}{\vecbz_{-1}^*}
\newcommand{\optzeei}[1][i]{z_{#1}^*}

\newcommand{\zum}[1][\zbar]{{\cal S}(#1)}

\newenvironment{numberedtheorem}[1]{%
\begin{theorem}}{\end{theorem}\addtocounter{theorem}{-1}}

\newenvironment{numberedlemma}[1]{%
\begin{lemma}}{\end{lemma}\addtocounter{lemma}{-1}}

\newenvironment{numberedcorollary}[1]{%
\begin{corollary}}{\end{corollary}\addtocounter{corollary}{-1}}

\newcommand{\agind}[1][i]{_{#1}}



\newcommand{\ironed}{\bar}
\newcommand{\constrained}{\hat}
\newcommand{\optconstrained}{\composed{\optimized}{\constrained}}
\newcommand{\optimized}[1]{#1\opt}
\newcommand{\differentiated}[1]{#1'}

\newcommand{\tagged}[2]{{#2}^{#1}}

\newcommand{\primedarg}[1]{#1\primed}
\newcommand{\noaccents}[1]{#1}
\newcommand{\composed}[3]{#1{#2{#3}}}

\newcommand{\newagentvar}[3][\noaccents]{%
\expandafter\newcommand\expandafter{\csname #2\endcsname}{#1{#3}}%
\expandafter\newcommand\expandafter{\csname #2s\endcsname}{#1{\boldsymbol{#3}}}%
\expandafter\newcommand\expandafter{\csname #2smi\endcsname}[1][i]{#1{\boldsymbol{#3}}_{-##1}}%
\expandafter\newcommand\expandafter{\csname #2i\endcsname}[1][i]{#1{#3}\agind[##1]}%
\expandafter\newcommand\expandafter{\csname #2ith\endcsname}[1][i]{#1{#3}_{(##1)}}%
}

\newcommand{\newitemvar}[3][\noaccents]{%
\expandafter\newcommand\expandafter{\csname #2\endcsname}{#1{#3}}%
\expandafter\newcommand\expandafter{\csname #2s\endcsname}{#1{\boldsymbol{#3}}}%
\expandafter\newcommand\expandafter{\csname #2smj\endcsname}[1][j]{#1{\boldsymbol{#3}}_{-##1}}%
\expandafter\newcommand\expandafter{\csname #2j\endcsname}[1][j]{#1{#3}_{##1}}%
\expandafter\newcommand\expandafter{\csname #2jth\endcsname}[1][j]{#1{#3}_{(##1)}}%
}

\newcommand{\maxval}{h}

\newcommand{\payalg}{\mathcal{A}}

\newcommand{\forrezs}[1]{{#1}^{\rezs}}
\newagentvar[\forrezs]{imbal}{\phi}
\newagentvar[\forrezs]{cumimbal}{\Phi}

%
%
\newagentvar{alloc}{x}

%
%
\newagentvar{falloc}{z}

%
%
\newagentvar{quant}{q}
\newagentvar[\constrained]{exquant}{\quant}
\newagentvar[\constrained]{critquant}{\quant}  
\newagentvar[\optconstrained]{monoq}{\quant}
\newagentvar{Val}{\nu}
\newagentvar{toquant}{Q}


\newagentvar{qprice}{\price}
\newagentvar{qrev}{R}
\newagentvar[\ironed]{iqrev}{\qrev}

\newagentvar[\bar]{bstrat}{\strat}
\newagentvar[\bar]{bpay}{\pay}
\newagentvar[\bar]{bwal}{\wal}
\newagentvar[\bar]{bbid}{\bid}
\newagentvar{zpay}{\pay^{\rezs}}
\newagentvar{zwal}{\wal^{\rezs}}
\newagentvar{zdelt}{\Delta^{\rezs}}
\newagentvar{bomega}{\bm{\omega}}

\newagentvar{qalloc}{y}
\newagentvar{cumalloc}{Y}
\newagentvar[\constrained]{calloc}{\qalloc}
\newagentvar[\constrained]{cumcalloc}{\cumalloc}
\newagentvar[\ironed]{ialloc}{\qalloc}
\newagentvar[\ironed]{icumalloc}{\cumalloc}

\newcommand{\exposted}[1]{#1^{\text{\it EP}}}
\newagentvar[\composed{\exposted}{\constrained}]{excalloc}{\qalloc}
\newagentvar[\exposted]{exalloc}{\qalloc}
\newagentvar[\exposted]{extalloc}{\talloc}
\newagentvar[\exposted]{exfalloc}{\falloc}

\newagentvar{rev}{R}
\newagentvar[\differentiated]{marg}{\rev}
\newagentvar{rawrev}{P}
\newagentvar[\differentiated]{rawmarg}{\rawrev}

\newagentvar[\primedarg]{inducedrev}{\rev}

\newagentvar[\tilde]{pseudorev}{\rev}
\newagentvar[\tilde]{pseudorawrev}{\rawrev}

%
%

\newagentvar{typespace}{{\cal T}}
\newagentvar{typesubspace}{S}

\newagentvar{type}{t}
\newagentvar{othertype}{s}
\newagentvar{val}{v}
\newagentvar{outcome}{w}
\newagentvar{outcomespace}{{\cal W}}

\newcommand{\served}[1]{#1^1}
\newcommand{\nonserved}[1]{#1^0}
\newcommand{\alloced}[1]{#1^{\alloc}}
\newcommand{\allocedi}[1]{#1^{\alloci}}
\newagentvar[\alloced]{xoutcome}{\outcome}
\newagentvar[\allocedi]{xioutcome}{\outcome}
\newagentvar[\served]{soutcome}{\outcome}

\newagentvar[\nonserved]{nsoutcome}{\outcome}

\newagentvar{price}{p}
\newagentvar{talloc}{\alloc}
\newagentvar[\tilde]{eval}{\val}
\newagentvar[\tilde]{ewal}{\wal}
\newagentvar[\tilde]{balloc}{\alloc}
\newagentvar[\tilde]{dalloc}{y}
\newagentvar[\tilde]{eballoc}{y}
\newagentvar{ealloc}{y}
\newagentvar[\tilde]{bprice}{\price}

\newagentvar{act}{a}
\newagentvar{bidspace}{A}
\newagentvar{actspace}{A}

\newagentvar[\constrained]{critval}{\val}
\newagentvar[\constrained]{crittype}{\type}
\newagentvar[\constrained]{critvirt}{\virt}
\newagentvar[\constrained]{reserve}{\val} 
\newagentvar[\constrained]{bidreserve}{\bid} 
\newagentvar[\optconstrained]{monop}{\val}
\newagentvar[\constrained]{monot}{\type}
\newagentvar[\optimized]{monorev}{\rev}

\newagentvar{rcalloc}{y}
\newagentvar[\optimized]{optrcalloc}{\rcalloc}
\newagentvar{biddist}{G}

\newagentvar[\constrained]{critbid}{\bid}
\newagentvar[\constrained]{cbid}{B}
\newagentvar[\primedarg]{wbid}{\bid}
\newagentvar{gfunc}{\vartheta}
\newagentvar{block}{C}

\newagentvar{util}{u}

\newagentvar{strat}{\beta}
\newagentvar{bid}{b}

\newagentvar{pay}{\pi}
\newagentvar{rez}{\rho}
\newagentvar[\tilde]{trez}{\rez}

\newagentvar[\forrezs]{minsum}{r}

\newagentvar{virt}{\phi}
\newagentvar{cumvirt}{\Phi}
\newagentvar{qvirt}{\phi}
\newagentvar[\ironed]{ivirt}{\virt}
\newagentvar[\ironed]{icumvirt}{\cumvirt}

\newagentvar[\tagged{\text{SD}}]{sdvirt}{\virt}
\newagentvar[\composed{\tagged{\text{SD}}}{\ironed}]{sdivirt}{\virt}
\newagentvar[\tagged{\text{MD}}]{mdvirt}{\virt}
\newagentvar[\composed{\tagged{\text{MD}}}{\ironed}]{mdivirt}{\virt}

\newagentvar{dist}{F}
\newagentvar{dens}{f}
\newagentvar{hazard}{h}
\newagentvar{cumhazard}{H}

\newagentvar[\ironed]{iprice}{\price}  
\newagentvar[\ironed]{ival}{\val}  

\newagentvar{ints}{{\cal I}}

\newagentvar{wal}{w}
\newagentvar{Util}{U}

%
%

\newitemvar{pos}{j}
\newitemvar{weight}{w}
\newitemvar[\differentiated]{mweight}{\weight}
\newitemvar{udtype}{\type}
\newitemvar[\constrained]{udcrittype}{\type}
\newitemvar{udalloc}{\alloc}
\newitemvar{udprice}{\price}
\newitemvar[\constrained]{udcalloc}{\qalloc}
\newitemvar[\constrained]{udcumcalloc}{\cumalloc}

\newagentvar{mech}{{\cal M}}
\newagentvar[\skew{5}{\hat}]{cmech}{{\cal M}}
\newagentvar{alg}{{\cal A}}

\newcommand{\reals}{{\mathbb R}}

\newagentvar{trans}{\sigma}

\newagentvar{demandset}{S}

\newcommand{\opt}{^{\star}}
\newcommand{\primed}{^\dagger}

\newagentvar{distout}{w}
\newagentvar{idistout}{\bar{w}}

%
%









\newagentvar{gap}{\delta}


\DeclareMathOperator{\argmin}{argmin}

%
%
\newcommand{\given}{\,\mid\,}

\newcommand{\prob}[2][]{\text{\bf Pr}\ifthenelse{\not\equal{}{#1}}{_{#1}}{}\!\left[{\def\givenn{\middle|}#2}\right]}
\newcommand{\expect}[2][]{\text{\bf E}\ifthenelse{\not\equal{}{#1}}{_{#1}}{}\!\left[{\def\givenn{\middle|}#2}\right]}

\newcommand{\tparen}{\big}
\newcommand{\tprob}[2][]{\text{\bf Pr}\ifthenelse{\not\equal{}{#1}}{_{#1}}{}\tparen[{\def\given{\tparen|}#2}\tparen]}
\newcommand{\texpect}[2][]{\text{\bf E}\ifthenelse{\not\equal{}{#1}}{_{#1}}{}\tparen[{\def\given{\tparen|}#2}\tparen]}

\newcommand{\sprob}[2][]{\text{\bf Pr}\ifthenelse{\not\equal{}{#1}}{_{#1}}{}[#2]}
\newcommand{\sexpect}[2][]{\text{\bf E}\ifthenelse{\not\equal{}{#1}}{_{#1}}{}[#2]}

\newcommand{\dd}{{\mathrm d}}

\newcommand{\pricelevelfigure}{%
\begin{tikzpicture}

\draw[->] (0,0) -- (10,0) node[anchor=north] {$s$};
\draw[->] (0,0) -- (0,4) node[anchor=east] {$\wali$};

\draw	(3,.1) -- (3,-.1) node[anchor=north] {$\minsumi$};

\draw	(.1,3.5) -- (-.1,3.5) node[anchor=east] {$\wali(h)$};

\draw[dotted] (0,3.5) -- (10,3.5);

\draw	(.1,.2) -- (-.1,.2) node[anchor=east] {$\wali(0)$};

\draw[dotted] (0,.2) -- (10,.2);

\draw[dotted] (0,0) -- (4,4) node[anchor=south] {$s$};

\draw[dotted] (0,0) -- (8,4) node[anchor=south] {$s/2$};

\draw[line width=2pt,rounded corners,dashed,color=gray] (10,.4) -- (7,.45) -- (6,.5) -- (5.5,1) -- (3.5, 1.2) -- (3,1.5) -- (2.5,1.7) -- (2.7,2) 
   -- (3,2.5) -- (4,2.7) -- (3.8, 3.2) -- (4.2,3.5) node[anchor=south west,color=black] {$\mathcal{Q}^{\rezs}_i$};

\draw[thick,rounded corners,] (10,.4) node[anchor=south] {$\mathcal{P}^{\rezs}_i$} -- (7,.45) -- (6,.5) -- (5.5,1) -- (3.5, 1.2) -- (3,1.5);

\end{tikzpicture}}

\newcommand{\pricelevelcasesfigure}{%
\begin{tikzpicture}

\draw[->] (0,0) -- (10,0) node[anchor=north] {$s$};
\draw[->] (0,0) -- (0,4) node[anchor=east] {$\wali$};

\draw	(3,.1) -- (3,-.1) node[anchor=north] {$\minsumi$};

\draw	(8,.1) -- (8,-.1) node[anchor=north] {$\minsumi{{}^{{}'}}$};

\draw	(.1,3.5) -- (-.1,3.5) node[anchor=east] {$\wali(h)$};

\draw[dotted] (0,3.5) -- (10,3.5);

\draw	(.1,.2) -- (-.1,.2) node[anchor=east] {$\wali(0)$};

\draw[dotted] (0,.2) -- (10,.2);

\draw[dotted] (0,0) -- (4,4) node[anchor=south] {$s$};

\draw[dotted] (0,0) -- (8,4) node[anchor=south] {$s/2$};

\draw[line width=2pt,rounded corners,dashed,color=gray] (.2,.2) -- (10,.2);

\draw[thick,rounded corners] (.4,.2) -- (10,.2);

\draw[line width=2pt,rounded corners,dashed,color=gray] (10,.4) -- (7,.45) -- (6,.5) -- (5.5,1) -- (3.5, 1.2) -- (3,1.5) -- (2.5,1.7) -- (2.7,2) 
   -- (3,2.5) -- (4,2.7) -- (3.8, 3.2) -- (4.2,3.5) node[anchor=south west,color=black] {$\mathcal{Q}^{\rezs}_i$};

\draw [fill] (4.2,3.5) circle [radius=0.05];

\draw[thick,rounded corners,] (10,.4) node[anchor=south] {$\mathcal{P}^{\rezs}_i$} -- (7,.45) -- (6,.5) -- (5.5,1) -- (3.5, 1.2) -- (3,1.5);

\draw[line width=2pt,rounded corners,dashed,color=gray] (10,2) -- (9,2.2) -- (8.7,2.5) -- (8.5,3) -- (8.1,3.2) -- (8,3.5) node[anchor=south west,color=black] {$\mathcal{Q}^{\rezs'}_i$};

\draw [fill] (8,3.5) circle [radius=0.05];

\draw[thick,rounded corners] (10,2) node[anchor=south] {$\mathcal{P}^{\rezs'}_i$} -- (9,2.2) -- (8.7,2.5) -- (8.5,3) -- (8.1,3.2) -- (8,3.5);

\end{tikzpicture}}

\newcommand{\multipleinversesfigure}{%
\begin{tikzpicture}
  \begin{axis}[xlabel={$\alpha$},
      xmin=0,xmax=10,
      scaled ticks=false,
      y label style={at={(axis description cs:-.05,.5)},anchor=south},
      y tick label style={/pgf/number format/precision=3,/pgf/number format/fixed},
    ylabel={$\payi[1](\alpha)-\payi[5](\alpha)$}]
\pgfplotstableread{ca_data.txt}{\cadata}
\addplot[color=black,mark=none,thick] table from \cadata;
\addplot[color=black,mark=none,dotted] coordinates { (0,0) (10,0) };
\end{axis}
\end{tikzpicture}}




\begin{document}



\begin{titlepage}

\title{Inference from Auction Prices\footnote{Arxiv: https://arxiv.org/abs/1902.06908}}

\newcommand{\email}[1]{\href{mailto:#1}{#1}}

\author{Jason Hartline\thanks{Northwestern U., Evanston IL.  Work done in part while supported by NSF CCF 1618502.\newline Email: \email{hartline@northwestern.edu}} \and Aleck Johnsen\thanks{Northwestern U., Evanston IL.  Work done in part while supported by NSF CCF 1618502. \newline Email: \email{aleckjohnsen@u.northwestern.edu}} \and Denis Nekipelov\thanks{U. of Virginia, Charlottesville, VA.  Work done in part while supported by NSF CCF 1563708.\newline  Email: \email{denis@virginia.edu}}  \and Zihe Wang\thanks{Shanghai University of Finance and Economics (SUFE). Work done in part while supported by the Shanghai Sailing Program (Grant No. 18YF1407900), the National NSFC Grant 61806121, Innovation Program of Shanghai Municipal Education Commission, Program for Innovative Research Team of Shanghai University of Finance and Economics, and the Fundamental Research Funds for the Central Universities. Email: \email{wang.zihe@mail.shufe.edu.cn}}}



\maketitle

\begin{abstract}
Econometric inference allows an analyst to back out the values of agents in a mechanism from the rules of the mechanism and bids of the agents.  This paper gives an algorithm to solve the problem of inferring the values of agents in a dominant-strategy mechanism from the social choice function implemented by the mechanism and the per-unit prices paid by the agents (the agent bids are not observed).  For single-dimensional agents, this inference problem is a multi-dimensional inversion of the payment identity and is feasible only if the payment identity is uniquely invertible.  The inversion is unique for single-unit proportional weights social choice functions (common, for example, in bandwidth allocation); and its inverse can be found efficiently.  This inversion is not unique for social choice functions that exhibit complementarities.  Of independent interest, we extend a result of \citet{ros-65}, that the Nash equilbria of ``concave games'' are unique and pure, to an alternative notion of concavity based on \citet{GN-65}.
\end{abstract}

\end{titlepage}

\section{Introduction}
\label{s:intro}

Traditional econometric inference allows an analyst to determine the
values of agents from their equilibrium actions and the rules of a
mechanism \citep{GPV-00,HT-03}.  This paper studies an inference
problem when only the profile of the agents' per-unit prices is
available to the analyst.  Such an inference may be applicable when
bids are kept private but prices are published; moreover, it is of
interest even for incentive compatible mechanisms (where agents
truthfully report their preferences).  As a motivating example, with
the per-unit prices from the incentive compatible mechanism for
allocating a divisible item proportionally to agent values
\citep[cf.][]{JT-04}, we prove that agents' values are uniquely
determined and can be computed efficiently.

Econometric inference is a fundamental topic in a data-driven approach
to mechanism design and a number of recent papers have been developing
its algorithmic foundations.  The following are prominent examples.
\citet{CHN-14,CHN-16} show that the revenue and welfare of a counter
factual auction can be estimated directly from Bayes-Nash equilibrium
bids in an incumbent auction.  \citet{NST-15} develop methods for
identifying the rationalizable set of agent values and regret
parameters in repeated auctions with learning agents. \citet{HNS-17}
show that the quantities that govern price-of-anarchy analyses can be
determined directly from bid data and, thus, empirical
price-of-anarchy bounds can be established that improve on the
theoretical worst case.  

There are two important questions in algorithmic econometrics.  First,
when are the values uniquely identified?  Second, can the values
be efficiently computed when the values are identifiable?  The first
question is studied in depth by the econometrics literature (for
inference from actions); the second question is an opportunity for
algorithms design and analysis.

We consider inference in single-dimensional environments where a
stochastic social choice function maps profiles of agent values to
profiles of allocation probabilities.  The characterization of
incentive compatibility \citep{mye-81} requires the allocation
probability of an agent be monotonically non-decreasing in that
agent's value and that an agent's expected payments satisfy a {\em
  payment identity}.  Per-unit prices -- the expected payments
conditioned on winning -- are easily determined from the expected
payments in the payment identity by normalizing by the allocation
probability.\footnote{Our methods are written assuming that per-unit
  prices are observed rather than expected payments.  These prices are
  more natural for mechanisms usually considered in algorithmic
  mechanism design as they arise in mechanisms where losers pay
  nothing, i.e., ex post individually rational mechanism.  If instead
  the realized expected payments and realized allocation
  probabilities are observed, then these per-unit prices can be easily
  calculated and our methods applied to the result.}  Consequentially,
given any social choice function and valuation profile, the allocation
probabilities and prices of an incentive compatible mechanism that
implements the social choice function are uniquely and easily
determined.  Our inference problem is the opposite.  Given the profile
of the agents' prices, determine the valuation profile that leads to
these prices.  The social choice function and, thus, the function
mapping values to prices is known.  The resulting inversion problem is
multi-dimensional and this multi-dimensionality leads to a possibility
of non-uniqueness (and consequentially, non-identifiability) and
computational challenges.

The first goal of this paper is to understand what social choice
functions admit inference from prices and which do not.
Fundamentally, social choice functions with induced allocation rules
that are not strictly increasing do not admit inference.  For example, the only inference possible from the outcome of a second-price auction is
that the winner has value above the winner's price and the losers have
value below the winner's price.  On the other hand, a ``soft max''
social choice function like proportional values, where an agent
receives a fraction of the item proportional to her value, is strictly
continuous and, as we will show, the valuation profile can be uniquely
inferred from the winner-pays prices.  We will show sufficiency for social choice
functions to admit inference from prices as ones where the
Jacobian of the payment identity has all minors positive on (almost) all inputs and, as a class, proportional weights social choice functions (with
general strictly monotonic weight functions) satisfy this property.
In contrast we show that this property does not generally hold for
social choice functions that exhibit complementarities.

These identification and non-identification results are complemented
by an algorithm for efficiently computing the valuation profile from
the prices that corresponds to any proportional weights social choice
function for single-item environments.

Our focus is on proportional weights allocation rules for
(probabilistically) sharing a unit resource.  Such mechanisms have
been previously considered in the literature on bandwidth allocation
\citep[e.g.,][]{JT-04}. Another point of contact with the literature
is the special case of exponential weights.  The mechanism that
implements the exponential weights allocation rule is known as the
exponential mechanism \citep{HK-12}.  The exponential mechanism is
often considered because its realized allocation has good privacy
properties.  \citet{HK-12} recommend additionally adding Laplacian
noise to the payments of the exponential mechanism so that its
realized outcome (allocation and payments) is differentially private.
Our main result shows that, in fact, without such noise added to the
payments the exponential mechanism is not private.

\paragraph{Organization.} 
The rest of this paper is organized as follows.
Section~\ref{s:prelim} gives notation for discussing social choice
functions, mechanisms, and agents; reviews the characterization of
incentive-compatible single-dimensional mechanisms; and reviews
proportional weights allocations.  \Cref{s:nasheq}, then, gives an
algorithmic framework for robustly identifying values from prices.  It
shows that values are identified from payments corresponding to social
choice functions given by proportional weights in single-item and
multi-unit environments.  \Cref{s:hardness} shows that values are not
identifiable from prices for proportional weights allocations that
correspond to environments with
complementarities.  \Cref{s:computation} gives an efficient algorithm
for inferring values from prices for proportional weights social choice
functions in single-item  environments.

\section{Preliminaries}
\label{s:prelim}

This paper considers general environments for single-dimensional
linear agents.  Each agent $i$ has value
$\vali\in\left[0,\maxval\right]$.  For allocation probability $\alloci$
and expected payment $\pricei$, the agent's utility is $\vali\,\alloci
- \pricei$.  A profile of $n$ agent values is denoted $\vals =
(\vali[1],\ldots,\vali[n])$; the profile with agent $i$'s value
replaced with $z$ is $(z,\valsmi) =
(\vali[1],\ldots,\vali[i-1],z,\vali[i+1],\ldots,\vali[n])$.

A {\em stochastic social choice function} $\allocs$ maps a profile of
values $\vals$ to a profile of allocation probabilities.  A {\em
  dominant strategy incentive compatible (DSIC)} mechanism
$(\allocs,\prices)$ maps a profile of values $\vals$ to profiles of
allocations $\allocs(\vals)$ and payments $\prices(\vals)$ so that:
for all agents $i$, values $\vali$, and other agent values $\valsmi$,
it is optimal for agent $i$ to bid her value $\vali$.  The following
theorem of \citet{mye-81} characterizes social choice functions that
can be implemented by DSIC mechanisms.

\begin{theorem}[\citealp{mye-81}]
\label{thm:myerson}
Allocation and payment rules $(\allocs,\prices)$ are induced by a
dominant strategy incentive compatible mechanism if and only if for
each agent $i$,
\begin{enumerate}
\item (monotonicity) 
\label{thmpart:monotone}
allocation rule $\alloci(\vali,\valsmi)$ is monotone non-decreasing in
$\vali$, and
\item 
\label{thmpart:payment}
(payment identity) payment rule $\pricei(\vals)$ satisfies
\begin{align}
\label{eq:payment-identity}
\pricei(\vals) &= \vali\, \alloci(\vals) - \int_0^{\vali}
\alloci(z,\valsmi)\, \dd z + \pricei(0,\valsmi),
\end{align}
\end{enumerate}
where the payment of an agent with value zero is often
zero, i.e., $\pricei(0,\valsmi) = 0$.
\end{theorem}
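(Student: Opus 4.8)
The plan is to prove the two directions of the equivalence separately, using the pair of dominant-strategy incentive constraints as the single workhorse throughout. Fix agent $i$ and the other agents' values $\valsmi$, and abbreviate $\alloci(z) := \alloci(z,\valsmi)$ and $\pricei(z) := \pricei(z,\valsmi)$.

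\textbf{Forward direction (DSIC implies monotonicity and the payment identity).} Dominant-strategy incentive compatibility says an agent with true value $\vali$ weakly prefers reporting $\vali$ to reporting any $z$, and symmetrically an agent with true value $z$ weakly prefers $z$ to $\vali$:
\begin{align*}
\vali\, \alloci(\vali) - \pricei(\vali) &\geq \vali\, \alloci(z) - \pricei(z),\\
z\, \alloci(z) - \pricei(z) &\geq z\, \alloci(\vali) - \pricei(\vali).
\end{align*}
Adding these and rearranging gives $(\vali - z)(\alloci(\vali) - \alloci(z)) \geq 0$, which is exactly monotonicity. For the payment identity I would rewrite the same two constraints as a sandwich on the payment difference,
\[
z\,\big(\alloci(\vali) - \alloci(z)\big) \;\leq\; \pricei(\vali) - \pricei(z) \;\leq\; \vali\,\big(\alloci(\vali) - \alloci(z)\big),
\]
then partition $[0,\vali]$, telescope the payment differences across the partition, and squeeze both Riemann--Stieltjes sums to their common limit $\int_0^{\vali} s\, \dd\alloci(s)$. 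Integration by parts converts this into $\vali\,\alloci(\vali) - \int_0^{\vali} \alloci(s)\,\dd s$, which is exactly \eqref{eq:payment-identity} after adding back the constant $\pricei(0,\valsmi)$.

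\textbf{Reverse direction (monotonicity and the payment identity imply DSIC).} Here I would substitute the payment identity for both $\pricei(\vali)$ and $\pricei(z)$ into the gain from truthful reporting over a deviation to $z$. The $\pricei(0,\valsmi)$ terms and the $\vali\,\alloci(\vali)$ terms cancel, and the utility gap collapses to
\[
\big[\vali\, \alloci(\vali) - \pricei(\vali)\big] - \big[\vali\, \alloci(z) - \pricei(z)\big] \;=\; \int_z^{\vali} \alloci(s)\,\dd s - (\vali - z)\,\alloci(z).
\]
Monotonicity signs this in both regimes: when $z < \vali$ the integrand is at least $\alloci(z)$ on all of $[z,\vali]$, and when $z > \vali$ it is at most $\alloci(z)$ on $[\vali,z]$; either way the gap is nonnegative, so truthful reporting is optimal and the mechanism is DSIC.

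\textbf{Main obstacle.} The delicate step is the payment identity in the forward direction. The induced allocation rule $\alloci(\cdot,\valsmi)$ is only guaranteed to be monotone, not differentiable or even continuous, so a naive appeal to the envelope theorem (differentiating $U_i(\vali) = \vali\,\alloci(\vali) - \pricei(\vali)$) is unavailable. The sandwich-and-partition argument circumvents this, but it requires arguing that both Riemann--Stieltjes sums converge to the same limit as the mesh shrinks; this uses that a monotone function has at most countably many discontinuities and is Riemann--Stieltjes integrable against the identity. The remaining steps are routine algebra.
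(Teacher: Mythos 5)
Your proof is correct and is the standard argument for Myerson's characterization: summing the two incentive constraints for monotonicity, the sandwich-and-telescope Riemann--Stieltjes argument for the payment identity, and the direct substitution for sufficiency are all sound, and your care about the allocation rule being merely monotone (not differentiable) is exactly the right technical point. Note that the paper itself states this theorem as a citation to \citet{mye-81} and provides no proof of its own, so there is nothing to compare against beyond the classical argument, which is what you have reproduced.
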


Most DSIC mechanisms are implemented to satisfy an ex post individual
rationality constraint; specifically, an agent pays nothing when not allocated.  The payment when allocated, i.e., the
{\em per-unit price}, is thus the expected payment normalized by the probability of winning.  Throughout this work, we assume $\pricei(0,\valsmi) = 0$.  Denote the {\em price function} by $\pays : \reals^n_+ \to
\reals^n_+$, as
\begin{align}
\nonumber
\payi(\vals) &= \pricei(\vals)/\alloci(\vals)\\
\label{eq:winner-pays-bid-strats}
             &= \vali - \frac{\int_0^{\vali} \alloci(z,\valsmi)\,\dd z}{\alloci(\vals)} 
\end{align}
for all agents $i$. 

The main objective of this paper is to infer the agents' values from
observations of the per-unit prices of the mechanism.  A price profile
$\rezs$ is observed, and it is desired to infer the
valuation profile $\vals$ that generated this price profile by $\rezs = \pays(\vals)$.  The key
question of this paper is to identify sufficient conditions on the
social choice function $\allocs$ such that the price
function $\pays$ is invertible.

An important special case is the case where there is $n=1$ agent and
the price function $\pay(\cdot)$ is single-dimensional.  When the
social choice function $\alloc(\cdot)$ is strictly increasing, the
price function $\pay(\cdot)$ is strictly increasing (apply \Cref{lem:monoprice} with only one agent), and is uniquely invertible.  Thus, the
agent's value can be identified from her observed price $\rez$, e.g.,
by binary search.

\begin{lemma}
\label{lem:monoprice}
Assume $\partial\alloci/\partial\vali(\vals)>0$ everywhere.  Then $\partial\payi/\partial\vali(\vals)>0$ for all values except 0.
\end{lemma}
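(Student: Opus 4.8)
The plan is to differentiate the closed form of the price function in \eqref{eq:winner-pays-bid-strats} directly and read off the sign. Since we are differentiating only in the coordinate $\vali$, I would fix the other agents' values $\valsmi$ throughout and treat every quantity as a function of the single variable $\vali$. To keep the algebra clean I would abbreviate the cumulative allocation as $X(\vali) = \int_0^{\vali} \alloci(z,\valsmi)\,\dd z$, so that \eqref{eq:winner-pays-bid-strats} reads $\payi(\vals) = \vali - X(\vali)/\alloci(\vals)$.

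First I would apply the quotient rule to the term $X(\vali)/\alloci(\vals)$ and use the fundamental theorem of calculus, which gives $\partial X/\partial \vali = \alloci(\vals)$. The key observation is that the resulting numerator contains the factor $\alloci(\vals)\cdot\alloci(\vals)$, which cancels against the derivative of the leading $\vali$; the whole expression collapses to a clean product,
\[
\frac{\partial\payi}{\partial\vali}(\vals) \;=\; \frac{X(\vali)\,\big(\partial\alloci/\partial\vali\big)(\vals)}{\alloci(\vals)^2}.
\]
This cancellation is really the crux of the argument: it turns an a priori messy quotient into a ratio of manifestly signed quantities, after which the claim is immediate.

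The remaining step is the sign analysis for $\vali > 0$. The hypothesis supplies $\partial\alloci/\partial\vali(\vals) > 0$ directly. Because $\alloci$ is an allocation probability it is nonnegative, and because it is strictly increasing in $\vali$ we have $\alloci(\vali,\valsmi) > \alloci(0,\valsmi) \geq 0$ for $\vali > 0$; hence $\alloci(\vals)^2 > 0$, and the integrand defining $X$ is strictly positive on $(0,\vali]$, so $X(\vali) > 0$ as well. The product of three positive quantities is positive, which proves the lemma for $\vali > 0$. This computation also explains the exclusion of value $0$ in the statement, since $X(0) = 0$ forces the derivative to vanish there. I expect no genuine obstacle: the only point requiring care is that the positivity of $\alloci$ away from $0$ comes from monotonicity together with nonnegativity of allocation probabilities, rather than from any assumption placed on $\alloci(0,\valsmi)$ itself.
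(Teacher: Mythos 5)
Your proposal is correct and follows essentially the same route as the paper: the paper's proof also differentiates \eqref{eq:winner-pays-bid-strats} to obtain $\partial\payi/\partial\vali = \alloci'(\vali,\valsmi)\int_0^{\vali}\alloci(z,\valsmi)\,\dd z \,/\, \alloci(\vali,\valsmi)^2$ and reads off the sign, with the derivative vanishing only at $\vali=0$ where the integral is empty. Your write-up just makes the cancellation via the fundamental theorem of calculus and the positivity of the integral slightly more explicit.
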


\begin{proof}
The partial of the price function $\payi'(\vali,\valsmi) =
  \frac{\alloci'(\vali,\valsmi)\int_0^{\vali}
    \alloci(z,\valsmi)dz}{\left(\alloci(\vali,\valsmi)\right)^2}$ is positive if
  $\alloci'(\vali,\valsmi)$ is positive, unless the numerator is 0 because $\vali=0$ and the integral endpoints are the same.
\end{proof}

Our goal is to understand families of (multi-agent) social choice
functions $\allocs$ that allow values to be inferred from prices.
Clearly, as in the single-agent case, if the allocation rule is not
strictly increasing in each agent's value, then the values of the
agents cannot be inferred.  We assume that the social choice function
$\allocs$ is such that it has strictly-increasing allocation functions
$\alloci$ for any given $\valsmi$, for all $\vali>0$.

Mechanisms
in the literature for welfare and revenue maximization are based on
social choice functions that map agents' values to weights and
allocate to maximize the sum of the weights of the agents allocated.
In order to satisfy the required strict monotonicity property, our
focus is on smoothed versions of these social choice functions under
feasibility constraints that correspond to single-item auctions (or
single-minded combinatorial auctions admitting only one winner).

In single-item environments a natural ``soft max'' is given by
proportional weights allocations.  A weight function is given for each
agent $i$ as a strictly monotone and continuously differentiable
function $\wali : \reals_+ \to \reals_+$ and the proportional weights
social choice function maps each agent's value to a weight and then
allocates to agents with probabilities proportional to
weights.\footnote{For simplicity, we
  assume that all weights functions are everywhere strictly positive
  for all agents, even at $\vali = 0$.}  A canonical example of
proportional weights is exponential weights: $\wali(\vali) =
e^{\vali}$ for each agent $i$.

Given the assumptions on functions $\wals$, they are invertible. Where
appropriate we will overload $\vali$ to allow it to be the functional
inverse of $\wali$ mapping a weight back to its value.  We also
overload the notations $\allocs, \pays$ to take weights $\wals$ as an
input, with $\allocs(\wals)\coloneqq \allocs(\vals(\wals))$ and
$\pays(\wals)\coloneqq\pays(\vals(\wals))$.

\section{Identification and Non-identification}
\label{s:nasheq}

This section considers sufficient conditions under which values can be inferred from the observed prices $\rezs$ of a DSIC mechanism $(\allocs,\prices)$.  
The critical challenge to identification arises from the observation that values can only possibly be identified from prices if the price function $\pays$ is invertible.  
We solve this challenge both in theory here in \Cref{s:nasheq}, and algorithmically in \Cref{s:computation}.
  
Our theoretical and algorithmic results are simpler to prove as inversions from prices to intermediate weights, and then from weights to values.  Describing the inversion via weights is without loss because weights functions $\wali(\cdot)$ are continuously differentiable, positive, strictly increasing functions mapping an agent's value to weight.  The weights can be inverted as $\vali(\wali) \vcentcolon= \wali^{-1}(\vali)$.  
 

Our approach is to write the problem of inverting
the price function $\pays$ at prices $\rezs$ as a proxy game between
proxy players where the actions are weights.  The proxy game is a tool for computing the inverse: {\em with proxy actions corresponding to weights, its unique Nash will be the desired inversion point}.  Each proxy player represents an agent of the mechanism $(\allocs,\prices)$.  A proxy player $i$ is responsible for identifying its agent's weight $\wali$, in the proxy game parameterized by $\rezs$.  

Towards designing the proxy game to have a specific (and unique) Nash equilibrium, we design the proxy game's payoff function $\cumimbali$ of a proxy player $i$ for (action) $\ewali$ -- given the profile of weight-actions from the other proxy players $\ewalsmi$ -- to be 
optimized where $\payi(\ewals)$ on the proxy action profile is closest to the observed price $\rezi$.\footnote{The full importance of the proxy game construction is realized when ``erroneous" prices are used as inputs, as the proxy game is still defined with action space corresponding to weights space, is still continuous, and will still have a unique pure Nash equilibrium which can be output.}  The first goal here is to give the technical description a price-function inversion algorithm using a proxy game, and reduce the question of its correctness to the uniqueness of a pure Nash equilibrium in the proxy game (\Cref{prop:nash-implementation}).

Recalling
equation~\eqref{eq:winner-pays-bid-strats}, we transform the price
function $\pays$ to weights-space using calculus-change-of-variables
as
	\begin{equation}
    \label{eqn:bidfn:w}
	\payi(\wals)=\vali(\wali)-\frac{\int^{\wali}_{\wali(0)}\alloci(z,\walsmi)\vali'(z)dz}{
    \alloci(\wals)}.
	\end{equation}

\noindent For fixed
observed prices $\rezs$, define the {\em price-imbalance function} $\imbali(\cdot)$ and the
{\em cumulative price-imbalance} $\cumimbali(\cdot)$ respectively as follows, and we set $\cumimbals$ as the proxy game utility function:
\ifsoda
\begin{align}
\label{eqn:overpay}
&\imbali(\ewali,\ewalsmi) = \rezi - \payi(\ewali,\ewalsmi)\\
\notag
&\quad= \rezi - \vali(\ewali) + \frac{\int^{\ewali}_{\wali(0)}
  \alloci(z,\ewalsmi)\vali'(z)\,\dd z}{\alloci(\ewals)},\\
\label{eqn:computil}
&\cumimbali(\ewali,\ewalsmi) = \int_{\wali(0)}^{\ewali} \imbali(z,\ewalsmi)\,\dd z.
\end{align}
\else
\begin{align}
\label{eqn:overpay}
\imbali(\ewali,\ewalsmi) &= \rezi - \payi(\ewali,\ewalsmi) = \rezi - \vali(\ewali) + \frac{\int^{\ewali}_{\wali(0)}
  \alloci(z,\ewalsmi)\vali'(z)\,\dd z}{\alloci(\ewals)},\\
\label{eqn:computil}
\cumimbali(\ewali,\ewalsmi) &= \int_{\wali(0)}^{\ewali} \imbali(z,\ewalsmi)\,\dd z.
\end{align}
\fi

\noindent The proxy game is defined with weights $\ewals$ as proxy actions, and with utilities for the proxy agents given by the cumulative price-imbalance functions $\cumimbals$.  Each function $\cumimbali$ is strictly concave in dimension $i$, except at the lower end point of its domain where it is weakly concave (see \Cref{lem:concutil} in \Cref{a:pderi}).  From concavity of $\cumimbali$ in~\eqref{eqn:computil}, a ``zero" of $\imbali$ in~\eqref{eqn:overpay} is optimal.  As desired, when other players select proxy weights $\ewalsmi$, proxy player $i$ would select proxy weight $\ewali$ so that agent $i$'s price according to $\pays$ on $\ewals$ is closest to agent $i$'s observed payment $\rezi$ (and $\rezi=\payi(\ewals)$ if possible).  Based on this proxy game, we define the following inference algorithm.

\begin{definition} 
\label{d:nash-mech}
\label{d:proxy-game}
The {\em price-inversion algorithm} $\payalg$ on price space $[0,\infty)^n$ for social choice function
$\allocs$ on value space $[0,h]^n$ is
\begin{enumerate}
\item Observe price profile $\rezs$.
\item \label{step:nash} Select a Nash equilibrium $\ewals$ in the {\em proxy game} 
 defined in weight space 
 with utility functions given by the cumulative price-imbalance $\cumimbals$ for $\rezs$.
\item Return inferred values based on inferred weights $\ewals$ as $(\vali[1](\ewali[1]),\ldots, \vali[n](\ewali[n]))$.
\end{enumerate}
\end{definition}

A key property for the proper working of the price-inversion algorithm
is whether the proxy game admits a unique pure Nash equilibrium.  For
example, if there are multiple distinct valuation profiles that map to
the same prices via $\pays$ (values of agents in the original auction), then each of these valuation profiles will have a corresponding equilibrium in the proxy game (in proxy game action-weights space).  \Cref{prop:nash-implementation} formalizes the correctness of the price-inversion algorithm, subject to the proxy game having unique pure Nash equilibrium.  

\begin{proposition}
\label{prop:nash-implementation}
Any weights profile $\wals \in [\wali(0),\wali(\maxval)]^n$ such that observed price profile $\rezs$
satisfies $\rezs = \pays(\wals)$ is a Nash equilibrium of the proxy
game on the social choice function $\allocs$ and prices $\rezs$; if
this Nash equilibrium $\wals$ of the proxy game is unique then
the inverse $\pays^{-1}(\rezs)$ is unique and given by the price
inversion algorithm $\payalg$.
\end{proposition}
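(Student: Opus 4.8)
The plan is to establish the two assertions in turn, the crux being that each proxy player's utility has a derivative (in its own action) equal to the price-imbalance, so that a best response is exactly a weight at which that player's induced price matches the observed price. First I would prove the forward implication---every weights profile that inverts the prices is a Nash equilibrium of the proxy game---and then deduce the uniqueness/correctness claim by a short set-theoretic argument.

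For the forward implication, fix an inversion point $\wals$ with $\rezs = \pays(\wals)$ and verify the best-response condition for each proxy player $i$ holding $\walsmi$ fixed. Applying the fundamental theorem of calculus to~\eqref{eqn:computil} gives $\partialx[\cumimbali]{\ewali}(\ewali,\walsmi) = \imbali(\ewali,\walsmi) = \rezi - \payi(\ewali,\walsmi)$ by~\eqref{eqn:overpay}. By \Cref{lem:concutil} the map $\cumimbali(\cdot,\walsmi)$ is concave (strictly so away from the lower endpoint) on the action interval $[\wali(0),\wali(\maxval)]$, with the strictness coming from the fact that $\imbali$ is strictly decreasing in $\ewali$---equivalently $\payi$ is strictly increasing in $\ewali$, as in \Cref{lem:monoprice}. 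Hence the unique maximizer of $\cumimbali(\cdot,\walsmi)$ is characterized by the first-order condition $\imbali=0$, i.e.\ $\payi(\ewali,\walsmi)=\rezi$. Since $\rezs = \pays(\wals)$ supplies $\rezi = \payi(\wali,\walsmi)$, the action $\wali$ satisfies this condition and is therefore the unique best response to $\walsmi$. As $i$ was arbitrary, $\wals$ is a pure Nash equilibrium.

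For the uniqueness/correctness statement, suppose the Nash equilibrium is unique. The forward implication shows that \emph{every} inversion point is a Nash equilibrium, so there is at most one inversion point; combined with the hypothesized existence of $\wals$, this makes $\pays^{-1}(\rezs)$ a single weights profile, and hence a single value profile after applying the (bijective, strictly monotone) inverse weight maps $\vali(\cdot)$. The price-inversion algorithm $\payalg$ selects a Nash equilibrium of the proxy game in its second step (\Cref{d:proxy-game}); uniqueness makes this selection well-defined and equal to $\wals$, and the final step returns $(\vali[1](\ewali[1]),\ldots,\vali[n](\ewali[n]))$, which is exactly $\pays^{-1}(\rezs)$ in value space.

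I expect the only delicate point to be the appeal to concavity at the boundary of the action space: to conclude that the zero of $\imbali$ is a \emph{global} (not merely stationary) maximizer one must use the strict monotonicity of $\payi$ in $\ewali$, and one must check that the maximizer $\wali$ lies in the admissible interval $[\wali(0),\wali(\maxval)]$ rather than beyond an endpoint. The latter holds precisely because $\rezs = \pays(\wals)$ places the zero of $\imbali$ at an admissible weight, and even when that zero coincides with an endpoint the sign pattern of the strictly decreasing $\imbali$ (positive below, negative above) still makes $\wali$ the maximizer. Since both ingredients are supplied by \Cref{lem:monoprice} and \Cref{lem:concutil}, the remainder of the argument is routine.
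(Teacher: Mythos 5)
Your proof is correct and follows essentially the same route as the paper's: verify that the first-order condition $\imbali(\wali,\walsmi)=\rezi-\payi(\wals)=0$ holds at any inversion point, invoke the strict concavity of $\cumimbali$ in its own coordinate (Lemma~\ref{lem:concutil}, with the same caveat at the lower boundary) to upgrade the stationarity condition to a global best response, and then derive the uniqueness claim from the forward implication. Your writeup is more explicit about the boundary case and the final set-theoretic step, but there is no substantive difference in approach.
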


\begin{proof} The second part follows from the first part.  For the first part, assume $\rezs = \pays(\wals)$ for some $\wals$ in weights space domain.  Action profile $\wals$ in the proxy game is a Nash equilibrium as follows.  Each proxy agent's first-order condition is satisfied.  Specifically, with utilities given by the cumulative imbalances $\cumimbals$, the first-order condition is given by 
$\imbali(\wali,\walsmi) = \rezi - \payi(\wali,\walsmi)$ and is zero by
  the choice of $\wals$.  Further, checking first-order conditions is
  sufficient because $\cumimbals$ is strictly concave by \Cref{lem:concutil}, i.e.,
  $\frac{\partial\imbali(\wali,\walsmi)}{\partial\wali} = -\payi'(\wals) < 0$ (except at the lower bound where the partial is 0, but this can not affect player $i$'s strict preference over actions).
\end{proof}

Motivated by \Cref{prop:nash-implementation}, the remainder of this
section identifies proportional weights as a large natural class of
social choice functions for which the proxy game has a unique pure
Nash equilibrium for all price profiles, which we will state in \Cref{thm:themainresult}.  The computational question
of finding the Nash equilibrium of the proxy game is deferred to
\Cref{s:computation}.

We outline the rest of the section.  As mentioned previously, a necessary
condition for the uniqueness of pure Nash in the proxy game is that
the price function $\pays$ is one-to-one.  In \Cref{s:psdunique}, we
show that $\pays$ being one-to-one 
is implied by \textit{a slightly weaker}
condition than the following: for all inputs the Jacobian of $\pays$ -- denoted $J_{\pays}$ -- has all positive principal minors (i.e. it is a $P$-matrix, see \Cref{def:pmat} below).  In \Cref{s:oneweightspsd} we show that all proportional weights social choice
functions (for single-unit environments) induce price functions that
satisfy this condition. 
In contrast, \Cref{s:hardness}
describes a natural variant of proportional weights social choice
functions for environments which resemble single-minded combinatorial
auctions, and shows that the price functions for these social choice
functions are not generally invertible, and therefore the proxy game
does not have a unique pure Nash equilibrium in this extended setting.

\subsection{Sufficiency of ``Interior P-Matrix Functions"}
\label{s:psdunique}

This section shows that a sufficient condition for the uniqueness of a
pure Nash equilibrium in the proxy game defined in algorithm $\mathcal{A}$ (\Cref{d:nash-mech}) -- necessary for its correctness -- is that the price function $\pays$ (for the social choice rule $\allocs$) is an ``interior $P$-matrix function," a property on its Jacobian $J_{\pays}$ (to be defined shortly in \Cref{def:interiorpd}).  An intuitive outline of the technique is:
\begin{itemize}
    \item existence is by algorithm design, as the vector of true weights exists as a pure Nash point, in particular one with all first-order conditions equal to 0;
    \item uniqueness results because the mapping between proxy game action vectors and proxy agent utility gradients is a bijection with ``high-dimensional monotonicity," for which interior $P$-matrix functions are sufficient; so the proxy game has ``high-dimensional concavity."
\end{itemize}

\noindent We will address existence in 
\Cref{thm:gameunique} and its proof.  First we set up the structure towards uniqueness (also \Cref{thm:gameunique}).  The next definition for $P$-matrix (``positive matrix") comes from \citet{GN-65}, and so does \Cref{thm:psdunique} (below) connecting $P$-matrices to bijection and invertibility.\footnote{Further supporting results given in \Cref{a:gnext} are also from \cite{GN-65}.}   
We give their definition and extend it to include ``weak" and ``negative" cases, and list facts about $P$-matrices to be used in this and subsequent sections:
\begin{definition}
\label{def:pmat} A $K\times K$ matrix is a {\em $P$-matrix} if all of its principal minors are positive (i.e., have strictly positive determinant).  Such a matrix is a {\em $P_0$-matrix} if all of its principal minors are non-negative.  Further, the terms {\em $N$-matrix} and {\em $N_0$-matrix} are used to describe matrices that when negated (all entries multiplied by $-1$) are, respectively, a $P$-matrix and a $P_0$-matrix.
\end{definition}

\begin{fact}
\label{f:pmatfact}
The following are true about $P$-matrices:
\begin{enumerate}
    \item a $P$-matrix is {\em downward-closed}, i.e., each of its principal minors is a $P$-matrix too;
    \item the class of $P$-matrices contains the class of all positive definite matrices as a special case (where for our purposes, the definition of a positive definite matrix $M$ is $\vecbz^{\top} M \vecbz > 0~\forall~\vecbz\neq \mathbf{0}$ with $M$ not necessarily symmetric);
    \item the product of a strictly positive, diagonal matrix and a $P$-matrix is also a $P$-matrix.
\end{enumerate}
\end{fact}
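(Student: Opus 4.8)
The three claims are essentially independent, so the plan is to dispatch each in turn, spending almost all the effort on the middle one. The downward-closure claim follows directly from \Cref{def:pmat} with no computation: for an index set $S \subseteq \{1,\dots,K\}$, the principal minors of the principal submatrix $M_S$ are indexed by subsets $T \subseteq S$, and each is literally a principal minor of $M$ on the index set $T$; since $M$ is a $P$-matrix these are all positive, so $M_S$ is a $P$-matrix. The diagonal-scaling claim follows from multiplicativity of the determinant: writing $D = \mathrm{diag}(d_1,\dots,d_K)$ with each $d_i > 0$, the principal submatrix of $DM$ on any $S$ factors as $D_S\,M_S$ (the positive diagonal submatrix of $D$ times the submatrix of $M$), so its determinant is $\big(\prod_{i\in S} d_i\big)\det M_S > 0$; as this holds for every $S$, $DM$ is a $P$-matrix.

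The real content is the claim that a not-necessarily-symmetric positive-definite $M$ is a $P$-matrix, and here the main obstacle is that the hypothesis $\vecbz^{\top} M \vecbz > 0$ constrains only the symmetric part $(M + M^{\top})/2$, whose determinant has no relation to $\det M$; one therefore cannot simply invoke the symmetric spectral theorem. My plan is to reduce to a statement about the determinant of a single matrix: positive-definiteness is inherited by every principal submatrix, since restricting the quadratic form to vectors supported on $S$ gives $\vecbz_S^{\top} M_S \vecbz_S > 0$ for all nonzero $\vecbz_S$. It then suffices to show that any positive-definite matrix has strictly positive determinant.

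For that final step I would pass to the (possibly complex) spectrum and show every eigenvalue $\lambda$ has $\mathrm{Re}(\lambda) > 0$. Taking a unit eigenvector and writing it as $\mathbf{x} + i\mathbf{y}$ with real $\mathbf{x},\mathbf{y}$, the conjugate-transpose quadratic form satisfies $\mathrm{Re}(\lambda) = \mathbf{x}^{\top} M \mathbf{x} + \mathbf{y}^{\top} M \mathbf{y}$, which is strictly positive because at least one of $\mathbf{x},\mathbf{y}$ is nonzero and $M$ is positive-definite. Since $M$ is real, its non-real eigenvalues occur in conjugate pairs with product $|\lambda|^2 > 0$, while its real eigenvalues are positive, so $\det M$ --- the product of all eigenvalues --- is positive. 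Applying this to each principal submatrix (each positive-definite by the reduction above) makes every principal minor of $M$ positive, so $M$ is a $P$-matrix.
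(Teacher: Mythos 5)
Your proof is correct. The paper itself offers no proof of this Fact --- it is stated as a collection of standard properties of $P$-matrices (in the spirit of \citet{GN-65}) and simply invoked later --- so there is no in-paper argument to compare against. Your treatment of parts (1) and (3) is the routine one, and your handling of part (2) correctly isolates the only nontrivial issue (that $\vecbz^{\top} M \vecbz > 0$ constrains only the symmetric part of $M$) and resolves it by the standard route: principal submatrices inherit positive-definiteness, every eigenvalue of a positive-definite real matrix has positive real part via $\mathrm{Re}(\lambda)\,\|\mathbf{v}\|^2 = \mathbf{x}^{\top} M \mathbf{x} + \mathbf{y}^{\top} M \mathbf{y}$, and the determinant is then positive since non-real eigenvalues pair into conjugates with product $|\lambda|^2$.
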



\begin{theorem}[\citealp{GN-65}]
\label{thm:psdunique}
A continuously differentiable function $f : \Omega \to
\reals^n$ with compact and convex product domain $\Omega\subset\reals^n$ is one-to-one if its Jacobian is everywhere a $P$-matrix.
\end{theorem}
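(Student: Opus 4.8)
The plan is to prove injectivity by induction on the dimension $n$, using as the main algebraic lever the classical characterization that a $P$-matrix never reverses the sign of a nonzero vector, together with a Schur-complement reduction that lowers $n$ by one. Since $\Omega$ is a compact convex product domain, write $\Omega = \prod_{i=1}^n [a_i,b_i]$. For $n=1$ the only principal minor is the derivative $f'$, which is positive everywhere by the $P$-matrix hypothesis, so $f$ is strictly increasing and hence one-to-one. For the inductive step I would first record the sign-reversal characterization implied by \Cref{def:pmat}: for every nonzero $\vecbz$ there is a coordinate $k$ with $\vecbz_k\,(M\vecbz)_k > 0$; in particular, since each diagonal entry of $J_{\pays}$ is a $1\times 1$ principal minor, every $\partial f_i/\partial x_i$ is positive, so each $f_i$ is strictly increasing in its own coordinate.

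Assume the claim in dimension $n-1$ and suppose toward a contradiction that $f(p) = f(q) =: \vecbz^{*}$ with $p \neq q$. Writing $x' = (x_1,\dots,x_{n-1})$, strict monotonicity of $f_n$ in $x_n$ means the equation $f_n(x',x_n) = \vecbz^{*}_n$ has at most one solution $x_n = \psi(x')$, which is $C^1$ where defined by the implicit function theorem. Define the reduced map $\hat f(x') = (f_1,\dots,f_{n-1})\big(x',\psi(x')\big)$. A chain-rule computation shows its Jacobian is exactly the Schur complement of the $(n,n)$ entry of $J_{\pays}$, namely $J_{\hat f} = A - \tfrac{1}{d}\,b\,c^{\top}$ where $J_{\pays} = \left(\begin{smallmatrix} A & b \\ c^{\top} & d\end{smallmatrix}\right)$. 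I would then invoke the determinant quotient identity $\det\big((J_{\pays})[\alpha\cup\{n\}]\big) = d\cdot \det\big(J_{\hat f}[\alpha]\big)$ for each $\alpha \subseteq \{1,\dots,n-1\}$: because $J_{\pays}$ is a $P$-matrix every left-hand minor is positive and $d>0$, so every principal minor of $J_{\hat f}$ is positive and $J_{\hat f}$ is again a $P$-matrix. By the induction hypothesis $\hat f$ is one-to-one. Finally $f_n(p)=f_n(q)=\vecbz^{*}_n$ forces $p_n=\psi(p')$ and $q_n=\psi(q')$, while the remaining coordinates give $\hat f(p')=\hat f(q')$; injectivity of $\hat f$ yields $p'=q'$ and hence $p=q$, the desired contradiction.

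The delicate point, and the step I expect to be the main obstacle, is that the natural domain of $\hat f$ — the set of prefixes $x'$ for which $f_n(x',\cdot)=\vecbz^{*}_n$ is solvable inside $[a_n,b_n]$ — need not itself be a compact convex \emph{box}, so the induction hypothesis does not apply verbatim. I would control this by restricting to a sub-box containing both $p'$ and $q'$ on which $\psi$ is everywhere defined, using strict monotonicity of $f_n$ in $x_n$ and continuity to pick the interval endpoints; alternatively, one can replace the level-set reduction with a degree-theoretic argument, since $\det J_{\pays} > 0$ everywhere makes $f$ an orientation-preserving local diffeomorphism and the sign-reversal property governs $f$ on $\partial\Omega$ enough to pin the topological degree at $1$, already forcing at most one preimage. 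Either route reduces global univalence to the pointwise positivity encoded by the $P$-matrix hypothesis, with the product structure of $\Omega$ being precisely what makes the boundary/domain bookkeeping go through.
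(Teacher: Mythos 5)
Your algebraic engine is sound and genuinely different from the paper's route: the chain rule does give $J_{\hat f}=A-\tfrac{1}{d}\,b\,c^{\top}$, and the quotient identity $\det\bigl(J_f[\alpha\cup\{n\}]\bigr)=d\cdot\det\bigl(J_{\hat f}[\alpha]\bigr)$ correctly shows that the Schur complement of a $P$-matrix is again a $P$-matrix. But the gap is exactly where you flagged it, and neither of your repairs closes it. The reduced map $\hat f$ lives on $D=\{x' : f_n(x',a_n)\le z^*_n\le f_n(x',b_n)\}$, while your induction hypothesis is a statement about compact convex \emph{product} domains. Any box containing both $p'$ and $q'$ contains the minimal box $R=\prod_{i<n}[\min(p_i,q_i),\max(p_i,q_i)]$, so your first repair requires $R\subseteq D$. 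Nothing forces this: the cross-partials $\partial f_n/\partial x_j$ of a $P$-matrix are unconstrained in sign, so $f_n(\cdot,a_n)$ can rise above $z^*_n$ in the interior of $R$ even though it is $\le z^*_n$ at $p'$ and $q'$. Concretely, take $f_1=Mx_1,\dots,f_{n-1}=Mx_{n-1}$ with $M$ large and $f_n=x_n+\varepsilon g(x')$ with $g$ non-monotone; every principal minor stays positive for any $g$, yet $D$ can exclude the middle of $R$, so $\psi$ and hence $\hat f$ are simply not defined on any box to which the induction hypothesis applies. The degree-theoretic fallback is also not a proof as sketched: $\det J_f>0$ everywhere on a rectangle famously does \emph{not} imply injectivity (this failure is the raison d'\^etre of the Gale--Nikaido paper), so orientation-preservation alone cannot pin the degree at $1$; the whole content of the theorem is how the \emph{full family} of principal minors controls $f$ relative to the faces of $\Omega$, and your sketch does not say how that enters the degree computation.

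For comparison, the paper does not reprove \Cref{thm:psdunique} from first principles. It imports \citeauthor{GN-65}'s non-reversal theorem (\Cref{thm:gnpartdiagconc}: on a rectangle, a $P$-matrix Jacobian forbids $f(\bomega^1)\le f(\bomega^2)$ together with $\bomega^1\ge\bomega^2$ unless $\bomega^1=\bomega^2$), extends it to every sign orthant by conjugating with coordinate reflections (\Cref{lem:signflipping} shows these preserve the $P$-matrix property), and then reads off injectivity by choosing the orthant determined by the sign pattern of $\bomega^1-\bomega^2$. If you want a self-contained argument, the place where an induction on $n$ genuinely works is inside the proof of that non-reversal lemma, where the reduction is to a \emph{face} of the rectangle (whose Jacobian is a principal submatrix, hence again a $P$-matrix, and whose domain is again a box) rather than to a level set whose domain escapes the class of rectangles.
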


\noindent We will need a generalization of \Cref{thm:psdunique} that relaxes the strict $P$-matrix condition on the Jacobian, on the axis-aligned boundaries.  The problem for our price-function setting is that the pseudogradient of the utility function is only a $P_0$-matrix on the lower boundaries (from equation~\eqref{eqn:partialsame} below).

Define a function $f:\reals^n\rightarrow\reals^n$ to be a {\em $P$-matrix function} if its Jacobian is a $P$-matrix at all points of the function's domain.  We need to extend this definition.  
Note, \Cref{def:interiorpd} for {\em interior $P$-matrix functions} depends on \Cref{def:identbound} for {\em identified boundaries} (next).

\begin{definition}\footnote{We make frequent use of input space $\Omega$ in this paper as a compact and convex product space.  Unless noted specifically otherwise, we let the dimension-wise ranges be $\Omega=\left[a_1,b_1\right]\times \left[a_2,b_2\right]\times \cdots \times \left[a_n,b_n\right]$ as in this definition.}
\label{def:interiorpd}
For product space $\Omega = \left[a_1,b_1\right]\times \left[a_2,b_2\right]\times \cdots \times \left[a_n,b_n\right]$ and function $f:\Omega\to\reals^n$, a function $f: \Omega\rightarrow \reals^n$ 
is an {\em interior $P$-matrix function (respectively interior $N$-matrix function)} if for every point $\bomega\in\Omega$:

\begin{itemize}
\item the Jacobian of $f$ evaluated at $\bomega$ as $J_f(\bomega)$ is a $P_0$-matrix (respectively $N_0$-matrix);
\item and choosing the minor of $J_f(\bomega)$ that removes row/column pairs corresponding to the dimensions in which $\bomega$ is in identified boundaries of $\Omega$, this principal minor of $J_f(\bomega)$ is strictly a $P$-matrix (respectively $N$-matrix).
\end{itemize}
\end{definition}

\noindent Before giving the definition of {\em identified boundaries}, we give their intuition and justification.  They describe conditions which address the problem of Jacobians having determinant 0 at the boundaries.  By \Cref{thm:psdunique}, Jacobian as $P$-matrix everywhere is sufficient for inversion.  An identified boundary (in input space dimension $i$) must first allow unilateral inversion of its coordinate, by mapping to a unique, constant output in dimension $i$ for all inputs in this identified boundary of $i$ (i.e., bijectively); and second, after fixing the input in all such identified-boundary dimensions $i$ as parameters, the reduced function in the remaining dimensions must have Jacobian as a strict $P$-matrix, which will imply that it can be inverted; hence entire output vectors can be inverted.



\begin{definition}
\label{def:identbound}
For compact and convex product space $\Omega\subset\reals^n$, and function $f:\Omega\rightarrow \reals^n$, a {\em boundary} (described by $c_i\in\left\{a_i,b_i\right\}$) is {\em identified} if both of the following hold for all $\bomega_{-i} \in \Omega_{-i}$:
\begin{itemize}
    \item fixing $\bomegai=c_i$, function $f_i(c_i,\bomegasmi)$ is constant for all $\bomegasmi$; or equivalently, all cross-partials on the $c_i$ boundary are 0: $\partial f_i / \partial \omega_j (c_i,\bomega_{-i})=0$ for all $j\neq i$;
    \item the output is unique to the boundary: $f_i(c_i,\bomegasmi)\neq f_i(d_i,\bomegasmi)$ for all $d_i\in\left[a_i,b_i\right], d_i\neq c_i$.
\end{itemize}
\end{definition}

\noindent As previously suggested, the implication of an identified boundary is that, (e.g.) the low point of the domain in dimension $i$ maps identically to the low point of the function's range in dimension $i$ as a unilateral bijection.  Further, note that a sufficient condition for the second point of the definition is having partial $\partial f_i/\partial \omega_i>0$ for all inputs $\omega_i$ off the boundary.


Our \Cref{thm:gnext} generalizes \Cref{thm:psdunique} of~\citeauthor{GN-65}.  
We use it as an interim result towards our more pertinent result in \Cref{thm:gameunique}, which connects interior $P$-matrix functions to proxy games.  \Cref{thm:gameunique} lets us reduce the correctness of {\em price-inversion algorithm} $\payalg$ to the condition that $\pays$ is an interior $P$-matrix function, stated formally in \Cref{cor:nash-implementation}.  Proofs for the next two theorems are given in \Cref{a:gnext}.

\begin{theorem}
\label{thm:gnext}
If function $f:\Omega\to\reals^n$ on compact and convex product domain $\Omega\subset\reals^n$ is an interior $P$-matrix function (\Cref{def:interiorpd}), then it is one-to-one, and therefore invertible on its image.
\end{theorem}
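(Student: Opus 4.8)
The plan is to prove that $f$ is one-to-one (invertibility on its image is then immediate) by contradiction, reducing to the strict Gale--Nikaido univalence result of \Cref{thm:psdunique}. Suppose $f(\bomega)=f(\bomega')$ for two distinct points $\bomega,\bomega'\in\Omega$. The first and most delicate step is a matching lemma for identified boundaries: I claim $\bomega$ and $\bomega'$ lie on exactly the same identified boundaries and agree on those coordinates. Fix a dimension $i$ and an identified boundary value $c_i\in\{a_i,b_i\}$, and suppose $\bomega_i=c_i$ but $\bomega'_i\neq c_i$. By the first bullet of \Cref{def:identbound}, $f_i(c_i,\cdot)$ is a constant, say $v_i$, so $f_i(\bomega)=v_i$ and also $f_i(c_i,\bomega'_{-i})=v_i$; by the second bullet (output unique to the boundary), $f_i(c_i,\bomega'_{-i})\neq f_i(\bomega'_i,\bomega'_{-i})=f_i(\bomega')$. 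Hence $f_i(\bomega)=v_i\neq f_i(\bomega')$, contradicting $f(\bomega)=f(\bomega')$. By symmetry the same implication holds with the roles reversed, so for each identified value $c_i$ we have $\bomega_i=c_i\iff\bomega'_i=c_i$; in particular the two points lie on exactly the same identified boundaries and agree coordinatewise on them. Let $S$ denote this common set of active identified-boundary dimensions, so $\bomega_i=\bomega'_i=c_i$ for $i\in S$, and since $\bomega\neq\bomega'$ we have $\bomega_{-S}\neq\bomega'_{-S}$.

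The second step isolates a full-dimensional sub-rectangle on which the surviving minor is a strict $P$-matrix. For each $i\notin S$, neither point sits on an identified boundary of dimension $i$, so each of $\bomega_i,\bomega'_i$ is strictly separated from every identified boundary value of that coordinate; since such values are endpoints of $[a_i,b_i]$, both coordinates lie on the same side, and I can choose a closed subinterval $[a''_i,b''_i]\subseteq[a_i,b_i]$ that contains both and is disjoint from every identified boundary value of dimension $i$ (thickening slightly where needed, which is possible by the strict separation). Setting $\Omega''=\prod_{i\notin S}[a''_i,b''_i]$ and fixing the $S$-coordinates at their common value $c_S$, I obtain the reduced map $g\colon\Omega''\to\reals^{[n]\setminus S}$ defined by $g_j(\cdot)=f_j(c_S,\cdot)$ for $j\notin S$.

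The third step applies \Cref{thm:psdunique} to $g$. Every point $(c_S,\bomega_{-S})$ with $\bomega_{-S}\in\Omega''$ has, as a point of $\Omega$, active identified-boundary set exactly $S$: the $S$-coordinates are pinned to identified boundaries, while the construction of $\Omega''$ keeps every other coordinate off identified boundaries. Therefore \Cref{def:interiorpd} guarantees that the principal minor of $J_f$ obtained by deleting the rows/columns in $S$ is a strict $P$-matrix at every such point; but this minor is exactly $J_g$, since $\partial g_j/\partial\omega_k=\partial f_j/\partial\omega_k$ for $j,k\notin S$. Thus $g$ is a continuously differentiable map on the compact convex product domain $\Omega''$ whose Jacobian is everywhere a $P$-matrix, so by \Cref{thm:psdunique} it is one-to-one. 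This contradicts the fact that $g(\bomega_{-S})=g(\bomega'_{-S})$ (the $[n]\setminus S$ components of $f(\bomega)$ and $f(\bomega')$ agree) while $\bomega_{-S}\neq\bomega'_{-S}$. Hence no such pair exists and $f$ is one-to-one.

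I expect the matching lemma of the first step to be the main obstacle: it is where the two defining properties of an identified boundary --- constancy of $f_i$ on the face and uniqueness of the boundary output --- must be combined to force both preimages onto the same faces, and it is exactly this that ensures the deleted minor in \Cref{def:interiorpd} is the Jacobian of a genuine lower-dimensional map to which Gale--Nikaido applies. The remaining bookkeeping --- that a sub-rectangle avoiding the identified boundaries inherits a strict $P$-matrix minor (via downward-closure, \Cref{f:pmatfact}) and that this surviving minor equals $J_g$ --- is routine.
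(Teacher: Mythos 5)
Your proof is correct and follows essentially the same route as the paper's: first force the two preimages onto the same identified boundaries (the paper packages this as its Lemma~\ref{lem:identify}, which you instead derive directly from the two bullets of Definition~\ref{def:identbound}), then restrict to the surviving dimensions on a sub-rectangle avoiding identified boundaries and apply Theorem~\ref{thm:psdunique} to the reduced map, whose Jacobian is the strict $P$-matrix minor guaranteed by Definition~\ref{def:interiorpd}. The only difference is cosmetic: the paper's dimensionally-reduced function (Definition~\ref{def:dimreduct}) plays the role of your $g$ on $\Omega''$.
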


\begin{theorem}
\label{thm:gameunique}
A game with $n$ players and
\begin{itemize}
\item a compact and convex product action space $\Omega_1\times\ldots\times\Omega_n=\Omega\subset\reals^n$; \item a continuous and twice-differentiable utility function $\Utils:\Omega\rightarrow\reals^n$ such that:
\begin{itemize}
\item the pseudogradient $\left[\frac{\partial U_i}{\partial \omega_i}\right]_i$ of the utility function $\Utils$ is an interior $N$-matrix function;
\item and there exists $\bomega^0\in\Omega$ such that the pseudogradient evaluated at $\bomega^0$ is $\mathbf{0}$ (the 0-vector);
\end{itemize}
\end{itemize}
has a unique Nash equilibrium, which is $\bomega^0$, and this equilibrium is pure.
\end{theorem}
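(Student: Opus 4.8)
The plan is to show that the hypotheses turn this into a Gale--Nikaido concave game, so that Nash equilibria coincide exactly with the profiles satisfying the first-order (KKT) conditions, and then to combine the injectivity of the pseudogradient (\Cref{thm:gnext}) with the sign pattern of those conditions to force every equilibrium to equal $\bomega^0$. First I would extract concavity in own actions. Writing $g(\bomega)=[\partial U_i/\partial\omega_i]_i$ for the pseudogradient, its Jacobian $J_g$ is by hypothesis an $N_0$-matrix everywhere, so by \Cref{def:pmat} $-J_g$ is a $P_0$-matrix; its $1\times 1$ principal minors are the diagonal, giving $\partial^2 U_i/\partial\omega_i^2\le 0$, so each $U_i$ is concave in $\omega_i$ on $\Omega_i=[a_i,b_i]$ (and strictly concave whenever $\omega_i\in(a_i,b_i)$, since there coordinate $i$ lies in the strict $N$-matrix minor of \Cref{def:interiorpd}). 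Concavity in the own action makes a profile $\bomega$ a pure Nash equilibrium iff each $\omega_i$ maximizes $U_i(\cdot,\bomegasmi)$ on $[a_i,b_i]$, equivalently iff $g_i(\bomega)=0$ when $\omega_i\in(a_i,b_i)$, $g_i(\bomega)\le 0$ when $\omega_i=a_i$, and $g_i(\bomega)\ge 0$ when $\omega_i=b_i$. Strict concavity on each open interval also reduces mixed strategies to pure ones: by Jensen the mean action weakly beats any randomization and strictly beats any nondegenerate one, so every Nash equilibrium is pure and it suffices to analyze pure equilibria.

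Existence and purity are then immediate. By hypothesis $g(\bomega^0)=\mathbf{0}$, which satisfies the interior first-order condition in every coordinate (and a fortiori the boundary inequalities), so $\bomega^0$ is a Nash equilibrium; being a single point of $\Omega$, it is pure.

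For uniqueness I would first invoke \Cref{thm:gnext}: since $g$ is an interior $N$-matrix function, $-g$ is an interior $P$-matrix function and hence one-to-one, so $\bomega^0$ is the unique zero of $g$. To also dispose of equilibria on the boundary, where $g$ need not vanish, I use the strengthening of \Cref{thm:gnext} that the section outline calls ``high-dimensional monotonicity'': an interior $P$-matrix function $f$ is a $P$-function, i.e. for distinct $\bomega,\bomega'$ some coordinate $k$ satisfies $(\omega_k-\omega'_k)(f_k(\bomega)-f_k(\bomega'))>0$. Applying this with $f=-g$ to a hypothetical second equilibrium $\bomega^{*}\neq\bomega^0$ and using $g(\bomega^0)=\mathbf{0}$ produces an index $k$ with $(\omega^{*}_k-\omega^0_k)\,g_k(\bomega^{*})<0$. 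Checking the first-order cases at $\bomega^{*}$: if $g_k(\bomega^{*})>0$ the conditions force $\omega^{*}_k=b_k\ge\omega^0_k$; if $g_k(\bomega^{*})<0$ they force $\omega^{*}_k=a_k\le\omega^0_k$; and if $g_k(\bomega^{*})=0$ the product vanishes. In all three cases $(\omega^{*}_k-\omega^0_k)\,g_k(\bomega^{*})\ge 0$, contradicting strict negativity. Hence $\bomega^{*}=\bomega^0$, and together with the reduction of mixed to pure strategies, $\bomega^0$ is the unique Nash equilibrium and it is pure.

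The main obstacle is the monotonicity step: upgrading ``one-to-one'' (the literal content of \Cref{thm:gnext}) to the strict $P$-function inequality while accommodating the weak, \emph{identified} boundaries, where $J_g$ degenerates to only a $P_0$-matrix. This is precisely what \Cref{def:identbound} is built for: along an identified boundary the corresponding component of $f$ is constant in the remaining coordinates and takes a value distinct from every off-boundary value, which restores strict monotonicity in that coordinate despite the vanishing Jacobian entries, while the complementary coordinates carry a strict $P$-matrix minor by \Cref{def:interiorpd}. I would therefore obtain the needed $P$-function inequality from the same Gale--Nikaido argument that underlies \Cref{thm:gnext}, deferred like that theorem to \Cref{a:gnext}; once it is in hand the remainder of the proof is the elementary sign bookkeeping above.
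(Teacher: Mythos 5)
Your proposal is correct and follows essentially the same architecture as the paper's proof: own-action concavity from the diagonal of the pseudogradient's Jacobian, equilibrium characterized by first-order/KKT conditions, existence of $\bomega^0$ from $g(\bomega^0)=\mathbf{0}$, and uniqueness by combining a Gale--Nikaido monotonicity inequality with the sign analysis of the boundary first-order conditions (your three-case bookkeeping is exactly the paper's ``gradient points back toward $\bomega^0$'' contradiction). The one place where you package things differently is the step you flag as the main obstacle: you want a standalone strengthening of \Cref{thm:gnext} asserting that every interior $P$-matrix function is a strict $P$-function on arbitrary pairs of points, whereas the paper never proves that statement. Its \Cref{thm:gndiagconc} is stated only for Jacobians that are strict $P$-matrices everywhere; the identified-boundary degeneracy is instead handled inside the proof of \Cref{thm:gameunique} by a game-specific argument showing that two candidate equilibria must lie in exactly the same identified boundaries (using constancy of the pseudogradient component on an identified boundary together with the equilibrium inequalities), after which one passes to the dimensionally-reduced function and applies the strict version. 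Your deferred lemma is in fact true --- if the two points disagree on which identified boundaries they occupy, that very coordinate already yields the strict inequality via \Cref{lem:identify}, and otherwise one reduces as in \Cref{thm:gnext} --- but it requires this extra case analysis rather than being ``the same argument that underlies \Cref{thm:gnext},'' since that argument exploits equality of outputs, which you do not have here. With that lemma supplied, your proof is complete; your explicit Jensen reduction of mixed to pure equilibria is also slightly more careful than the paper's one-line dismissal.
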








\begin{corollary}
\label{cor:nash-implementation}
Given agents with (unknown) values $\vals\in\left[0,\maxval\right]^n$.  Consider price function $\pays$ resulting from a 
dominant-strategy incentive-compatible mechanism implementing $\allocs$, with Jacobian $J_{\pays}$.

If $\pays$ is an interior $P$-matrix function, 
then on observed prices from restricted domain $\rezs \in \text{Image}(\pays)$, the price-inversion algorithm $\payalg$ (\Cref{d:nash-mech}) infers successively the true weights $\wals$ and the true values $\vals$ from the mechanism's outcome (as summarized by the prices $\rezs=\pays(\vals)$).
\end{corollary}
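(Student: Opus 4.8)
The plan is to recognize the proxy game of \Cref{d:proxy-game} as an instance of the abstract game in \Cref{thm:gameunique}, derive its unique pure Nash equilibrium, and then feed this uniqueness into \Cref{prop:nash-implementation}. First I would check the structural hypotheses of \Cref{thm:gameunique}: the action space of the proxy game is the weights box $\Omega = \prod_i [\wali(0),\wali(\maxval)]$, which is compact, convex, and a product set because each weight function $\wali$ is continuous and strictly increasing while the underlying values range over $[0,\maxval]$; and the utility profile is $\cumimbals$, which is continuous and twice differentiable since $\pays$ (built from the $C^1$, positive, strictly monotone weight functions) is continuously differentiable.

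Next I would compute the pseudogradient and its Jacobian. By the definition~\eqref{eqn:computil} and the fundamental theorem of calculus, $\partial\cumimbali/\partial\ewali = \imbali(\ewals)$, so by~\eqref{eqn:overpay} the pseudogradient is the map $\ewals \mapsto [\partial\cumimbali/\partial\ewali]_i = \rezs - \pays(\ewals)$. Its Jacobian in $\ewals$ is therefore $-J_{\pays}(\ewals)$, the additive constant $\rezs$ dropping out. Since $\pays$ is by hypothesis an interior $P$-matrix function, the sign flip turns this pseudogradient into an interior $N$-matrix function in the sense of \Cref{def:interiorpd}: at every $\bomega\in\Omega$ the matrix $-J_{\pays}(\bomega)$ is an $N_0$-matrix (its negation $J_{\pays}(\bomega)$ being a $P_0$-matrix), and the minor deleting the identified-boundary dimensions is strictly an $N$-matrix (its negation being strictly a $P$-matrix). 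The identified boundaries are preserved under the sign flip, since zero cross-partials and coordinate-wise injectivity of $\pays$ (\Cref{def:identbound}) are inherited by $\rezs-\pays$.

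Then I would supply the zero of the pseudogradient that \Cref{thm:gameunique} requires. Because $\rezs \in \text{Image}(\pays)$ and in fact $\rezs = \pays(\vals)$ arises from the mechanism's true values, the true weight profile $\wals$ (with each $\wali = \wali(\vali) \in [\wali(0),\wali(\maxval)]$) satisfies $\pays(\wals) = \rezs$; hence the pseudogradient vanishes at $\bomega^0 := \wals \in \Omega$. \Cref{thm:gameunique} then gives that the proxy game has a unique Nash equilibrium, that it is pure, and that it equals $\wals$.

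Finally I would close the argument with \Cref{prop:nash-implementation}. Uniqueness of the proxy game's pure Nash equilibrium makes its hypothesis hold, so $\pays^{-1}(\rezs)$ is the single point $\wals$ and $\payalg$ selects exactly $\wals$ in Step~\ref{step:nash}; the final step then returns $\vali(\wali) = \vali$ for each $i$ by invertibility of the weight functions, i.e., the true values $\vals$. I expect the only genuine obstacle to be the bookkeeping of the sign flip on the boundary: one must confirm that the interior $P$-matrix hypothesis on $\pays$ translates exactly into the interior $N$-matrix hypothesis of \Cref{thm:gameunique} at every point of $\Omega$, in particular along the identified lower boundaries where the diagonal partial $\partial\payi/\partial\ewali$ vanishes (\Cref{lem:monoprice}), so that $-J_{\pays}$ is there only $N_0$ rather than strictly $N$.
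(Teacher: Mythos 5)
Your proposal is correct and follows essentially the same route as the paper's own proof: it verifies the hypotheses of \Cref{thm:gameunique} for the proxy game (compact convex product action space in weights, pseudogradient equal to $\imbals$ with Jacobian $-J_{\pays}$, hence an interior $N$-matrix function, and the true weights $\wals$ supplying the required zero), concludes that $\wals$ is the unique pure Nash equilibrium, and then invokes \Cref{prop:nash-implementation} to recover the weights and hence the values. The extra bookkeeping you flag about the sign flip on identified boundaries is the same point the paper handles implicitly via \Cref{lem:pderi} and \Cref{lem:priceid}.
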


\begin{proof}
We show that under the given assumptions, the proxy game meets the conditions of \Cref{thm:gameunique}.  The action space of the proxy game is equal to the agents' weights space which is a compact and convex product space.  The proxy game has payoffs given by $\cumimbals$ such that utility functions are continuous and twice-differentiable.

The pseudogradient of the payoffs is given by $\imbals$, and the Jacobian of the pseudogradient is the negation of the matrix $J_{\pays}$.  Given $\pays$ as an interior $P$-matrix function, its negation $-\pays$ is an interior $N$-matrix function.  The true values $\wals$ as input-actions to the proxy game will result in evaluation of the pseudogradient as $\imbals(\wals)=\mathbf{0}$ by design of the game, so $\bomega^0=\wals$ exists.

In conclusion, the proxy game indeed satisfies the conditions of \Cref{thm:gameunique}, and admits $\wals$ as a unique Nash equilibrium which is pure.  Defining the inverse function $\pays^{-1}$ to output the unique Nash of the proxy game is sufficient for its output to be unique and correct.
\end{proof}


\subsection{Single Item Proportional Weights Social Choice Functions}
\label{s:oneweightspsd}
The goal of this section is to show that every proportional weights
social choice function awarding a single item 
has a price function $\pays$ 
meeting the conditions of \Cref{cor:nash-implementation}.  We state this now as 
the main theoretical result of the paper.

\begin{theorem}
\label{thm:themainresult}
A price function $\pays$ (of equation~\eqref{eq:winner-pays-bid-strats}) -- corresponding to a strictly monotone, continuous, differentiable proportional weights social choice rule -- is an interior $P$-matrix function, and it is uniquely invertible.
\end{theorem}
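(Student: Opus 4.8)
The plan is to prove the $P$-matrix property of $J_{\pays}$ directly from the proportional-weights structure and then invoke \Cref{thm:gnext} for invertibility. First I would specialize equation~\eqref{eqn:bidfn:w} to proportional weights, where $\alloci(\wals)=\wali/\sum_k\wali[k]$, so that agent $i$'s price depends on the others only through the aggregate competing weight $W_{-i}\coloneqq\sum_{j\neq i}\wali[j]$. Since $\payi$ depends on $\walsmi$ only through $W_{-i}$ and $\partial W_{-i}/\partial\wali[j]=1$, \emph{every} off-diagonal entry in row $i$ of $J_{\pays}$ is the same number; a direct differentiation records the two entries
\begin{align*}
d_i &\coloneqq \frac{\partial \payi}{\partial \wali} = \frac{W_{-i}}{\wali^{2}}\int_{\wali(0)}^{\wali}\frac{z\,\vali'(z)}{z+W_{-i}}\,\dd z, \\
c_i &\coloneqq \frac{\partial \payi}{\partial \wali[j]} = \frac{1}{\wali}\int_{\wali(0)}^{\wali}\frac{(\wali-z)\,z\,\vali'(z)}{(z+W_{-i})^{2}}\,\dd z \quad (j\neq i),
\end{align*}
whose integrands are manifestly nonnegative (using $\vali'>0$ and $z\le\wali$), strictly positive for $\wali>\wali(0)$ and vanishing at $\wali=\wali(0)$; positivity of $d_i$ is also \Cref{lem:monoprice}.

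Next I would dispatch the boundary bookkeeping demanded by \Cref{def:interiorpd}. On the lower face $\wali=\wali(0)$ the price equals $\vali(\wali(0))=0$ independently of $\walsmi$, so there all cross-partials vanish, and off this face $\payi$ is strictly increasing in $\wali$ (as $d_i>0$); hence each lower face is an identified boundary in the sense of \Cref{def:identbound}. A boundary agent contributes an all-zero row to $J_{\pays}$, so any principal minor touching such an agent is zero and $J_{\pays}$ is automatically a $P_0$-matrix. The whole claim therefore reduces to showing that the principal minor $M$ on the interior coordinates $\{i:\wali>\wali(0)\}$ is a strict $P$-matrix at every point (upper faces need no special treatment, being interior-type coordinates).

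The key structural observation is that at most one interior agent can have $d_i<c_i$. Indeed $d_i-c_i$ shares the sign of $\int_{\wali(0)}^{\wali} z\,(z+W_{-i}-\wali)\,\vali'(z)\,(z+W_{-i})^{-2}\,\dd z$, whose integrand is nonnegative whenever $\wali\le W_{-i}+\wali(0)$; hence $d_i<c_i$ forces $\wali>W_{-i}=\sum_{j\neq i}\wali[j]$, so agent $i$ strictly exceeds the \emph{total} of all other weights and is the unique maximizer. If no interior agent is ``dominant'' in this sense, then writing $M=\operatorname{diag}(d_i-c_i)+c\mathbf{1}^{\top}$ and applying the matrix-determinant lemma makes every principal minor equal to $\prod_i(d_i-c_i)\,\bigl(1+\sum_i c_i/(d_i-c_i)\bigr)>0$. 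Otherwise exactly one agent, say $1$, is dominant; I would scale column $j$ by $c_j>0$ (which preserves the $P$-matrix property by \Cref{f:pmatfact}) to symmetrize $M$ into $\tilde M=\operatorname{diag}\bigl(c_i(d_i-c_i)\bigr)+cc^{\top}$. Because $\tilde M$ is symmetric with a single negative diagonal entry perturbed by a rank-one PSD term, and for symmetric matrices the $P$-matrix property coincides with positive definiteness, it remains only to verify $\tilde M\succ 0$.

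Finally, completing the square in the dominant coordinate of $x^{\top}\tilde M x=\sum_i c_i(d_i-c_i)x_i^{2}+(\sum_i c_i x_i)^2$ and applying Cauchy--Schwarz reduces $\tilde M\succ 0$ to the scalar inequality
\[
\sum_{i\neq 1}\frac{c_i}{d_i-c_i}<\frac{d_1}{c_1-d_1},
\]
which is exactly the positivity of $\det M$. This inequality is the main obstacle and the only place where the proportional-weights structure is used in an essential, quantitative way: for each non-dominant $i$ the aggregate $W_{-i}$ contains the large dominant weight $\wali[1]$, which through the factor $(z+W_{-i})^{-2}$ makes $c_i$ small, while $c_1-d_1$ controls the right-hand side; I would substitute the integral forms above and bound the two sides accordingly. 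Once this inequality is established, $M$ is a strict $P$-matrix and, combined with the $P_0$ and identified-boundary facts, $J_{\pays}$ is an interior $P$-matrix function on the compact convex product domain $\prod_i[\wali(0),\wali(\maxval)]$; \Cref{thm:gnext} then yields that $\pays$ is one-to-one, hence uniquely invertible on its image, completing the proof.
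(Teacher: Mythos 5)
Your proposal reconstructs the paper's own route almost step for step: the same Jacobian entries $d_i,c_i$ (equations~\eqref{eqn:partialsame} and~\eqref{eqn:partialcross}), the same treatment of the lower faces as identified boundaries contributing zero rows (\Cref{lem:priceid}, \Cref{lem:pderi}), the same ``at most one dominant agent'' observation (your sign analysis of $d_i-c_i$ is exactly \Cref{lem:halfw}, since $\wali>W_{-i}$ is the same as $\wali>\tfrac12\sum_k\wali[k]$), and the same final reduction: your inequality $\sum_{i\neq1}\frac{c_i}{d_i-c_i}<\frac{d_1}{c_1-d_1}$ is algebraically identical to the paper's condition $\sum_k\frac{1}{1-h_k}>1$ with $h_k=d_k/c_k$ (\Cref{thm:mpd} case~(4) together with \Cref{lem:fin}). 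The only cosmetic difference is that you symmetrize by column scaling and test positive definiteness of $\operatorname{diag}(c_i(d_i-c_i))+cc^{\top}$, whereas the paper factors $M=D\cdot H$ and tests positive definiteness of $H$; these are the same quadratic form up to a positive diagonal congruence.

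The gap is that you never prove the inequality you yourself flag as ``the main obstacle.'' Saying you ``would substitute the integral forms above and bound the two sides accordingly'' is not an argument, and the heuristic offered --- that the dominant weight inside $W_{-i}$ makes $c_i$ small --- does not by itself produce the required quantitative bound. The paper needs two nontrivial steps here. First, a per-agent integral estimate (\Cref{lem:perterm}): relaxing the integrand in the denominator of $h_i/(1-h_i)$ by replacing $z$ with an endpoint, while checking that the sign of that denominator does not flip (which uses \Cref{lem:halfw} again), yields $\frac{1}{1-h_i}>\frac{\alloci^2}{2\alloci-1}$ with $\alloci=\wali/\sum_k\wali[k]$. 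Second, an aggregation step (\Cref{lem:fin}): because $x\mapsto\frac{x^2}{2x-1}$ vanishes at $0$ and is concave on $(0,\tfrac12)$, and the non-dominant allocations satisfy the budget constraint $\sum_{k>1}\alloci[k]=1-\alloci[1]$, one gets $\sum_{k>1}\frac{\alloci[k]^2}{2\alloci[k]-1}\geq\frac{(1-\alloci[1])^2}{2(1-\alloci[1])-1}$, and the two pieces combine to exactly $1$. Neither the per-term bound nor the use of the budget constraint appears in your sketch, and without them the theorem is not established. A smaller point: your closed form $\prod_i(d_i-c_i)\bigl(1+\sum_i c_i/(d_i-c_i)\bigr)$ for the no-dominant-agent case should be read in the expanded form $\prod_i(d_i-c_i)+\sum_i c_i\prod_{j\neq i}(d_j-c_j)$ so as to cover the degenerate case $d_i=c_i$, which \Cref{lem:halfw} permits for one agent (this is case~(2) of \Cref{thm:mpd} with $\scrlh_1=1$).
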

\begin{proof}
We only need to show that $J_{\pays}$ is an interior $P$-matrix function.  In \Cref{lem:priceid} in \Cref{a:pderi}, we show that under $\pays$, the lower boundaries of the weights space domain are identified boundaries.  \Cref{lem:pderi} (next) shows that when an input is in the lower boundary for any dimension $i$, $J_{\pays}$ has all-zero elements in row $i$, such that its determinant is trivially 0, meeting the (weakened) identified-boundary condition of a $P_0$-matrix.  Otherwise at {\em all} points of the weights space domain, \Cref{thm:mpd2} (at the end of this section) shows that the critical minor of $J_{\pays}$ -- i.e., the minor which removes row/ column indexes corresponding to the dimensions in which its input exists in identified (lower) boundaries -- is strictly a $P$-matrix.
\end{proof}



The rest of this section builds towards \Cref{thm:mpd2}.  We start with the straightforward calculation of the partial derivatives of $\pays$, which in particular give the entries of the Jacobian $J_{\pays}$.  The steps
of the calculations and the proof of \Cref{lem:pderi} are given in~\Cref{a:pderi}.
\ifsoda
\begin{align}
\label{eqn:partialsame}
\frac{\partial \payi}{\partial \wali}(\wals)&=\int^{\wali}_{\wali(0)}\vali'(z)\frac{1}{\wali}\cdot \frac{z}{\left(\sum_k \wali[k]\right)-\wali+z}\\
\notag
&\quad\cdot\left[\frac{\sum_k \wali[k]}{\wali}-1\right]dz\\
\label{eqn:partialcross}
\frac{\partial \payi}{\partial \wali[j]}(\wals)&=\int^{\wali}_{\wali(0)}\vali'(z)\frac{1}{\wali}\cdot \frac{z}{\left(\sum_k \wali[k]\right)-\wali+z}\\
\notag
&\quad\cdot\left[\frac{\sum_k \wali[k]}{\left(\sum_k \wali[k]\right)-\wali+z}-1\right]dz
\end{align}
\else
\begin{eqnarray}
\label{eqn:partialsame}
\frac{\partial \payi}{\partial \wali}(\wals)&=&\int^{\wali}_{\wali(0)}\vali'(z)\frac{1}{\wali}\cdot \frac{z}{\left(\sum_k \wali[k]\right)-\wali+z}\cdot\left[\frac{\sum_k \wali[k]}{\wali}-1\right]dz\\
\label{eqn:partialcross}
\frac{\partial \payi}{\partial \wali[j]}(\wals)&=&\int^{\wali}_{\wali(0)}\vali'(z)\frac{1}{\wali}\cdot \frac{z}{\left(\sum_k \wali[k]\right)-\wali+z}\cdot\left[\frac{\sum_k \wali[k]}{\left(\sum_k \wali[k]\right)-\wali+z}-1\right]dz
\end{eqnarray}
\fi

\begin{lemma}
Given the price function $\pays$ for proportional weights, for $j,k\neq i$, the cross derivatives are the same: $\frac{\partial \payi}{\partial w_j} =\frac{\partial \payi}{\partial w_k}$.  Evaluating the Jacobian at $\wals$, further, all elements of the Jacobian matrix $J_{\pays}$ are positive, i.e., $\frac{\partial \payi}{\partial \wali}>0,~\frac{\partial \payi}{\partial w_j}>0$, except at the $\wali(0)$ lower boundary in dimension $i$ where the elements of row $i$ are $\frac{\partial \payi}{\partial \wali}= \frac{\partial \payi}{\partial w_j}=0$.
\label{lem:pderi}
\end{lemma}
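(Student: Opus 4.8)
The plan is to read all three claims directly off the explicit integrands in equations~\eqref{eqn:partialsame} and~\eqref{eqn:partialcross}; once those formulas are in hand the lemma reduces to checking the sign of each factor. Throughout I write $S=\sum_k \wali[k]$ for the total weight, and I record the two structural facts I will lean on. First, the integration variable $z$ ranges over $[\wali(0),\wali]$ with $\wali(0)>0$ (weights are strictly positive even at value $0$), so $z>0$ on the entire interval. Second, since each $\wali(\cdot)$ is strictly increasing and continuously differentiable, its inverse $\vali(\cdot)$ exists and satisfies $\vali'(z)=1/\wali'(\vali(z))>0$. Also, because every agent has strictly positive weight, $S-\wali>0$ whenever there is at least one other agent.

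For the cross-derivative equality I would simply observe that the integrand of~\eqref{eqn:partialcross} depends on the competitors only through the total weight $S$; the index $j$ never appears explicitly. Hence for any $j,k\neq i$ the two integrals are literally the same expression, giving $\partial\payi/\partial w_j=\partial\payi/\partial w_k$. Conceptually this reflects that, under proportional weights, perturbing any one competitor's weight moves agent $i$'s allocation only through the shared denominator $S$.

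Next I would establish strict positivity of each entry of row $i$ in the regime $\wali>\wali(0)$ by signing the factors. In both integrands the prefactors $\vali'(z)$, $1/\wali$, and $z/(S-\wali+z)$ are strictly positive for $z>0$. The diagonal bracket in~\eqref{eqn:partialsame} is $\tfrac{S}{\wali}-1=\tfrac{S-\wali}{\wali}>0$ (positive because other agents carry positive weight), so the diagonal integrand is strictly positive and its integral over the nondegenerate interval is strictly positive, giving $\partial\payi/\partial\wali>0$. The cross bracket in~\eqref{eqn:partialcross} simplifies to $\tfrac{S}{S-\wali+z}-1=\tfrac{\wali-z}{S-\wali+z}$, which is strictly positive for $z<\wali$ and vanishes only at the single endpoint $z=\wali$; thus the cross integrand is positive on $(\wali(0),\wali)$ and its integral is again strictly positive, giving $\partial\payi/\partial w_j>0$.

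Finally, the boundary claim is where the degeneracy surfaces: when $\wali=\wali(0)$ the integration interval $[\wali(0),\wali]$ collapses to a point, so every integral in row $i$ is zero, yielding $\partial\payi/\partial\wali=\partial\payi/\partial w_j=0$ as stated. The only genuinely delicate point, and the main obstacle I expect, is arguing strict rather than merely weak positivity of the integrals; I would handle this by noting that each integrand is continuous and strictly positive on the open interval $(\wali(0),\wali)$, which has positive measure exactly when $\wali>\wali(0)$, so the strict inequality cannot be destroyed by the measure-zero set where the cross bracket vanishes (namely $z=\wali$).
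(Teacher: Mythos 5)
Your proposal is correct and follows essentially the same route as the paper's own proof: read the signs of the factors directly off equations~\eqref{eqn:partialsame} and~\eqref{eqn:partialcross}, note that the cross-partial integrand depends on the other agents only through the sum $\sum_k \wali[k]$ (hence is independent of which $j\neq i$ is perturbed), observe strict positivity of the integrand on the open interval $(\wali(0),\wali)$, and note that the interval degenerates to a point when $\wali=\wali(0)$. Your explicit simplification of the cross bracket to $(\wali-z)/(\sum_k\wali[k]-\wali+z)$ is a slightly more detailed accounting of where the integrand can vanish than the paper gives, but the argument is the same.
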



We need to prove that $\pays$ is an interior $P$-matrix function.  Consider weights input $\wals$.  Let $K$ be the count of dimensions $i$ such that coordinate $\wali$ is ``off" the  lower identified boundary in dimension$i$, i.e., $\wali > \wali(0)$.  Without loss of generality we can assume the dimensions of identified boundaries have the largest indexes (if any).

We critically consider only the principal minor of $J_{\pays}$ which results from keeping the first $K$ interior dimensions, as is sufficient to check an interior $P$-matrix function.  We explicitly define the ratio of an agent's ``self-partial" to its ``cross-partial" for any $j\neq i$ by $h_i$, which will be needed for analysis throughout the rest of the paper.\footnote{Technically the $h_i$ terms are functions, each of input $\wals$, but we suppress this in the notation.}
\begin{align}
\label{eq:hi}
h_i&= \frac{\partial \payi}{\partial \wali}/\frac{\partial \payi}{\partial \wali[j]}
\end{align}
\noindent The derivatives that appear are  positive (\Cref{lem:pderi}).  We write the principal minor's Jacobian as\ifsoda~$J_{\pays,K}=D\cdot H =$
\begin{equation}
\label{eq:jacminor}
	\begin{bmatrix}
    \frac{\partial \payi[1]}{\partial \wali[2]} &  0 & \dots  &     0 \\
    0 & \frac{\partial \payi[2]}{\partial \wali[1]} & \dots & 0 \\
    \vdots & \vdots & \ddots & \vdots \\
      0 &  0 & \dots  &  \frac{\partial \payi[K]}{\partial \wali[1]}
\end{bmatrix}
\cdot
\begin{bmatrix}
    h_1 & 1 &  \dots  &     1 \\
    1 & h_2 & \dots & 1\\
    \vdots  & \vdots & \ddots & \vdots \\
      1 &  1 & \dots  &      h_K
\end{bmatrix}
\end{equation}
\else
\begin{equation}
\label{eq:jacminor}
    	J_{\pays,K}=D\cdot H =
	\begin{bmatrix}
    {\partial \payi[1]}/{\partial \wali[2]} &     0 &     0 & \dots  &     0 \\
    0 &     {\partial \payi[2]}/{\partial \wali[1]}  &     0 & \dots  &    0 \\
        0 &     0  &  {\partial \payi[3]}/{\partial \wali[1]} & \dots  &    0 \\
    \vdots & \vdots & \vdots & \ddots & \vdots \\
      0 &        0  &      0  & \dots  &  {\partial \payi[K]}/{\partial \wali[1]}
\end{bmatrix}
\cdot
\begin{bmatrix}
    h_1 &     1 &     1 & \dots  &     1 \\
    1 &    h_2  &     1 & \dots  &    1 \\
        1 &     1  &  h_3 & \dots  &    1 \\
    \vdots & \vdots & \vdots & \ddots & \vdots \\
      1 &        1  &      1  & \dots  &      h_K
\end{bmatrix}
\end{equation}
\fi

Multiplying by a positive diagonal matrix $D$ is a benign operation with respect to the determination of a matrix as a $P$-matrix (\Cref{f:pmatfact}(3)).  We will define $H$ to be the rightmost matrix of equation~\eqref{eq:jacminor} which is composed of $h_i$ elements in the diagonal and all ones elsewhere.  By reduction, we need only show that $H$ is a $P$-matrix, for which it is sufficient to show $H$ is positive definite (\Cref{f:pmatfact}(2)).

We claim the following results, starting with a complete characterization of when an arbitrary matrix $\scrbh$ (with structure of $H$) is positive definite, a result which could be of independent interest.

\begin{theorem}
\label{thm:mpd}
Consider a $K\times K$ matrix $\scrbh$ with diagonal $\scrlh_1, \scrlh_2, ..., \scrlh_K$ and all other entries equal to 1 (and without loss of generality $\scrlh_1 \leq \scrlh_2 \leq \ldots \leq \scrlh_K$).  
The following is a complete characterization describing when $\scrbh$ is
positive definite.
\begin{enumerate}
\item if $\scrlh_1 \leq 0$, then the matrix $\scrbh$ is not positive definite;
\item if $\scrlh_1 \geq 1$ and $\scrlh_2 > 1$, then $\scrbh$ is positive definite;
\item if $0<\scrlh_1,\scrlh_2 \leq 1$, then $\scrbh$ is not positive definite;
\item if $0<\scrlh_1< 1$ and $\scrlh_2>1$, then $\scrbh$ is positive definite if and only if $\sum_k \frac{1}{1-\scrlh_k} > 1$.
\end{enumerate}
\end{theorem}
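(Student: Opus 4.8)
The plan is to work directly with the quadratic form attached to $\scrbh$. Writing $\scrbh = \mathrm{diag}(\scrlh_1-1,\dots,\scrlh_K-1) + \mathbf{1}\mathbf{1}^{\top}$, where $\mathbf 1$ is the all-ones vector, for any $\vecbz\in\reals^K$ with $s=\sum_k z_k$ we have
\[
\vecbz^{\top}\scrbh\,\vecbz \;=\; \sum_{k=1}^{K}(\scrlh_k-1)\,z_k^2 \;+\; s^2 .
\]
Positive definiteness of $\scrbh$ is precisely the statement that this expression is strictly positive for every $\vecbz\neq\mathbf 0$, and the entire proof is an analysis of this single display. The three ``not positive definite'' conclusions will each follow from one bad test vector, while the two ``positive definite'' conclusions require a genuine argument.

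First I would dispatch the easy cases. For (1), taking $\vecbz=\mathbf e_1$ gives $\vecbz^{\top}\scrbh\,\vecbz=\scrlh_1\le 0$ (this is just the $(1,1)$ entry), so $\scrbh$ is not positive definite. For (3), taking $\vecbz=\mathbf e_1-\mathbf e_2$ makes $s=0$ and leaves $(\scrlh_1-1)+(\scrlh_2-1)=\scrlh_1+\scrlh_2-2\le 0$, again ruling out positive definiteness. For (2), when $\scrlh_1\ge 1$ and $\scrlh_2>1$ every coefficient $\scrlh_k-1$ is nonnegative and strictly positive for $k\ge 2$; inspecting the display, the form can vanish only if $z_2=\dots=z_K=0$ (forced by the strictly positive coefficients) and then $s=z_1=0$, i.e.\ $\vecbz=\mathbf 0$, so $\scrbh$ is positive definite.

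The heart of the argument is case (4), where $0<\scrlh_1<1<\scrlh_2\le\cdots\le\scrlh_K$, so coordinate $1$ is the unique coordinate with a negative coefficient. Since $\scrlh_1>0$, I would complete the square in $z_1$: setting $t=\sum_{k\ge 2}z_k$, the form becomes
\[
\scrlh_1\bigl(z_1+\tfrac{t}{\scrlh_1}\bigr)^2 \;+\; \widetilde q(z_2,\dots,z_K),
\qquad
\widetilde q \;=\; \bigl(1-\tfrac{1}{\scrlh_1}\bigr)t^2 + \sum_{k\ge 2}(\scrlh_k-1)z_k^2 .
\]
This is an invertible change of variables (a congruence), so $\scrbh$ is positive definite iff $\scrlh_1>0$ (true) and $\widetilde q$ is positive definite. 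Now $\widetilde q$ has the shape $D - c\,\mathbf 1\mathbf 1^{\top}$ with $D=\mathrm{diag}(\scrlh_k-1)_{k\ge 2}\succ 0$ and $c=\tfrac{1}{\scrlh_1}-1>0$, i.e.\ a rank-one downdate of a positive diagonal matrix. The standard criterion (maximize the Rayleigh quotient of the rank-one part after substituting $y=D^{1/2}x$) says such a downdate is positive definite iff $c\,\mathbf 1^{\top}D^{-1}\mathbf 1<1$, that is $\bigl(\tfrac{1}{\scrlh_1}-1\bigr)\sum_{k\ge 2}\tfrac{1}{\scrlh_k-1}<1$. Rearranging to $\sum_{k\ge 2}\tfrac{1}{\scrlh_k-1}<\tfrac{\scrlh_1}{1-\scrlh_1}$ and using $\tfrac{1}{1-\scrlh_1}-1=\tfrac{\scrlh_1}{1-\scrlh_1}$ together with $\tfrac{1}{1-\scrlh_k}=-\tfrac{1}{\scrlh_k-1}$ converts this into the claimed condition $\sum_k\tfrac{1}{1-\scrlh_k}>1$.

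I expect case (4) to be the main obstacle: pinning down the rank-one downdate criterion exactly (including the boundary $\sum_k\tfrac{1}{1-\scrlh_k}=1$, where $\widetilde q$ is merely positive semidefinite and $\scrbh$ is correctly excluded from being positive definite), and then matching the two equivalent inequalities cleanly. As a cross-check I would keep an alternative route for (4): by eigenvalue interlacing for the positive semidefinite rank-one update $\scrbh=\mathrm{diag}(\scrlh_k-1)+\mathbf 1\mathbf 1^{\top}$, the matrix $\scrbh$ has at least $K-1$ positive eigenvalues, so it is positive definite iff $\det\scrbh>0$; the matrix determinant lemma gives $\det\scrbh=\prod_k(\scrlh_k-1)\cdot\bigl(1+\sum_k\tfrac{1}{\scrlh_k-1}\bigr)$, and tracking the two sign reversals (one from the single factor $\scrlh_1-1<0$, one from the bracket) reproduces the same condition.
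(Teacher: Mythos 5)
Your proposal is correct, and while the easy cases (1)--(3) coincide with the paper's treatment (same quadratic-form identity $\vecbz^{\top}\scrbh\,\vecbz=(\sum_k z_k)^2+\sum_k(\scrlh_k-1)z_k^2$, essentially the same test vectors for (1) and (3), same inspection for (2)), your handling of the crucial case (4) is a genuinely different route. The paper fixes $z_1=\zbar\neq 0$, minimizes the form over $\zees$ by first-order conditions (justifying this via the positive definite Hessian $\scrbh_{[2:K,2:K]}$, which is case (2) applied to a submatrix), re-expresses the minimizer in terms of the sum $\zum$, and grinds out the closed-form minimum value $\zum^2\,(1-\sum_{j\geq2}\tfrac{1}{1-\scrlh_j})(1-\scrlh_1)[\sum_k\tfrac{1}{1-\scrlh_k}-1]$, reading off the sign from the bracket. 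You instead eliminate the \emph{problematic} coordinate $z_1$ by completing the square (valid since $\scrlh_1>0$), reducing positive definiteness of $\scrbh$ to that of the rank-one downdate $D-c\,\mathbf 1\mathbf 1^{\top}$ with $D=\mathrm{diag}(\scrlh_k-1)_{k\ge2}\succ0$ and $c=\tfrac{1}{\scrlh_1}-1>0$, and then invoke the standard Cauchy--Schwarz/Rayleigh-quotient criterion $c\,\mathbf 1^{\top}D^{-1}\mathbf 1<1$; the algebra converting this to $\sum_k\tfrac{1}{1-\scrlh_k}>1$ checks out, and it correctly classifies the boundary case as only semidefinite. Your approach is shorter and leans on a reusable linear-algebra fact rather than an explicit optimization, at the cost of needing that fact as a lemma; the paper's computation is self-contained and, notably, its intermediate quantities ($\optzeei[j]=\tfrac{1}{1-\scrlh_j}\zum$, i.e.\ the terms $\tfrac{1}{1-h_k}$) are reused later as the slopes of the level-set curves in \Cref{s:computation}, which your route would not produce as a byproduct. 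Your secondary cross-check via Weyl interlacing and the matrix determinant lemma is also sound (the single negative factor $\scrlh_1-1$ and the sign of $1+\sum_k\tfrac{1}{\scrlh_k-1}$ cancel exactly when $\sum_k\tfrac{1}{1-\scrlh_k}>1$), and arguably gives the cleanest proof of all three.
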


The proof of \Cref{thm:mpd} is given in
Appendix~\ref{as:hmatrixpsd} where the main difficulty is part~(4).
\Cref{thm:mpd} is for arbitrary $\scrbh$.  We now return to the
specific consideration of $H$ resulting from $\pays$ and
$J_{\pays,K}$, showing in \Cref{thm:mpd2} that it must be covered by cases (2) or (4) from
\Cref{thm:mpd}.  The proofs of \Cref{lem:halfw} and
\Cref{lem:fin} are given in \Cref{as:thmmpd}.

\begin{lemma}
\label{lem:halfw}
If $h_i\leq 1$, then $w_i> 0.5\sum_k w_k$, and all other weights must have $w_j < 0.5\sum_k w_k$ (and $h_j>1$).
\end{lemma}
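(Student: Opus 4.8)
The plan is to reduce the claimed inequality on $h_i$ to a single-variable integral inequality and then read off the sign of the integrand in closed form. Writing $S=\sum_k \wali[k]$ for brevity, I would first note from~\eqref{eqn:partialsame} and~\eqref{eqn:partialcross} that both $\partial\payi/\partial\wali$ and $\partial\payi/\partial\wali[j]$ are integrals of the common strictly positive factor
\[
g(z)=\vali'(z)\,\frac{1}{\wali}\cdot\frac{z}{S-\wali+z}>0
\]
against a bracket: the self-partial carries the $z$-independent constant $C=\frac{S}{\wali}-1$, while the cross-partial carries $\psi(z)=\frac{S}{S-\wali+z}-1=\frac{\wali-z}{S-\wali+z}$. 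On the interior dimensions $\wali>\wali(0)$ where $h_i$ is defined, both integrals are strictly positive, so I may write $h_i=\frac{C\int g}{\int g\,\psi}$. The only fact I need about the target conclusion is the elementary equivalence $\wali>\tfrac12 S \iff C<1$, which holds because $C=\frac{S-\wali}{\wali}$.

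The key step is to recognize that $h_i\le 1$ is equivalent to $\int_{\wali(0)}^{\wali} g(z)\,\bigl(\psi(z)-C\bigr)\,dz\ge 0$, and then to simplify the integrand's bracket. Placing $\psi(z)$ and $C$ over the common denominator $\wali(S-\wali+z)$ and factoring the numerator yields the identity
\[
\psi(z)-C=\frac{S\,\bigl(2\wali-S-z\bigr)}{\wali\,(S-\wali+z)}.
\]
Since the prefactor $\frac{S}{\wali(S-\wali+z)}$ is strictly positive throughout the domain, the sign of $\psi(z)-C$ is exactly the sign of the linear expression $2\wali-S-z$. This algebraic collapse is the main obstacle; once the integrand's sign is pinned to $2\wali-S-z$, the rest is immediate.

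Finally I would argue by contrapositive for the first assertion. If $\wali\le\tfrac12 S$ then $2\wali-S\le 0$, and because $z\ge\wali(0)>0$ on the whole interval of integration, $2\wali-S-z<0$ for every $z$; hence $g(z)(\psi(z)-C)<0$ pointwise, the integral is strictly negative, and $h_i>1$. Contrapositively, $h_i\le 1$ forces $\wali>\tfrac12 S$. For the remaining agents, $\wali>\tfrac12 S$ together with positivity of all weights gives $\wali[j]\le\sum_{k\ne i}\wali[k]=S-\wali<\tfrac12 S$ for each $j\ne i$; running the identical sign computation with index $j$ in place of $i$ (now $2\wali[j]-S<0$) shows $h_j>1$, completing the proof.
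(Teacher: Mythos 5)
Your proof is correct and takes essentially the same approach as the paper's: both reduce the comparison of $h_i$ to $1$ to a pointwise comparison of the two bracketed terms in equations~\eqref{eqn:partialsame} and~\eqref{eqn:partialcross} against the common positive factor, and both conclude that the bracket inequality can only go the wrong way when $\wali>\tfrac12\sum_k\wali[k]$. Your explicit factorization of the bracket difference as $S(2\wali-S-z)/(\wali(S-\wali+z))$ and the contrapositive phrasing make the sign argument slightly more transparent than the paper's existential version, but the underlying idea is identical.
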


\begin{lemma} When $h_1<1$ and $h_j>1~\forall j\neq1$, we have $\sum_k \frac{1}{1-h_k}>1$.
\label{lem:fin}
\end{lemma}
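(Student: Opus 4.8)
The plan is to convert the claim into an inequality involving only the weight profile $\wals$ and then settle it with a one-variable convexity argument. Write $S\coloneqq\sum_k\wali[k]$. First I would simplify $h_i$ using the explicit partials \eqref{eqn:partialsame} and \eqref{eqn:partialcross}: both integrands are a common positive factor $\psi_i(z)\coloneqq\vali'(z)\,\tfrac{z}{(S-\wali+z)^2}$ multiplied by a bracket. In the self-partial the bracket equals the constant $\tfrac{S-\wali}{\wali}$ and factors out of the integral, whereas in the cross-partial the bracket $\tfrac{\wali-z}{S-\wali+z}$ stays inside. The algebraic identity
\[
\frac{\wali-z}{S-\wali+z}-\frac{S-\wali}{\wali}=\frac{S\,(2\wali-S-z)}{\wali\,(S-\wali+z)}
\]
then collapses $1-h_i$ to a single ratio of $\psi_i$-integrals, giving
\[
\frac{1}{1-h_i}=\frac{\wali}{S}\cdot\frac{r_i}{r_i-(S-\wali)},\qquad r_i\coloneqq\frac{\int_{\wali(0)}^{\wali}\psi_i(z)\,(\wali-z)\,dz}{\int_{\wali(0)}^{\wali}\psi_i(z)\,dz}.
\]
Thus $r_i$ is a weighted average of $\wali-z$ over $z\in[\wali(0),\wali]$, and because $\wali(0)>0$ it satisfies $0<r_i<\wali$ strictly.

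Second, using $\sum_i\wali=S$ I would rewrite $\sum_k\tfrac1{1-h_k}>1$ in the equivalent form $\sum_i\tfrac{\wali(S-\wali)}{r_i-(S-\wali)}>0$. By \Cref{lem:halfw}, the hypotheses $h_1<1$ and $h_j>1$ ($j\neq1$) mean $\wali[1]>S/2>\wali[j]$; hence only the $i=1$ term has positive denominator and every other term is negative. Applying $r_i<\wali$ bounds the denominators: $r_1-(S-\wali[1])<2\wali[1]-S$ for the positive term, and $(S-\wali[j])-r_j>S-2\wali[j]>0$ for the negative ones (positivity holds since $\wali[j]<S/2$). This reduces the goal to the weights-only inequality $F(\wali[1])\ge\sum_{j\neq1}F(\wali[j])$ with $F(w)\coloneqq\tfrac{w(S-w)}{|2w-S|}$; moreover the strict inequality we ultimately need comes for free from $r_i<\wali$ being strict, so a non-strict bound on $F$ suffices here.

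The main obstacle is this final inequality, which is tight and so resists crude estimation. I would substitute $s_i\coloneqq|\wali-S/2|$, turning $F$ into $G(s)\coloneqq\tfrac{S^2}{8s}-\tfrac s2$ and the target into $G(s_1)\ge\sum_{j\neq1}G(s_j)$. Two exact constraints fix the configuration: $\sum_i\wali=S$ gives $\sum_{j\neq1}s_j-s_1=\tfrac{(n-2)S}{2}$, while $\wali[1]+\wali[j]\le S$ (the remaining weights are nonnegative) gives $s_j\ge s_1$ and $\wali[j]>0$ gives $s_j<S/2$, so each $s_j\in[s_1,S/2]$. Since $t\mapsto1/t$ is convex, $\tfrac1{s_j}$ lies below the chord $\ell$ joining $(s_1,\tfrac1{s_1})$ and $(\tfrac S2,\tfrac2S)$; summing $\tfrac1{s_j}\le\ell(s_j)$ and exploiting that $\ell$ is affine together with the sum constraint telescopes to exactly $\sum_{j\neq1}\tfrac1{s_j}\le\tfrac1{s_1}+\tfrac{2(n-2)}{S}$, which rearranges precisely into $G(s_1)\ge\sum_{j\neq1}G(s_j)$. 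The delicacy worth stressing is that this chord bound holds with equality at $n=2$ (where $s_2=s_1$): it is the exactness of the weight-sum constraint that makes the convexity estimate close with no slack to spare, so the argument cannot be replaced by any looser bound on $r_i$ or $s_j$.
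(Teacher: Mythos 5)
Your proposal is correct, and structurally it is the paper's own two-step argument, so the comparison is mostly about how the two steps are derived. The paper's proof goes through \Cref{lem:perterm}, the per-agent bound $\frac{1}{1-h_i}>\frac{\alloci^2}{2\alloci-1}$, obtained by relaxing the cross-partial's integrand and chasing signs; your exact representation $\frac{1}{1-h_i}=\frac{\wali}{S}\cdot\frac{r_i}{r_i-(S-\wali)}$, with $r_i$ a weighted average of $\wali-z$ satisfying $0<r_i<\wali$, yields precisely the same bound once you substitute $r_i<\wali$ (the map $r\mapsto r/(r-(S-\wali))$ is decreasing in both sign cases, and $\frac{\wali^2}{S(2\wali-S)}=\frac{\alloci^2}{2\alloci-1}$), so your steps one and two re-derive \Cref{lem:perterm} in a more transparent way: the direction and strictness of the inequality are immediate from the weighted-average interpretation rather than from a relaxation whose sign must be checked separately for $i=1$ and $i\neq 1$. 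Your final step is likewise the paper's summation step in different coordinates: the paper closes by subadditivity of the concave function $x\mapsto x^2/(2x-1)$ on $(0,1/2)$ together with $\sum_k\alloci[k]=1$, landing exactly on $1$; your chord bound on the convex $t\mapsto 1/t$ over $[s_1,S/2]$ plus the weight-sum constraint is the identical tight inequality after the substitution $s_i=|\wali-S/2|$ and subtraction of $\sum_k\alloci[k]=1$. One small expository slip: $\psi_i(z)=\vali'(z)\,z/(S-\wali+z)^2$ is not literally a common factor of the two original integrands (that factor is $\vali'(z)\,z/(\wali(S-\wali+z))$); $\psi_i$ only emerges after you combine the two partials via your algebraic identity. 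This does not affect correctness, since the resulting formula for $1/(1-h_i)$ is right.
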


\begin{theorem}
\label{thm:mpd2}
Let matrix $J_{\pays}$ be the Jacobian of $\pays$ at weights $\wals$
of a positive, strictly increasing, and differentiable proportional
weights social choice functions.  $\pays$ is an interior $P$-matrix function.
\end{theorem}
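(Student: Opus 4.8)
The plan is to reduce, exactly as in the proof of \Cref{thm:themainresult}, to the single remaining claim that $J_{\pays}$ is an interior $P$-matrix function in the sense of \Cref{def:interiorpd}. The two structural ingredients are already available: \Cref{lem:priceid} identifies the lower boundaries of weights space, and \Cref{lem:pderi} shows that whenever $\wali=\wali(0)$ the entire row $i$ of $J_{\pays}$ is zero — so every principal minor touching such a row has determinant $0$, which is exactly what the $P_0$-matrix requirement asks for — while off the lower boundaries every entry of $J_{\pays}$ is strictly positive. Thus the whole content of the theorem is concentrated in verifying that the critical minor, the one retaining precisely the $K$ dimensions with $\wali>\wali(0)$, is strictly a $P$-matrix.

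First I would invoke the factorization $J_{\pays,K}=D\cdot H$ of equation~\eqref{eq:jacminor}, in which $D$ is a strictly positive diagonal matrix, so that \Cref{f:pmatfact}(3) reduces the $P$-matrix claim to showing that $H$ is a $P$-matrix, and \Cref{f:pmatfact}(2) reduces that in turn to showing that $H$ is positive definite. Since $H$ has exactly the form covered by \Cref{thm:mpd} (diagonal entries $h_1,\dots,h_K$, all off-diagonal entries equal to $1$), it then suffices to prove that the $h_i$ induced by $\pays$ always land in case~(2) or case~(4), the two positive-definite cases of that characterization.

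The remainder I would carry out as an exhaustive check driven by \Cref{lem:halfw} and \Cref{lem:fin}. Every partial is strictly positive off the boundary (\Cref{lem:pderi}), so each $h_i>0$, which excludes case~(1). By \Cref{lem:halfw}, $h_i\le 1$ forces $\wali>\tfrac12\sum_k \wali[k]$; since positive weights cannot have two indices each exceeding half the total, at most one index has $h_i\le 1$, and this excludes case~(3), whose hypothesis requires the two smallest diagonal entries to both be at most $1$. Splitting on the number of small $h_i$: if all $h_i>1$, then the two smallest diagonal values exceed $1$ and we are in case~(2); if exactly one is small, say $h_1\le 1<h_j$ for all $j\neq 1$, then either $h_1=1$ (again case~(2)) or $0<h_1<1$, which is the regime of case~(4) and requires $\sum_k \tfrac{1}{1-h_k}>1$. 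For the degenerate case $K=1$, the matrix $H=[h_1]$ with $h_1>0$ is trivially positive definite. In every case $H$ is positive definite, hence a $P$-matrix, hence $J_{\pays,K}$ is a $P$-matrix, completing the verification.

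The one genuinely delicate point is the case~(4) inequality: the single positive summand $\tfrac{1}{1-h_1}$ must dominate the several negative summands coming from the $h_j>1$. All the control needed for this is precisely what the two supporting lemmas supply — \Cref{lem:halfw} guarantees a clean sorted structure with at most one ``small'' $h_i$ (and so at most one positive summand), and \Cref{lem:fin} delivers the inequality $\sum_k \tfrac{1}{1-h_k}>1$ itself — so once those lemmas are in hand the theorem collapses to the case bookkeeping above.
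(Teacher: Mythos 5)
Your proposal is correct and follows essentially the same route as the paper: factor the critical minor as $J_{\pays,K}=D\cdot H$, reduce via \Cref{f:pmatfact} to positive definiteness of $H$, and then use \Cref{lem:halfw} to rule out cases (1) and (3) of \Cref{thm:mpd} and \Cref{lem:fin} to discharge the case (4) inequality. Your explicit handling of the boundary rows ($P_0$ condition) and the $K=1$ degenerate case are small additions the paper folds into \Cref{thm:themainresult}, but they do not change the argument.
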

\begin{proof}
By the definition of an interior $P$-matrix function
(\Cref{def:interiorpd}), we consider the restriction to the minor
$J_{\pays,K}=D\cdot H$ at $(\wali[1],\ldots,\wali[K])$, where coordinates in identified (lower) boundaries of weights
space have been discarded (see equation~\eqref{eq:jacminor}).
Because weights $(\wali[1],\ldots,\wali[K])$ are definitively off their respective lower boundaries,
\Cref{lem:pderi} implies that all $h_i\in\left\{h_1,\ldots,h_K\right\}$
are strictly positive.  By \Cref{lem:halfw}, at most one agent
$i$ has $h_i\leq 1$.  Without loss of generality, we can set this
$i=1$.  So there are just two cases:
\begin{enumerate}
\item $h_1\geq1$ and $h_j>1~\forall j\neq 1$, and
\item $0<h_1<1$ and $h_j>1~\forall j\neq 1$.
\end{enumerate}
These are respectively cases (2) and (4) of \Cref{thm:mpd}.  To
satisfy the condition within case (4) of \Cref{thm:mpd},
\Cref{lem:fin} is sufficient.  Thus, the factor $H$ of the
Jacobian minor $J_{\pays,K}$ is positive definite.  Finally, using \Cref{f:pmatfact}, $H$ is a $P$-matrix and the
product $J_{\pays,K}=D\cdot H$ is also a $P$-matrix.
\end{proof}

\subsection{Impossibility Results for Complementarities}
\label{s:hardness}

In this section we show that for a natural generalization of the proportional weights social choice function to an environment with complementarities between agents, the values of the agents cannot necessarily 
be identified from the prices output by the mechanism.


The impossibility result we present will consider a generalization of exponential weights to environments with complementarities.  We will consider the special case where the agents are partitioned and the mechanism can allocate to all agents in any one part, but agents from multiple parts may not be simultaneously allocated. We prove that a natural extension of exponential weights to partition set systems results in a price function $\pays$ that is not one-to-one, by counterexample.  Thus, the price function is generally not invertible: no algorithm can distinguish between two (or more) valuation profiles which give the same prices.

\begin{definition}
The exponential weights social choice function for an $n$-agent
partition set system with parts ${\cal S} = (S_1,\ldots, S_r)$ is given by:
\begin{itemize}
\item $\vali[S] = \sum_{i \in S} \vali$ for $S\in\cal{S}$;
\item $\alloci[S](\vals) = \frac{e^{v_S}}{\sum_{T \in {\cal S}} e^{v_T}}$ for $S\in \cal{S}$;
\item $\alloci(\vals) = \alloci[S](\vals)$ for $i \in S$
\end{itemize}
\end{definition}

The resulting price function corresponding to the exponential weights social choice function for partition set systems is
\ifsoda
\begin{align*}
\payi(\vals)&=\vali-\frac{\int^{\vali}_0 \alloci(z,\valsmi)dz}{\alloci(\vals)}\\
&=\vali-\frac{\sum_{T} e^{v_T}}{e^{v_S}}\int^{\vali}_0 \frac{e^ze^{v_{S \setminus \{i\}}}}{e^ze^{v_{S \setminus\{i\}}}+\sum_{T\neq S} e^{v_T}}dz\\
&=\vali-\frac{\sum_{T} e^{v_T}}{e^{v_S}}\left[\ln \left(\sum\nolimits_{T} e^{v_T}\right) \right.\\
&\qquad\qquad\qquad\qquad \left.- \ln\left(e^{v_{S \setminus \{i\}}}+\sum\nolimits_{T\neq S} e^{v_T}\right)\right]
\end{align*}
\else
\begin{align*}
\payi(\vals)&=\vali-\frac{\int^{\vali}_0 \alloci(z,\valsmi)dz}{\alloci(\vals)}\\
&=\vali-\frac{\sum_{T} e^{v_T}}{e^{v_S}}\int^{\vali}_0 \frac{e^ze^{v_{S \setminus \{i\}}}}{e^ze^{v_{S \setminus\{i\}}}+\sum_{T\neq S} e^{v_T}}dz\\
&=\vali-\frac{\sum_{T} e^{v_T}}{e^{v_S}}\left(\ln \left(\sum\nolimits_{T} e^{v_T}\right) - \ln\left(e^{v_{S \setminus \{i\}}}+\sum\nolimits_{T\neq S} e^{v_T}\right)\right)
\end{align*}
\fi

\noindent The completion of the counterexample is in the following lemma.

\begin{lemma}
\label{lem:thmupperbound}
The price function $\pays$ corresponding to the exponential weights social choice function for partition set systems (with at least one partition containing two or more agents) is not one-to-one.
\end{lemma}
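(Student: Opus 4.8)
The plan is to produce a single partition instance whose price function $\pays$ is not injective, and the tool I would use is the standard fact that a continuously differentiable injection on a connected open subset of $\reals^n$ must have a Jacobian determinant of \emph{constant} sign (by invariance of domain such a map is a homeomorphism onto an open image, so it has a well-defined local degree, which is constant on a connected domain and equals $\mathrm{sign}(\det J_{\pays})$ wherever the Jacobian is nonsingular). It therefore suffices to exhibit a partition instance for which $\det J_{\pays}$ takes both a positive and a negative value on the interior of the value domain. I would take the minimal instance displaying complementarity: three agents with parts $S_1=\{1,2\}$ and $S_2=\{3\}$, so that agents $1$ and $2$ are complements (served together) while competing against agent $3$.

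First I would specialize the displayed price formula to this instance, writing $\alloci[1]=\alloci[2]=e^{\vali[1]+\vali[2]}/(e^{\vali[1]+\vali[2]}+e^{\vali[3]})$ and $\alloci[3]=e^{\vali[3]}/(e^{\vali[1]+\vali[2]}+e^{\vali[3]})$, and then differentiate to get the $3\times3$ Jacobian $J_{\pays}$ in the variables $(\vali[1],\vali[2],\vali[3])$. The qualitative point that drives everything is that, in contrast with the single-item case of \Cref{lem:pderi} where all cross-partials are positive, the cross-partial $\partial\payi[1]/\partial\vali[2]$ between two \emph{same-part} (complementary) agents can be negative; this negative entry is exactly what destroys the $P$-matrix structure that guaranteed injectivity in \Cref{thm:themainresult}, and it is what will flip the sign of the determinant.

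To read off the sign change with minimal algebra I would restrict to the symmetric slice $\vali[1]=\vali[2]$, where $J_{\pays}$ becomes highly structured (its first two rows are permutations of one another) and its determinant factors, with the difference of the in-part self- and cross-partials appearing as one factor and a short residual expression as the other. Evaluating at two explicit interior profiles then gives the two signs: at a profile with moderate in-part values and strong outside competition, e.g. $(\vali[1],\vali[2],\vali[3])=(\ln 2,\ln 2,\ln 5)$, one checks $\det J_{\pays}>0$, whereas at a profile with large in-part values and weak outside competition, e.g. $\vali[1]=\vali[2]$ large with $\vali[3]$ just above $0$ (numerically, $(\ln 10,\ln 10,\ln 1.5)$ already works), the negative same-part cross-partial dominates the residual factor and $\det J_{\pays}<0$. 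Since the interior $(0,\maxval)^3$ is connected, $\det J_{\pays}$ cannot be of one sign.

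Finally I would conclude: if $\pays$ were injective on the interior it would be a homeomorphism onto its open image with constant local degree, forcing $\mathrm{sign}(\det J_{\pays})$ to be constant where nonzero, contradicting the sign change just exhibited; hence $\pays$ is not one-to-one, and (by the fold at the vanishing of the determinant) two distinct valuation profiles map to the same prices. I expect the main obstacle to be the third step: the raw $3\times3$ determinant is unpleasant, so the real work is in choosing the symmetric slice and the two witness profiles so that the sign of $\det J_{\pays}$ follows from a one-line factored expression rather than a brute-force expansion — and in verifying that the negative-sign witness genuinely lies in the interior (values strictly positive) rather than on the boundary $\vali[3]=0$.
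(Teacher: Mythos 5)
Your proposal is correct, but it proves the lemma by a genuinely different route than the paper. The paper's proof is constructive: it builds an instance with two parts of $k=4$ agents each (values $\alpha/4$ and $(10-\alpha)/4$) plus an outside singleton, locates a root $\alpha\neq 5$ of $\payi[1](\alpha)-\payi[5](\alpha)=0$ (read off a plotted curve, ``without showing the explicit calculation''), and then uses the swap symmetry $\alpha\leftrightarrow\beta-\alpha$ to exhibit two distinct valuation profiles with identical price vectors. You instead use a topological obstruction — an injective $C^1$ map on a connected open set has Jacobian determinant of constant sign where nonzero — and exhibit a sign change of $\det J_{\pays}$ on the minimal instance $S_1=\{1,2\}$, $S_2=\{3\}$. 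Your plan does go through: on the slice $\vali[1]=\vali[2]$ the translation invariances $\partial\payi[1]/\partial\vali[2]+\partial\payi[1]/\partial\vali[3]=0$ and $\partial\payi[3]/\partial\vali[1]=\partial\payi[3]/\partial\vali[2]$ give the Jacobian the form $\bigl(\begin{smallmatrix}\alpha&\beta&-\beta\\ \beta&\alpha&-\beta\\ \gamma&\gamma&\delta\end{smallmatrix}\bigr)$ with $\det=(\alpha-\beta)\bigl[\delta(\alpha+\beta)+2\gamma\beta\bigr]$, where $\alpha-\beta>0$ in the interior; evaluating the bracket at your two witnesses gives approximately $+0.11$ at $(\ln 2,\ln 2,\ln 5)$ and $-0.017$ at $(\ln 10,\ln 10,\ln 1.5)$, so the sign change is real. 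The trade-offs: your argument uses a smaller instance (3 agents versus 9) and is certifiable by finite arithmetic at two points rather than by inspecting a graph, but it invokes degree theory and is non-constructive — it proves a collision exists without producing the two indistinguishable profiles, which is the economically meaningful object the paper exhibits. Note also that both proofs, as written, establish the claim only for one specific partition system; extending to an arbitrary partition system with a part of size at least two requires the padding argument (zeroing out the remaining agents) that the paper sketches in the paragraph following the lemma.
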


\begin{figure}[t]
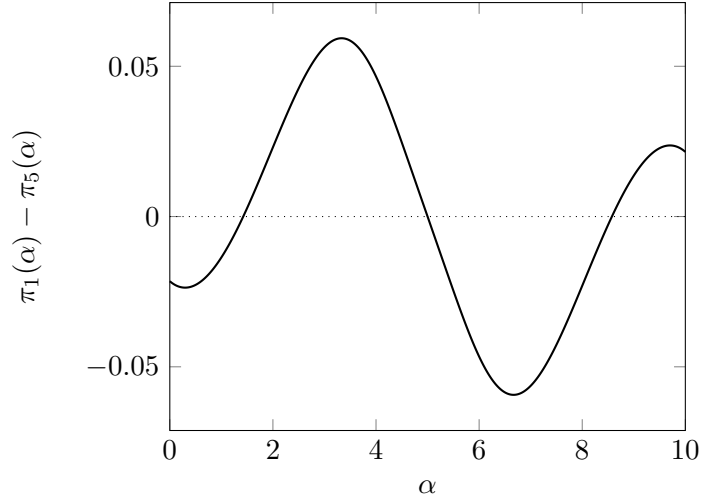

  \centering
   \multipleinversesfigure
  \caption{Graphing the function $[\payi[1](\alpha)-\payi[5](\alpha)]$
    from the proof of \Cref{lem:thmupperbound}.  The zeroes of the
    function parameterize values for agents in $S_1$ and $S_2$ such that all agents across both parts have identical prices, despite the agents of each group having strictly distinct values from each other.  (Note, by design, the curve is rotationally symmetric around the point $(5,0)$.)}
  \label{fig:fpa}
\end{figure}

\begin{proof}

We prove that the price function is not one-to-one (and consequentially by the contrapositive of~\Cref{cor:nash-implementation} its Jacobian is not positive definite).  We first set up a parameterized analysis and then choose the parameters later.

Let there be $k$ agents in set $S_1$ who all have the same valuation $\alpha/k$, and another $k$ agents in set $S_2$ who all have the same valuation $(\beta-\alpha)/k$. Note $\beta=
v_{S_1}+v_{S_2}$. Players in all other sets $S_r$ for $r>2$ have a constant value $\val_{\text{others}}$ and can be summarized by a single parameter $\delta$ by letting $e^\delta=\sum_{r>2} e^{v_{S_r}}$.  Parameters $k,~\alpha,~\beta$
and $\val_{others}$ will be determined later.

The price for agent 1 in part $S_1$ is
\ifsoda
\begin{align*}\payi[1]&=\frac{\alpha}{k}-\frac{e^{\alpha}+e^{\beta-\alpha}+e^\delta}{e^{\alpha}}\Big[\ln (e^{\alpha}+e^{\beta-\alpha}+e^\delta)\\
&\qquad\qquad\qquad\qquad - \ln(e^{(1-1/k)\alpha}+e^{\beta-\alpha}+e^\delta)\Big]
\end{align*}
\else
$$\payi[1]=\frac{\alpha}{k}-\frac{e^{\alpha}+e^{\beta-\alpha}+e^\delta}{e^{\alpha}}[\ln (e^{\alpha}+e^{\beta-\alpha}+e^\delta) - \ln(e^{(1-1/k)\alpha}+e^{\beta-\alpha}+e^\delta)]$$
\fi
The price for agent $k+1$ in part $S_2$ is
\ifsoda
\begin{align*}\payi[k+1]&=\frac{\beta-\alpha}{k}-\frac{e^{\alpha}+e^{\beta-\alpha}+e^\delta}{e^{\beta-\alpha}}\Big[\ln (e^{\alpha}+e^{\beta-\alpha}+e^\delta)\\
&\qquad\qquad\qquad\qquad - \ln(e^{(1-1/k)(\beta-\alpha)}+e^{\alpha}+e^\delta)\Big]
\end{align*}
\else
$$\payi[k+1]=\frac{\beta-\alpha}{k}-\frac{e^{\alpha}+e^{\beta-\alpha}+e^\delta}{e^{\beta-\alpha}}[\ln (e^{\alpha}+e^{\beta-\alpha}+e^\delta) - \ln(e^{(1-1/k)(\beta-\alpha)}+e^{\alpha}+e^\delta)]$$
\fi
We now show that it is possible that player $1$ and player $k+1$ have different valuations but are charged the same prices.  Consider the case $k=4,~\beta=10$, and there is one additional part $S_3$ with a single agent 9 with $\vali[9] = \val_{\text{others}}=4$ inducing $\delta=4$.  Then we can consider the quantity $(\payi[1]-\payi[5])$ as a function of parameter $\alpha$ with $\vali[1] = \ldots =\vali[4] = \alpha/4$ and $\vali[5] = \ldots = \vali[8] = (10-\alpha)/4$.

This function $[\payi[1])\alpha)-\payi[5](\alpha)]$ is graphed in Figure~\ref{fig:fpa}, where we can see that there are three solutions for $\payi[1]=\payi[5]$.
Without showing the explicit calculation, $\payi[1]=\payi[5]$ holds for a value profile where $\vali[1] = \ldots = \vali[4]=\alpha/4\approx 0.375$ and 
$\vali[5] = \ldots = \vali[8] = (\beta-\alpha)/4\approx 2.125$,
and $v_9=4$. In this case, the seller cannot distinguish between $S_1$ and $S_2$ which part has agents with identical values $\approx 0.375$ versus the other part whose agents all have values $\approx 2.125$.
\end{proof}

This lemma can be generalized as follows.  A set system is
downward-closed if all subsets of feasible sets are feasible.  Agents
are substitutes if the set system satisfies the matroid augmentation
property, i.e., for any pair of feasible sets with distinct
cardinalities, there exists an element from the larger set that is not in the
smaller set that can be added to the smaller set and the resulting set
remains feasible.  A set system exhibits complementarities if agents
are substitutes (i.e., there exist sets that fail the augmentation
property).  Exponential weights can be generalized to any set system
by choosing a maximal set with probability proportional to its
exponentiated weight.  The impossibility result above can then be
easily generalized to any set system that exhibits complementarities
by identifying the sets and taking $S_1$ and $S_2$ to be the agents
uniquely in each set (i.e., not in their intersection), and setting
all other agent values to zero.

\section{Computational Methods for Inverting the Price Function}
\label{s:computation}

In \Cref{s:nasheq} we gave the price-inversion
algorithm (\Cref{d:nash-mech}), which is a well-defined, continuous function that inverts the payment identity $\pays$ to map prices $\rezs$ back to values $\vals$ (\Cref{thm:themainresult}).   The algorithm is straightforward except for Step~\ref{step:nash} which requires the computation of a Nash equilibrium in the defined proxy game. In this section we give a simple algorithm for identifying an equilibrium of the proxy game and thus show that the inverse function can be efficiently computed.

\ifauthorscut
First, consider the following sketch of the algorithm.  Intuitively, the technique is to use monotonicity, and find the vector of weights $\wals$ and the total weight $s=\sum_k\wali[k]$ together.

\begin{enumerate}
\setcounter{enumi}{-1}
    \item \label{page:spacei} At most 1 agent can have more than half the weight; for each agent $i$, search a subspace \texttt{Space-i} where $\wali$ is unrestricted (can be more or less than $s/2=1/2\cdot\sum_k\wali[k]$) and all other agents $j$ are ``small" (known to have $\wali[j]\leq s/2$);
    \item Run a binary search on coordinate $s$ as follows:
\begin{enumerate}
    \item For ``small" agents $j$ (other than $i$), ``guess" their weights using functions $\zwali[j](s)$ which are bijections for such $\wali[j]$ known to be small (see \Cref{lem:decreasingweightj});
    \item Guess the weight of agent $i$ as the balance $s - \sum_{j\neq i}\zwali[j](s)$;
    \item Use the 
    price function and guesses of weights for all agents to calculate a guess for the winner-pays price of agent $i$:
    \begin{equation*}
        \zpayi(s) = \payi(s - \sum_{j\neq i}\zwali[j](s)~,~\zwalsmi(s)))
    \end{equation*}
\end{enumerate}
    \item The function $\zpayi(\cdot)$ is monotone increasing in $s$ (see \Cref{lem:increasingstratbar}), and is guaranteed to intersect agent $i$'s true price $\rezi$ if and only if the solution $\wals^*$ of true weights is in \texttt{Space-i}.
\end{enumerate}

\noindent\fi The algorithm for solving the proxy game is enabled by two
observations.  First, for player $i$, the sum of weights $s=\sum_k
\wali[k]$ summarizes everything that needs to be known about the other
players and this observation leads to a many-to-one reduction in the
dimension of search space.  Consequently, the price function can be
rewritten as a function $\bpayi(s,\wali)$.\footnote{See
  equation~\eqref{eqn:stratiofswi} in~\Cref{a:supportingalg} for its formal definition.}
Second, because the price function $\pays$ is invertible, the sum $s$
is uniquely determined from the prices.

Obviously at most one agent can have strictly more than half the total
weight $s$.  For the rest of this section, without loss of generality
we fix agent $i^*$ to mean that $\wali[i^*]$ is not restricted and
$\wali \leq s/2$ for all $i\neq i^*$.

Fix observed input prices $\rezs$.  For any agent $i\neq i^*$, consider the set of points $(s,\wali)$ for which $\bpayi$ outputs $\rezi$.  
Our first key Lemma~\ref{lem:decreasingweightj} (below) will show that, restricting to the space $\wali \leq s/2$, this set of points can be interpreted as a real-valued, monotone decreasing function of $s$, denoted $\zwali(\cdot)$.
With this property holding for all agents other than $i^*$, we can
express the price function for agent $i^*$ with dependence on
prices $\rezsmi[i^*]$ and sum $s$:\footnote{Regarding functions $\wali^{\rezs}$ and $\zpayi[i^*]$.  We write them both parameterized by vector $\rezs$ to demark them with a simple notation, in a common way because their usage is always related.  Note however, $\rezs$ implies an over-dependence on parameters.  $\wali^{\rezs}$ only uses $\rezi$, and $\zpayi[i^*]$ uses all of $\rezsmi[i^*]$ but not $\rezi[i^*]$.}

\begin{equation}
\label{eqn:estbid}
\zpayi[i^*](s) \vcentcolon=\payi[i^*](\max\{s-\sum\nolimits_{i\neq i^*} \zwali(s),\wali[i^*](0) \}, \zwals_{-i^*}(s))
\end{equation}
where, for guess of total weight $s$, the quantity $s-\sum_{i\neq i^*} \zwali(s)$
assigns an intermediate guess of $\wali[i^*]$ as the ``balance'' of
the quantity $s$ having subtracted the implied weights of the
``small'' agents for guess $s$.\footnote{When the guess $\wali[i^*]=s-\sum\nolimits_{i\neq i^*} \zwali(s)$ is irrationally small or even negative, the structure of the problem allows us to round it up to constant $\wali[i^*](0)$, sufficiently preserving monotonicity.  See the proof of~\Cref{lem:increasingstratbar} in~\Cref{a:b:proofs}.}  Our second key
\Cref{lem:increasingstratbar} (below) will show that, on the range of $s$ for
which it is well-defined, the function $\zpayi[i^*]$ is strictly
monotonically increasing.

This setup suggests a natural binary search procedure.  For some agent
$i^*$ and small initial guess of $s$, the implied price for $i^*$ is smaller than the observed input, i.e. $\zpayi[i^*](s)<\rezi[i^*]$.  A large guess of $s$ implies too big of a price for $i^*$, and monotonicity will then guarantee a crossing.  The algorithm has the
following steps:

\begin{enumerate}
\item Find an agent $i^*$ by iteratively running the following for each fixed assignment of agent $i \in \left\{1,\ldots,n\right\}$:
\begin{enumerate}
\item temporarily set $i^*=i$;
\item determine the range of $s$ on which $\zpayi[i^*]$ is well-defined and searching is appropriate;\footnote{In the proper algorithm and proof, we will give better bounds on the range of the search; for now, as a simple indication that bounds exist, note that there exists a solution for some appropriate $i^*$ within the general bounds on $s$ as $\sum_k \wali[k](\rezi[k])\leq \sum_k \wali[k](\vali[k]) =s\leq \sum_k \wali[k](h)$ for known $\rezi$ and max value $\maxval$, because $\rezi\leq \vali \leq h$.}
\item if this range of $s$ is non-empty, permanently fix $i^* = i$ and break the for-loop;
\end{enumerate}
\item use the monotonicity of $\zpayi[i^*]$ to binary search on $s$ for the true $s^*$, converging $\zpayi[i^*](s)$ to $\rezi[i^*]$;
\item when the binary search has been run to satisfactory precision and reached a final estimate $\tilde{s}$,
output weights $\tilde{\wals}=(\tilde{s}-\sum_{i\neq i^*} \zwali(\tilde{s}),\zwalsmi[i^*](\tilde{s}))$ which invert to values $\evals$ via respective $\vali(\cdot)$ functions.
\end{enumerate}

\noindent The rest of this section formalizes our key results.

\subsection{Computation through Total Sum Weights}
\label{s:compsum}


The following theorem claims correctness of the algorithm, and is the
main result of this section.

\begin{theorem}
\label{thm:simplealg}
Given weights $\wals$ and payments $\rezs = \pays(\wals)$ according to
a proportional weights social choice function, the algorithm
identifies weights $\tilde{\wals}$ to within $\epsilon$ of the true
weights $\wals$ in time polynomial in the number of agents $n$, the
logarithm of the ratio of high to low weights $\max\nolimits_i\ln
(\wali(h)/\wali(0))$, and the logarithm of the desired precision $\ln
1/\epsilon$.  
\end{theorem}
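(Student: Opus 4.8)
The plan is to prove correctness first, anchored on uniqueness of the inverse (\Cref{thm:themainresult}), with the two monotonicity lemmas (\Cref{lem:decreasingweightj} and \Cref{lem:increasingstratbar}) doing the structural work and \Cref{lem:pderi} supplying the derivative signs needed later for error analysis. Write $s^* = \sum_k \wali[k]$ for the true total weight. Since at most one agent can hold strictly more than $s^*/2$, there is an index $i^*$ with $\wali \le s^*/2$ for every $i \ne i^*$ (if all agents are at most $s^*/2$, every index qualifies). First I would argue that Step~1 accepts such an $i^*$: for a wrong choice the restriction $\wali \le s/2$ imposed on the true large agent is globally infeasible, so the range of $s$ declared ``appropriate'' in Step~1(b) (made precise in \Cref{a:supportingalg}) is empty, whereas for an admissible $i^*$ it is nonempty. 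Fixing such an $i^*$, at $s = s^*$ the functions $\zwali$ of \Cref{lem:decreasingweightj} recover the true small weights: each $\zwali(s^*)$ is the unique $\wali \le s^*/2$ solving $\bpayi(s^*,\wali)=\rezi$, and the true $\wali$ is such a solution because $\rezs = \pays(\wals)$. Hence the balance $s^* - \sum_{i\ne i^*}\zwali(s^*)$ equals the true $\wali[i^*]$, and therefore $\zpayi[i^*](s^*) = \payi[i^*](\wals) = \rezi[i^*]$. So $s^*$ is a crossing of $\zpayi[i^*](\cdot)$ with the observed price $\rezi[i^*]$; since \Cref{lem:increasingstratbar} makes $\zpayi[i^*]$ strictly increasing, it is the \emph{unique} crossing, and reinverting the recovered weights through the $\vali(\cdot)$ maps yields $\vals$. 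Because \Cref{thm:themainresult} guarantees $\pays$ is one-to-one, this profile is the only one consistent with $\rezs$, so the output is correct.

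To run the outer search I must exhibit an interval of $s$ containing $s^*$ on which $\zpayi[i^*]$ is defined. The coarse bounds $\sum_k \wali[k](\rezi[k]) \le s^* = \sum_k \wali[k](\vali[k]) \le \sum_k \wali[k](\maxval)$, which follow coordinatewise from $\rezi \le \vali \le \maxval$ and monotonicity of each weight function, give such an interval; strict monotonicity of $\zpayi[i^*]$ then places $\zpayi[i^*] < \rezi[i^*]$ at the low endpoint and $\zpayi[i^*] > \rezi[i^*]$ at the high endpoint, bracketing a sign change. The degenerate regime where the balance dips below $\wali[i^*](0)$ is absorbed by the $\max\{\cdot,\wali[i^*](0)\}$ clamp in~\eqref{eqn:estbid}, which preserves monotonicity (as used for \Cref{lem:increasingstratbar}) and so does not break the bracket.

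For the running time, the procedure is a nested binary search: an outer search over $s$, each iteration of which evaluates $\zpayi[i^*](s)$ by computing the $n-1$ values $\zwali(s)$, each itself a one-dimensional monotone inversion of $\bpayi(s,\cdot)=\rezi$ solved by an inner binary search. Each search runs over an interval whose endpoints lie between the extreme weights $\wali(0)$ and $\wali(\maxval)$, so (searching effectively in the multiplicative/log scale of the weights) the number of iterations to a given precision is logarithmic in the weight range and in the precision, and the outer loop, the at-most-$n$ inner searches, and the at-most-$n$ trial assignments of $i^*$ compound to a running time polynomial in $n$, in $\max_i \ln(\wali(\maxval)/\wali(0))$, and in $\ln 1/\epsilon$. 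The principal technical obstacle is the \emph{propagation of approximation error through the two search layers}: I must show that driving $\zpayi[i^*](\tilde s)$ to within a small tolerance of $\rezi[i^*]$ forces $|\tilde s - s^*|$ small, and that $|\tilde s - s^*|$ small forces every $|\zwali(\tilde s) - \wali|$ and the balance to be within $\epsilon$. This is a Lipschitz/sensitivity argument driven by \Cref{lem:pderi}: the strictly positive self- and cross-partials bound the slopes of $\zwali(\cdot)$ and $\zpayi[i^*](\cdot)$ away from $0$ and $\infty$ on the bracketing interval, so a target output precision $\epsilon$ is achieved by setting the inner tolerance fine enough that its error never flips a sign decision of the outer search and the outer tolerance as a fixed polynomial function of $\epsilon$ and the weight-range bounds. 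Carrying these interacting tolerances cleanly through the composition is the delicate part; once it is done, correctness together with the iteration counts above delivers the claimed bound.
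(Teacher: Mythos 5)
Your architecture matches the paper's: fix an admissible $i^*$, recover the small agents' weights via the monotone level-set functions $\zwali$, assign $i^*$ the balance, bracket $s^*$ and binary search using monotonicity of $\zpayi[i^*]$, and invoke uniqueness of $\pays^{-1}$ for correctness. Two points, however, are not actually delivered by your argument.

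First, the quantitative heart of the running-time claim is missing. You assert that the strictly positive partials of \Cref{lem:pderi} ``bound the slopes of $\zwali(\cdot)$ and $\zpayi[i^*](\cdot)$ away from $0$ and $\infty$.'' Positivity alone does not do this: the slope of the level set is $d\zwali/ds = 1/(1-h_i)$, and this blows up as $h_i \to 1^+$; nothing in \Cref{lem:pderi} separates $h_i$ from $1$. The paper's proof rests on a separate estimate (its \Cref{lem:slopebounds}) showing $h_i \ge 1 + 2\wali(0)/\wali$ in the region $\wali \le s/2$, whence $\lvert 1/(1-h_i)\rvert \le \wali(h)/(2\wali(0))$. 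This bound is precisely where the $\max_i \ln(\wali(h)/\wali(0))$ term in the theorem comes from, and it is also what lets one halve the uncertainty in each $\wali[k]$ (and, via the additive range argument for the balance, in $\wali[i^*]$) every time the $s$-interval is halved. Without it, ``driving $\zpayi[i^*](\tilde s)$ close to $\rezi[i^*]$ forces $\lvert\tilde s - s^*\rvert$ small forces the weights close'' is unsupported, and the claimed iteration count does not follow.

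Second, your bracketing interval $\bigl[\sum_k \wali[k](\rezi[k]),\ \sum_k \wali[k](h)\bigr]$ need not lie in the domain of $\zpayi[i^*]$: each $\zwali[j]$ is defined only for $s \ge \minsumi[j]$ (below that, no point of the level set of $\bpayi[j]$ with $\wali[j] \le s/2$ exists), so $\zpayi[i^*]$ may simply be undefined at your lower endpoint. This is why the paper's algorithm has a pre-processing step computing each $\minsumi[j]$ by an auxiliary binary search along the lines $\wali = s/2$ and $\wali = \wali(h)$, and takes $s_L = \max_{j\ne i^*}\minsumi[j]$; establishing that this $s_L$ is a correct lower bound (and that the true $s$ lies in $[s_L, s_H]$) occupies its own lemmas. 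Relatedly, your claim that a wrong choice of $i$ yields an \emph{empty} search range is not quite right --- the range $[s_L,s_H]$ can be nonempty for a wrong $i$; what fails is the pair of endpoint validation checks $\zpayi[i](s_L) \le \rezi \le \zpayi[i](s_H)$, which is how the algorithm actually rejects those iterations.
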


A major object of interest for this sequence of results is the price level set defined by $\mathcal{Q}^{\rezs}_i = \left\{(s,\wali)\,|\,\bpayi(s,\wali)=\rezi\right\}$, i.e., all of the $(s,\wali)$ pairs mapping to the price $\rezi$ under $\bpayi$, and also in particular its subset $\mathcal{P}^{\rezs}_i=\left\{(s,\wali)\,|\,\bpayi(s,\wali)=\rezi ~\text{and}~ \wali\leq s/2\right\}\subseteq\mathcal{Q}^{\rezs}_i$ which restricts the set to the region where $\wali$ is at most half the total weight $s$.  Define $\minsumi = \min \{s:(s,\wali) \in \mathcal{P}^{\rezs}_i\}$ as the lower bound on the sum $s$ on which the set $\mathcal{P}^{\rezs}_i$ is supported.  These quantities are depicted in \Cref{fig:pricelevel}.

 \begin{figure*}
  \centering
\pricelevelfigure
\caption{The price level set curve $\mathcal{Q}^{\rezs}_i =
  \left\{(s,\wali):\bpayi(s,\wali)=\rezi\right\}$ (thick, gray,
  dashed), is decreasing below the $\wali = s/2$ line
  (\Cref{lem:decreasingweightj}) where it is defined by its subset
  $\mathcal{P}^{\rezs}_i$ (thin, black, solid).  It is bounded above
  by the $\wali = s$ line (trivially as $s$ sums over all weights) and
  the $\wali = \wali(h)$ line (the maximum weight in the support of
  the values), and below by the $\wali = \wali(0)$ line which we have
  assumed to be strictly positive.  $\minsumi$ is the minimum weight-sum consistent
  with observed price $\rezi$ and weights $\wali \leq s/2$.}
  \label{fig:pricelevel}
  \ifsodacut
  \label{fig:pricelevel2}
  \fi
\end{figure*}

We give the formal statements of the two most critical lemmas
supporting \Cref{thm:simplealg}.

\begin{lemma}
\label{lem:decreasingweightj}
The price level set $\mathcal{Q}^{\rezs}_i$ is a curve; further,
restricting $\mathcal{Q}^{\rezs}_i$ to the region $\wali \leq s/2$,
the resulting subset $\mathcal{P}^{\rezs}_i$ can be written as
$\{(s,\zwali(s)):s \in [\minsumi,\infty)\}$ for a real-valued decreasing function $\zwali$ mapping sum $s$ to a weight $\wali$ that is parameterized by the observed price $\rezi$.
\end{lemma}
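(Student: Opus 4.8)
The plan is to reduce the entire statement to the signs of the two partial derivatives of the reduced price function $\bpayi(s,\wali)$, which I can read off from \Cref{lem:pderi} and \Cref{lem:halfw}. First I would note that $\payi$ depends on the other agents only through $\sigma \coloneqq s - \wali = \sum_{k\neq i}\wali[k]$, so $\bpayi(s,\wali)$ is obtained from the function $(\wali,\sigma)\mapsto\payi$ by substituting $\sigma = s-\wali$. The chain rule then expresses the partials of $\bpayi$ in terms of the already-computed partials \eqref{eqn:partialsame} and \eqref{eqn:partialcross} of $\payi$: holding $\wali$ fixed, $\partial\bpayi/\partial s = \partial\payi/\partial\wali[j]$, and holding $s$ fixed, $\partial\bpayi/\partial\wali = \partial\payi/\partial\wali - \partial\payi/\partial\wali[j]$, where I use that all cross-partials $\partial\payi/\partial\wali[j]$ coincide (\Cref{lem:pderi}). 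Dividing out the positive cross-partial and recalling the ratio $h_i$ from \eqref{eq:hi}, this is $\partial\bpayi/\partial\wali = (\partial\payi/\partial\wali[j])\,(h_i-1)$.

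Next I would carry out the sign analysis. By \Cref{lem:pderi} the cross-partial $\partial\payi/\partial\wali[j]$ is strictly positive at every point off the lower boundary $\wali=\wali(0)$, so $\partial\bpayi/\partial s>0$ throughout. For the own-weight partial, the contrapositive of \Cref{lem:halfw} says that $\wali\leq s/2$ forces $h_i>1$; hence $\partial\bpayi/\partial\wali = (\partial\payi/\partial\wali[j])(h_i-1)>0$ on the restricted region that defines $\mathcal{P}^{\rezs}_i$. Two conclusions follow. Since $\partial\bpayi/\partial s>0$ holds everywhere, the gradient of $\bpayi$ never vanishes, so $\mathcal{Q}^{\rezs}_i$ is a regular level set and therefore a curve (indeed $s$ is determined as a function of $\wali$ along it). And on $\mathcal{P}^{\rezs}_i$, strict monotonicity $\partial\bpayi/\partial\wali>0$ shows that for each $s$ there is at most one $\wali\leq s/2$ with $\bpayi(s,\wali)=\rezi$; this is the function $\zwali$, and the implicit function theorem gives $\frac{\dd\zwali}{\dd s} = -\,\frac{\partial\bpayi/\partial s}{\partial\bpayi/\partial\wali}<0$, so $\zwali$ is strictly decreasing.

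Finally I would identify the domain as $[\minsumi,\infty)$. Continuity and strict monotonicity of $\zwali$ make its domain an interval; its least element is $\minsumi$ by definition, attained where the curve $\mathcal{Q}^{\rezs}_i$ crosses the line $\wali=s/2$ and leaves $\mathcal{P}^{\rezs}_i$. For the upper end I would let $s\to\infty$ with $\wali$ fixed: since $\tfrac{s}{\wali}\cdot\tfrac{z}{s-\wali+z}\to\tfrac{z}{\wali}$, dominated convergence gives $\bpayi(s,\wali)\to\vali(\wali)-\tfrac1{\wali}\int_{\wali(0)}^{\wali}z\,\vali'(z)\,\dd z$, a finite large-market limit, so the level-$\rezi$ solution tends to a fixed $\wali^\star>\wali(0)$ while $s/2\to\infty$ keeps the constraint $\wali\leq s/2$ slack; thus $\mathcal{P}^{\rezs}_i$ reaches arbitrarily large $s$. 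The crux of the whole argument is the sign of $\partial\bpayi/\partial\wali$: its positivity is exactly the inequality $h_i>1$, which \Cref{lem:halfw} guarantees only when $\wali\leq s/2$. Outside this region the self-partial can vanish or reverse sign, the level curve bends back, and $\mathcal{Q}^{\rezs}_i$ is no longer a graph over $s$; this is precisely why the restriction to $\mathcal{P}^{\rezs}_i$ (and peeling off the single agent $i^*$ that may carry more than half the weight) is essential.
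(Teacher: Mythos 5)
Your proof is correct and follows essentially the same route as the paper's: the paper proves this lemma as a special case of \Cref{thm:smallri}, whose proof uses exactly your chain-rule identities $\partial\bpayi/\partial s = \partial\payi/\partial\wali[j]$ and $\partial\bpayi/\partial\wali = (\partial\payi/\partial\wali[j])(h_i-1)$, the sign analysis via \Cref{lem:pderi} and \Cref{lem:halfw}, and the implicit-function derivative $1/(1-h_i)$. The only small differences are at the endpoints: the paper invokes \Cref{thm:smallri2} to establish non-emptiness of $\mathcal{P}^{\rezs}_i$ and explicitly covers the edge case where $\minsumi$ is attained on the $\wali=\wali(h)$ boundary rather than where the curve crosses the $\wali=s/2$ line, while you give a somewhat more explicit dominated-convergence argument for the behavior as $s\to\infty$.
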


\begin{lemma}
\label{lem:increasingstratbar}
For any agent $i^*$ and $s\in [\max_{j\neq i^*}\minsumi[j],\infty)$, function $\zpayi[i^*]$ is weakly increasing; specifically, $\zpayi[i^*]$ is constant when $s-\sum\nolimits_{i\neq i^*} \zwali(s)\leq\wali[i^*](0)$ and strictly increasing otherwise.
\end{lemma}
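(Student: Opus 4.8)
The plan is to differentiate $\zpayi[i^*]$ along $s$ and show the derivative is $0$ on the clamped portion of the domain and strictly positive on the rest. On the clamped portion, where the balance $s-\sum_{i\neq i^*}\zwali(s)\le \wali[i^*](0)$, definition~\eqref{eqn:estbid} evaluates $\payi[i^*]$ at $\wali[i^*]=\wali[i^*](0)$; by~\eqref{eqn:bidfn:w} the integral then vanishes and $\vali[i^*](\wali[i^*](0))=0$, so $\zpayi[i^*](s)=0$ is constant, giving the first claim. It then remains to treat the regime $\wali[i^*](s):=s-\sum_{i\neq i^*}\zwali(s)>\wali[i^*](0)$, on which $\zpayi[i^*](s)=\payi[i^*]\big(\wali[i^*](s),\zwals_{-i^*}(s)\big)$ is differentiable, and to show it is strictly increasing there.

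First I would differentiate the whole weight profile $\wals(s)$ along $s$. Since the cross-partials of $\payi[i^*]$ are all equal and every partial is positive (\Cref{lem:pderi}), the chain rule gives
\[ \frac{\dd \zpayi[i^*]}{\dd s} \;=\; \frac{\partial \payi[i^*]}{\partial \wali[i^*]}\,\frac{\dd \wali[i^*]}{\dd s} \;+\; \frac{\partial \payi[i^*]}{\partial \wali[j]}\sum_{i\neq i^*}\frac{\dd \zwali}{\dd s}. \]
To evaluate $\dd\zwali/\dd s$ for a small agent $i\neq i^*$, I differentiate its defining identity $\bpayi(s,\zwali(s))=\rezi$ from \Cref{lem:decreasingweightj}. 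Translating \eqref{eqn:partialsame}--\eqref{eqn:partialcross} into the $(s,\wali)$ coordinates shows that $\partial\bpayi/\partial s$ equals agent $i$'s cross-partial while $\partial\bpayi/\partial \wali$ equals its self-partial minus its cross-partial, so implicit differentiation yields $\dd\zwali/\dd s=-1/(h_i-1)$ with $h_i$ as in~\eqref{eq:hi}. Because $(s,\zwali(s))\in\mathcal{P}^{\rezs}_i$ forces $\zwali(s)\le s/2$, \Cref{lem:halfw} gives $h_i>1$, so each such derivative is well defined and negative and $\dd\wali[i^*]/\dd s=1-\sum_{i\neq i^*}\dd\zwali/\dd s\ge 1>0$. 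Substituting collapses the display to
\[ \frac{\dd \zpayi[i^*]}{\dd s} \;=\; \frac{\partial \payi[i^*]}{\partial \wali[i^*]} \;+\; \left(\frac{\partial \payi[i^*]}{\partial \wali[i^*]}-\frac{\partial \payi[i^*]}{\partial \wali[j]}\right)\sum_{i\neq i^*}\frac{1}{h_i-1}. \]

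Finally I analyze the sign using $h_{i^*}=\big(\partial\payi[i^*]/\partial\wali[i^*]\big)/\big(\partial\payi[i^*]/\partial\wali[j]\big)$. If $h_{i^*}\ge 1$ the parenthesized term is nonnegative and the strictly positive self-partial makes the derivative strictly positive. Otherwise $h_{i^*}<1$, which by \Cref{lem:halfw} is the unique agent with ratio below $1$, so that every other $h_i>1$; dividing the display by the positive cross-partial reduces strict positivity to $\frac{h_{i^*}}{1-h_{i^*}}>\sum_{i\neq i^*}\frac{1}{h_i-1}$, which rearranges exactly into $\sum_k \frac{1}{1-h_k}>1$ and is therefore supplied by \Cref{lem:fin}. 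Hence $\dd\zpayi[i^*]/\dd s>0$ throughout this regime. I expect the crux to be this final sign analysis in the $h_{i^*}<1$ case: the routine part is the coordinate bookkeeping turning \eqref{eqn:partialsame}--\eqref{eqn:partialcross} into $\dd\zwali/\dd s=-1/(h_i-1)$, while the real content is recognizing that the resulting inequality is precisely the positive-definiteness criterion of \Cref{thm:mpd}(4) already established in \Cref{lem:fin}.
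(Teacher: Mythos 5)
Your proof is correct and follows essentially the same route as the paper's: split at the threshold where the balance crosses $\wali[i^*](0)$, apply the chain rule with $\frac{\dd\zwali}{\dd s}=\frac{1}{1-h_i}$, and case on $h_{i^*}\gtrless 1$ with \Cref{lem:fin} supplying the inequality $\sum_k\frac{1}{1-h_k}>1$ in the hard case. The only cosmetic difference is that you normalize by the cross-partial rather than factoring out the self-partial, but the resulting inequality is the same one the paper reaches.
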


A key step in the proof of \Cref{lem:increasingstratbar} will depend
on \Cref{lem:fin}.  The $\frac{1}{1-h_k}$ terms
in the statement of \Cref{lem:fin} are realized to correspond to derivatives
of $\zwali[k]$ functions.  Consequently, the correctness of the
algorithm critically relies on the proof of a unique inverse to the price function.  

We give the proofs of \Cref{thm:simplealg},
\Cref{lem:decreasingweightj}, and \Cref{lem:increasingstratbar} in
\Cref{a:b:proofs}.  Preceding these proofs is supporting material: \Cref{a:b:structure} gives a detailed analysis of the
structure of the search space, and \Cref{a:b:alg} gives the description of the binary search algorithm with full details.

\bibliographystyle{apalike}
\bibliography{bib}

\begin{appendix}

\section{Supporting Material for \Cref{s:nasheq}}

\ifsodacut
For the purposes of space, calculations and proofs for appendix sections~\ref{a:pderi},~\ref{as:hmatrixpsd}, and~\ref{as:thmmpd} appear only in the full version of the paper.
\fi

\subsection{Proofs of \Cref{thm:gnext} and \Cref{thm:gameunique}}
\label{a:gnext}

Before getting to results, we define a {\em dimensionally-reduced function} by a parameterized procedure.  This procedure will be useful as a sub-routine in multiple proofs.

\begin{definition}
\label{def:dimreduct}
Given a function $f:\Omega\to\reals^n$, two points $\bomega^1,\bomega^2$ in compact and convex product space $\Omega\subset\reals^n$, and a set $K$ of dimensions with identified lower boundaries with cardinality $k=|K|$.  Define a {\em dimensionally-reduced function} $D:\Omega_{-K}\to\reals^{n-k}$ where
\begin{itemize}
    \item $\Omega_{-K}\subset\Omega$ is the projection of product space $\Omega$ into dimensions not in $K$, and further the lower bounds of each remaining dimension $i$ is (weakly) increased to $\min\{\omega^1_i,\omega^2_i\}$ respectively, and analogously each upper bound decreased to $\max\{\omega^1_i,\omega^2_i\}$; 
    \item $D(\bomega_{-K})=f(\bomega_{-K},\bomega_K=\mathbf{c}_K)$ for (vector) $\mathbf{c}_K$ the fixed inputs of (removed dimension) identified boundaries, input to $f$ as constant parameters.
\end{itemize}
\end{definition}

\noindent 
We restate and prove \Cref{thm:gnext} here.  Recall it is an extension of \Cref{thm:psdunique} \citep{GN-65}.  
Its proof depends on \Cref{lem:identify} given immediately following.

\begin{numberedtheorem}{\ref{thm:gnext}}
If function $f:\Omega\to\reals^n$ on compact and convex product domain $\Omega\subset\reals^n$ is an interior $P$-matrix function (Definition~\ref{def:interiorpd}), then it is one-to-one, and therefore invertible on its image.
\end{numberedtheorem}
\begin{proof}
By contradiction, assume there exist two distinct inputs $\bomega^1,~\bomega^2$ such that $f(\bomega^1)=f(\bomega^2)$.  With equal outputs under $f$ it must be that 
$\bomega^1,~\bomega^2$ exist in the same set of identified boundaries because 
by \Cref{lem:identify} (given next), 
an input in each of these identified boundaries outputs a unique constant in its respective dimension.  Let the common set of dimensions in identified boundaries be $K$.  We consider dimensionally-reduced function $D$ applied to $f, ~\bomega^1,~\bomega^2$ and set $K$ (of \Cref{def:dimreduct}).  $D$ now meets all of the conditions of \Cref{thm:psdunique} (in particular $D$ has Jacobian as a $P$-matrix everywhere because no coordinate of $\bomega^1_{-K}$ or $\bomega^2_{-K}$ is in the original identified boundaries, and $f$ is an interior $P$-matrix function which must be a strict $P$-matrix function when excluding identified boundaries by definition).  Therefore $D$ is one-to-one on its restricted domain, which includes $\bomega^1_{-K},~\bomega^2_{-K}$, a result which extends to analysis under $f$ such that $f$ must also be one-to-one.  I.e., it must be that $f(\bomega^1)\neq f(\bomega^2)$ in some coordinate outside the set $K$, giving the contradiction.
\end{proof}

\begin{lemma}
\label{lem:identify}
If a function $f:\Omega\to\reals^n$ on domain $\Omega\subset\reals^n$ is an interior $P$-matrix function (of Definition~\ref{def:interiorpd}), then function $f_i$ evaluates to constant $f_i(c_i,\cdot)$ on an identified boundary in dimension $i$ with coordinate $c_i$ if and only if the input $\bomega$ to $f_i$ has $\omega_i = c_i$.
\end{lemma}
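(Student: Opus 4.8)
The plan is to prove the biconditional in two directions, writing $C := f_i(c_i,\cdot)$ for the value of $f_i$ on the identified boundary; by the first bullet of \Cref{def:identbound} the quantity $f_i(c_i,\bomega_{-i})$ does not depend on $\bomega_{-i}$, so $C$ is a single well-defined constant. The easy direction, $\omega_i = c_i \Rightarrow f_i(\bomega) = C$, is then immediate: an input lying in the boundary has $f_i(\bomega) = f_i(c_i,\bomega_{-i}) = C$ by this constancy.

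For the converse, $f_i(\bomega) = C \Rightarrow \omega_i = c_i$, I would fix $\bomega_{-i}$ and study the single-coordinate restriction $\omega_i \mapsto f_i(\omega_i,\bomega_{-i})$ on $[a_i,b_i]$. The aim is to show this map is strictly monotone, hence injective, so that it attains the value $C$ only at $\omega_i = c_i$; this forces any $\bomega$ with $f_i(\bomega) = C$ to have $\omega_i = c_i$. To get strict monotonicity I would extract positivity of $\partial f_i/\partial \omega_i$ off the boundary from the interior $P$-matrix hypothesis: at a point whose $i$-th coordinate is off its boundary, dimension $i$ survives into the reduced principal minor of $J_f$, which is a strict $P$-matrix by \Cref{def:interiorpd}, and by downward-closedness of $P$-matrices (\Cref{f:pmatfact}(1)) its $1\times 1$ principal minor $\partial f_i/\partial \omega_i$ is strictly positive. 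Integrating this positive partial along the fiber, with continuity extending strict monotonicity to the closed interval, yields injectivity in $\omega_i$; the orientations $c_i = a_i$ and $c_i = b_i$ are handled identically. The second bullet of \Cref{def:identbound} already records this uniqueness directly, giving a one-line alternative, but the monotonicity route is preferable here since it derives the uniqueness purely from the $P$-matrix structure.

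The step I expect to require the most care is verifying $\partial f_i/\partial \omega_i > 0$ at every off-boundary point along the fiber, including points where coordinates other than $i$ happen to sit on their own identified boundaries. There the reduced minor removes precisely those other coordinates while retaining row and column $i$ (since $\omega_i \neq c_i$ at such a point), so its strict $P$-matrix property, and hence the positivity of the retained diagonal entry $\partial f_i/\partial \omega_i$, still applies. Making this independence from the status of the remaining coordinates explicit is the crux that lets the fiberwise integration, and thus the injectivity, go through uniformly.
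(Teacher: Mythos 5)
Your proposal is correct and follows essentially the same route as the paper's proof: the forward direction from the constancy clause of \Cref{def:identbound}, and the converse by extracting strict positivity of the diagonal self-partial $\partial f_i/\partial\omega_i$ off the boundary (as a $1\times 1$ principal minor of the strict $P$-matrix block) and integrating along the fiber in $\omega_i$. Your explicit handling of points where \emph{other} coordinates sit on their own identified boundaries is a slightly more careful articulation of the same step the paper treats tersely, so nothing further is needed.
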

\begin{proof}
Without loss of generality, assume $c_i = a_i$ the lower boundary in dimension $i$, with the upper boundary argument by symmetry.  For sufficiency, note that by definition of an identified boundary (\Cref{def:identbound}), all cross-partials on the function $f_i$ (evaluated at the identified boundary) are identically 0.  
I.e., $\partial f_i/\partial \omega_j(a_i,\bomegasmi) = 0$ for all $j\neq i$ and for all $\bomegasmi$.  Therefore $f_i(a_i,\bomegasmi)$ is a constant.

For necessity, consider an input $(d_i,\bomegasmi)$ with $d_i>a_i$ ``off" the identified boundary.  Evaluated at all inputs $\omega_i\in\left(a_i,d_i\right]$, the self-partial $\partial f_i/\partial \omega_i>0$ is necessary by the assumption that $f$ is an internal $P$-matrix function, because an implication of its definition is that, when $\omega_i$ is not in the lower boundary, the diagonal element of the Jacobian $\partial f_i/\partial \omega_i$ at index $(i,i)$ must be strictly positive (because diagonal elements are principal minors with dimension $1\times 1)$.  

Therefore $f_i(d_i,\bomega_{-i})=f_i(a_i,\bomega_{-i})+\int_{a_i}^{d_i}\left[\partial f_i /\partial \omega_i(z,~\bomegasmi) \right]dz>f_i(a_i,\bomega_{-i})$ because the integral of a strictly positive function is strictly positive when $d_i>a_i$.
\end{proof}

\noindent Note, an implication of \Cref{lem:identify} in the context of our price functions is that we observe agent $i$ to have price $\rezi=0$ if and only if agent $i$ had minimal weight $\wali(0)$, trivially implying value $\vali=0$.

The rest of this appendix section is devoted to proving \Cref{thm:gameunique}.  Additionally, we develop the following corollary, which should be of independent interest to the game theory community.

\begin{corollary}
\label{thm:nmunique}
A game with a compact and convex product action space and pseudogradient that is an $N$-matrix function has a unique Nash equilibrium, which is pure.
\end{corollary}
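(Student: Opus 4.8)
The plan is to obtain the corollary from \Cref{thm:gameunique} after supplying the one hypothesis that statement assumes but the corollary omits --- a point at which the pseudogradient vanishes --- and, crucially, after accounting for the fact that the equilibrium may lie on the boundary of the action box, where the pseudogradient need not be zero. First I would record that a strict $N$-matrix function is automatically an interior $N$-matrix function in the sense of \Cref{def:interiorpd}: a strict $N$-matrix is in particular an $N_0$-matrix, and by downward closure (\Cref{f:pmatfact}(1), applied to the negated matrix) every principal minor --- including the critical minor that remains after deleting any identified-boundary rows and columns --- is again a strict $N$-matrix. Hence the negated pseudogradient $-g$, where $g(\bomega)=\left[\partial U_i/\partial\omega_i\right]_i$, is a $P$-matrix function, so by \Cref{thm:gnext} the map $g$ is one-to-one on $\Omega$. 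Moreover the diagonal entries of $J_g$ are the strictly negative numbers $\partial^2 U_i/\partial\omega_i^2$, so each $U_i$ is strictly concave in its own coordinate and the game is a concave game.

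Next I would establish existence and purity. Since $\Omega=\Omega_1\times\cdots\times\Omega_n$ is compact and convex and each $U_i$ is continuous and strictly concave in $\omega_i$, a pure Nash equilibrium $\bomega^*$ exists by the standard existence theorem for concave games \citep{ros-65}. Strict concavity in one's own action also forces purity of \emph{any} equilibrium: for fixed (possibly mixed) opponent play the expected own-payoff is an average of strictly concave functions of $\omega_i$, hence strictly concave, so by Jensen a degenerate own-strategy strictly dominates any nondegenerate one. At the equilibrium the first-order (KKT) conditions characterize $\bomega^*$ as a solution of the box-constrained variational inequality for $g$: coordinatewise, $g_i(\bomega^*)=0$ when $\omega_i^*$ is interior, $g_i(\bomega^*)\le 0$ when $\omega_i^*=a_i$, and $g_i(\bomega^*)\ge 0$ when $\omega_i^*=b_i$.

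For uniqueness I would run the classical $P$-function argument for box variational inequalities, which is precisely where the coordinatewise (``high-dimensional'') monotonicity carried by a $P$-matrix Jacobian is consumed. Suppose $\bomega^1\ne\bomega^2$ were two equilibria. Because $-g$ has a $P$-matrix Jacobian at every point of the product box $\Omega$, it is a $P$-function there, so some coordinate $k$ satisfies $(\omega_k^1-\omega_k^2)\,(g_k(\bomega^2)-g_k(\bomega^1))>0$. Taking $\omega_k^1>\omega_k^2$ without loss of generality yields $g_k(\bomega^1)<g_k(\bomega^2)$; but $\omega_k^1>a_k$ excludes the lower boundary and forces $g_k(\bomega^1)\ge 0$, while $\omega_k^2<b_k$ excludes the upper boundary and forces $g_k(\bomega^2)\le 0$, so $g_k(\bomega^1)\ge g_k(\bomega^2)$ --- a contradiction. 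In the special case that $\bomega^*$ is interior this recovers exactly \Cref{thm:gameunique} with $\bomega^0=\bomega^*$.

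The main obstacle is the boundary. Unlike \Cref{thm:gameunique}, the corollary hands us no interior zero of $g$, so the equilibrium can have active box constraints and one cannot simply invoke \Cref{thm:gameunique} with a vanishing pseudogradient; the box-variational-inequality formulation above is what tolerates these active coordinates. The delicate step is the upgrade from the injectivity of $g$ delivered by \Cref{thm:gnext} to the genuinely coordinatewise $P$-function inequality used in the contradiction: establishing that a $P$-matrix Jacobian throughout a product domain yields this stronger monotonicity --- via a mean-value argument along the segment $[\bomega^1,\bomega^2]$, which stays in the box by convexity --- is the part of the argument that requires care, and is the natural place to reuse the dimensionally-reduced-function device of \Cref{def:dimreduct}.
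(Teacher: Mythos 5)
Your proposal is correct and follows essentially the same route as the paper: existence (and purity) from Rosen's concave-game theorem, and uniqueness by contradiction using the coordinatewise monotonicity that a $P$-matrix (here $N$-matrix) Jacobian on a product domain provides --- which is exactly the paper's \Cref{thm:gndiagconc}, the step you rightly flag as the one requiring care --- combined with the box-constrained first-order conditions at the two putative equilibria. Your write-up is in fact somewhat more explicit than the paper's own terse argument about the boundary KKT cases and about why no mixed equilibrium can exist, but the underlying decomposition and key lemma are the same.
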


\noindent A significance of \Cref{thm:nmunique} is that it extends a classic result by \citet{ros-65}.

\begin{theorem}[\citealp{ros-65}]
\label{thm:rosenclassic}
A game with a compact and convex product action space $\Omega$ and pseudogradient $\left[\frac{\partial U_i}{\partial \omega_i}\right]_i$such that for all inputs $\bomega^1,~\bomega^2\in\Omega$:
\begin{equation*}
    \left(\left[\frac{\partial U_i}{\partial \omega_i}\right]_i(\bomega^2)-\left[\frac{\partial U_i}{\partial \omega_i}\right]_i(\bomega^1)\right)\cdot (\bomega^2-\bomega^1) < 0
\end{equation*}
has a unique Nash equilibrium, which is pure.
\end{theorem}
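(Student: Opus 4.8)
The plan is to establish the two assertions---existence of a pure Nash equilibrium and its uniqueness---separately, with the displayed strict-monotonicity hypothesis (Rosen's ``diagonal strict concavity,'' which must be read for all \emph{distinct} $\bomega^1 \neq \bomega^2$, since the left-hand side vanishes when $\bomega^1 = \bomega^2$) carrying the uniqueness half. A preliminary observation I would record first is that the hypothesis forces each $U_i$ to be strictly concave in player $i$'s own action: restricting the displayed inequality to a pair $\bomega^1,\bomega^2$ differing only in coordinate $i$ shows that $\partial U_i/\partial\omega_i$ is strictly decreasing in $\omega_i$ with the remaining coordinates fixed, which is exactly strict concavity of $U_i(\cdot,\bomega_{-i})$.

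For existence I would invoke the standard concave-game argument. Since each $\Omega_i$ is compact and convex and each $U_i(\cdot,\bomega_{-i})$ is continuous and strictly concave in $\omega_i$, player $i$'s best response is a single-valued, continuous function of $\bomega_{-i}$ (single-valuedness from strict concavity, continuity from Berge's maximum theorem). The product of these best-response maps is then a continuous self-map of the compact convex set $\Omega$, so Brouwer's fixed-point theorem supplies a fixed point, which by construction is a pure Nash equilibrium.

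The crux---and the step I expect to be the main obstacle---is uniqueness, for which I would run Rosen's variational-inequality contradiction. Suppose $\bomega^1$ and $\bomega^2$ are two distinct equilibria, and abbreviate the pseudogradient by $g(\bomega) = \left[\partial U_i/\partial\omega_i\right]_i$. At $\bomega^1$, each $\omega^1_i$ maximizes the concave function $U_i(\cdot,\bomega^1_{-i})$ over the convex set $\Omega_i$, so the first-order optimality (variational-inequality) condition evaluated at the feasible competitor $\omega^2_i$ gives
\[
\frac{\partial U_i}{\partial \omega_i}(\bomega^1)\,(\omega^2_i - \omega^1_i) \le 0,
\]
and symmetrically at $\bomega^2$,
\[
\frac{\partial U_i}{\partial \omega_i}(\bomega^2)\,(\omega^1_i - \omega^2_i) \le 0.
\]
Summing each over $i$ and adding the two resulting inequalities yields
\[
\bigl(g(\bomega^1) - g(\bomega^2)\bigr)\cdot(\bomega^2 - \bomega^1) \le 0,
\]
equivalently $\bigl(g(\bomega^2) - g(\bomega^1)\bigr)\cdot(\bomega^2 - \bomega^1) \ge 0$. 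But the hypothesis asserts this quantity is strictly negative for the distinct pair $\bomega^1 \neq \bomega^2$, a contradiction; hence the equilibrium is unique.

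The delicate points to get right are two. First, I would confirm that in this setup each player controls a single scalar coordinate $\omega_i \in \Omega_i$, so that the dot products collapse to the scalar products displayed above. Second, and more importantly, I would handle equilibria on the boundary of $\Omega$: there the optimality condition is the inequality $\frac{\partial U_i}{\partial \omega_i}(\bomega)\,(\omega'_i - \omega_i) \le 0$ for feasible $\omega'_i$, rather than a vanishing gradient---this is precisely why I phrase the first-order conditions as variational inequalities instead of as $\partial U_i/\partial\omega_i = 0$, and it is what makes the summation argument robust to corner solutions. Everything else is routine manipulation.
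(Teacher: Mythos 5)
Your proof is correct and is essentially Rosen's original argument, which the paper cites without reproving: strict concavity of each $U_i$ in its own coordinate (correctly derived from the displayed hypothesis restricted to single-coordinate perturbations) gives existence via a fixed-point theorem, and uniqueness follows by summing the variational first-order inequalities at two putative equilibria and contradicting diagonal strict concavity. Your handling of boundary equilibria via variational inequalities rather than vanishing gradients, and your reading of the hypothesis as applying to distinct pairs, are exactly the right precautions; note only that the paper's own compiled machinery proves its generalizations (e.g.\ Corollary~\ref{thm:nmunique}) by the Gale--Nikaido $P$-matrix route rather than this summation argument, but for the classical statement itself your route is the intended one.
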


We continue by listing three results from \citet{GN-65}.  The first, \Cref{thm:gnpartdiagconc}, is a result which appears in their paper.   The third restates their result which we have already given as \Cref{thm:psdunique} in this paper.  The second, \Cref{thm:gndiagconc}, is a new intermediate sub-result statement, which summarizes the preliminary analysis within \citeauthor{GN-65}'s proof of \Cref{thm:psdunique}.  \Cref{thm:gndiagconc} is a generalization of \Cref{thm:gnpartdiagconc}.\footnote{An organizational note on numbering of theorems: our \Cref{thm:psdunique} is given as Theorem 4 in the \citeauthor{GN-65} paper; our \Cref{thm:gnpartdiagconc} is their Theorem 3.  Our \Cref{thm:gndiagconc} is their result but is not an explicit statement.}

\Cref{thm:gndiagconc} is indispensable for our \Cref{thm:gameunique} and \Cref{thm:nmunique} results, yet a proof for this statement explicitly does not exist in continuous, cohesive form.  To spare the reader the task of personally piecing it together, we give the proof here, adapted from \citeauthor{GN-65}.  For completeness, we will then finish the proof of \Cref{thm:psdunique} which basically becomes a corollary.

\begin{theorem}[\citealp{GN-65}]
\label{thm:gnpartdiagconc}
If function $f:\Omega\to\reals^n$ on compact and convex product domain $\Omega\subset\reals^n$ has Jacobian $J_f$ which is a $P$-matrix at every $\bomega\in\Omega$, then for any fixed input  $\bomega^1\in\Omega$, and variable $\bomega^2$ from the domain $\Omega$, the inequalities
\begin{equation*}
    f(\bomega^1)\leq f(\bomega^2),~ \bomega^1\geq \bomega^2
\end{equation*}
have only the solution $\bomega^1=\bomega^2$.
\end{theorem}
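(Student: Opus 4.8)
The plan is to prove the contrapositive by induction on the dimension $n$: assuming $\bomega^1\geq\bomega^2$ with $\bomega^1\neq\bomega^2$, I would exhibit at least one coordinate $k$ with $f_k(\bomega^1)>f_k(\bomega^2)$, so that $f(\bomega^1)\leq f(\bomega^2)$ is impossible. The base case $n=1$ is immediate, since a $1\times1$ P-matrix is a positive number, so $f$ is strictly increasing. For the inductive step I would first handle the \emph{domain reduction}: if $\bomega^1$ and $\bomega^2$ agree in some coordinate $j$, I fix that coordinate and pass to the map obtained by deleting row and column $j$; its Jacobian is the corresponding principal submatrix of $J_f$, which is again a P-matrix by downward closure (\Cref{f:pmatfact}), and the two reduced points are still comparable and distinct, so the inductive hypothesis applies. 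Hence I may assume $\bomega^1>\bomega^2$ strictly in every coordinate, so that $d=\bomega^1-\bomega^2$ is strictly positive.

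\textbf{Main engine: sign non-reversal plus continuity.} I would invoke the standard sign non-reversal characterization of P-matrices, namely that for any nonzero vector $x$ there is a coordinate $k$ with $x_k\,(Mx)_k>0$. Applied to $M=J_f(\bomega^2)$ and $x=d>0$, this produces a coordinate with $(J_f(\bomega^2)\,d)_k>0$, i.e.\ the $s$-derivative of $f_k(\bomega^2+sd)$ at $s=0$ is positive, so the running maximum $\phi(s)=\max_k\bigl[f_k(\bomega^2+sd)-f_k(\bomega^2)\bigr]$ is strictly positive for small $s>0$. If, for contradiction, every coordinate satisfied $f_k(\bomega^1)\leq f_k(\bomega^2)$, then $\phi(1)\leq 0$, and by continuity $\phi$ would have a last zero at some $s^\star\in(0,1]$. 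The point $\xi=\bomega^2+s^\star d$ then satisfies $\xi>\bomega^2$ strictly, $f(\xi)\leq f(\bomega^2)$ in every coordinate, and $f_{k_0}(\xi)=f_{k_0}(\bomega^2)$ for some coordinate $k_0$.

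\textbf{Closing the induction: codomain reduction.} With one output coordinate matched exactly, I would eliminate coordinate $k_0$. Because the $(k_0,k_0)$ diagonal entry of $J_f$ is a positive $1\times1$ principal minor, $f_{k_0}$ is strictly increasing in $\omega_{k_0}$, and the level equation $f_{k_0}=f_{k_0}(\bomega^2)$ defines $\omega_{k_0}$ implicitly in terms of the remaining variables; substituting this back into the other coordinates yields a reduced $(n-1)$-dimensional map whose Jacobian is the Schur complement of $J_f$ at the $k_0$ pivot. By the determinant quotient formula this Schur complement is again a P-matrix, and at the two reduced points the implicit coordinate returns exactly $\omega^2_{k_0}$ and $\xi_{k_0}$, so the reduced points stay comparable and distinct and the reduced codomain inequalities $f_i(\xi)\leq f_i(\bomega^2)$ are preserved for $i\neq k_0$. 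The inductive hypothesis then forces the reduced points to coincide, contradicting $\xi>\bomega^2$.

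\textbf{Where the difficulty lies.} The linear algebra above (sign non-reversal, and P-matrix closure under principal submatrices and Schur complements) is routine given \Cref{f:pmatfact}. The delicate step is the codomain reduction, specifically guaranteeing that the implicit elimination of coordinate $k_0$ is well defined on the \emph{whole} reduced box rather than only at the two endpoints, and controlling the map on the axis-aligned boundaries where the P-matrix property can degenerate to a $P_0$-matrix. This is exactly the boundary phenomenon that later motivates the notion of identified boundaries in \Cref{thm:gnext}, and it is where the full P-matrix hypothesis, rather than mere nonsingularity of $J_f$, must be used with care, following \citeauthor{GN-65}.
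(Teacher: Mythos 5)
First, a framing point: the paper does not prove \Cref{thm:gnpartdiagconc} at all --- it is imported as Theorem~3 of \citet{GN-65} and used as a black box in the proof of \Cref{thm:gndiagconc}. So your attempt is not competing with an in-paper argument but with the original Gale--Nikaido proof, and your outline is in fact a reconstruction of essentially their strategy: induction on $n$, reduction to the strictly ordered case by passing to principal submatrices when coordinates agree, and elimination of one output coordinate whose value returns to its starting level, via an implicit function whose Jacobian is a Schur complement. The base case, the domain reduction, the sign non-reversal step producing an initially increasing coordinate, and the intermediate-value selection of $\xi$ with $f(\xi)\leq f(\bomega^2)$ and equality in some coordinate $k_0$ are all sound.

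The step you yourself flag as ``delicate'' is, however, a genuine gap, and it is exactly the step that makes the Gale--Nikaido proof nontrivial. Your codomain reduction needs the implicit function $\psi$ solving $f_{k_0}(\bomega_{-k_0},\psi(\bomega_{-k_0}))=f_{k_0}(\bomega^2)$ to be defined on the \emph{entire} reduced box between $\bomega^2_{-k_0}$ and $\xi_{-k_0}$. Strict monotonicity of $f_{k_0}$ in $\omega_{k_0}$ gives uniqueness of the solution but not existence: the cross-partials $\partial f_{k_0}/\partial\omega_j$ are unrestricted in sign, so the level set $f_{k_0}=c$ can exit the domain through the faces $\omega_{k_0}=a_{k_0}$ or $\omega_{k_0}=b_{k_0}$ at interior points of the box even though both endpoints lie on it; in that case there is no reduced map to which the inductive hypothesis applies. \citeauthor{GN-65} repair this by clamping $\psi$ to those faces, which makes the reduced map only piecewise differentiable (Schur complement on the implicit region, principal submatrix on the clamped regions), and the induction must then be carried for that larger class of maps --- none of which is in your sketch. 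Two smaller points: the sign non-reversal characterization and the closure of $P$-matrices under Schur complements are true but are not among the items of \Cref{f:pmatfact} (the former is itself one of the main theorems of \citeauthor{GN-65}), so they require citation or proof; and your worry about degeneration to a $P_0$-matrix on the boundary is misplaced for this statement, whose hypothesis is a strict $P$-matrix at every point of $\Omega$ including its boundary --- that issue belongs to \Cref{thm:gnext}, not here.
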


\noindent \Cref{thm:gnpartdiagconc} has an interpretation in the context of our proxy games, with $\bomega$ as a vector of actions, and $f$ as the pseudogradient function on utilities.  For games maximizing utility we would use the equivalent analogous statement for Jacobian as $N$-matrix everywhere (and flip the sign of the first vector inequality).  What the N-matrix version of \Cref{thm:gnpartdiagconc} says when it holds for a game is: given $\bomega^2$, there can not exist distinct pointwise ``weakly larger" actions $\bomega^1$ such that all local preference gradients (with respect to own action) are also weakly larger at $\bomega^1$ compared to $\bomega^2$.

However there is nothing special about the ``weakly larger" direction -- i.e., the ``all-positives" orthant.  The pure-math interpretation of \Cref{thm:gnpartdiagconc} (still for $N$-matrix) is that ``moves" from $\bomega^2$ in the direction of the all-positive orthant to $\bomega^1$ can not also move the output in the direction of the all-positive orthant. 
The generalization says, given an $N$-matrix Jacobian everywhere, {\em moving the input in the direction of any orthant can not also move the output in the direction of the same orthant, i.e., by the contrapositive, there must exist a dimension in which the change in the input and the change of the corresponding output have opposite signs}.  This idea is immediately pertinent in game theory with actions as inputs and utility gradients as outputs, as the basis of a technique to contradict two action profiles supposedly both being in equilibrium.

We state this intermediate result formally here with \Cref{thm:gndiagconc} (but for continuity of language in result statements, we write it as the $P$-matrix version).  To repeat, the proof here mirrors the first steps of \citeauthor{GN-65}'s proof of \Cref{thm:psdunique}, with slight re-working to be explicitly restated as a generalization of \Cref{thm:gnpartdiagconc}.  Note the following definition for use in \Cref{thm:gndiagconc}.

\begin{definition}
\label{def:binop}
Define the operators $\mathds{1},\mathds{-1}$ applied to inequalities by: multiplying an inequality by $\mathds{1}$ leaves it unchanged, and multiplying it by $\mathds{-1}$ reverses the sign of the inequality.
\end{definition}

\begin{theorem}[\citealp{GN-65}]
\label{thm:gndiagconc}
If function $f:\Omega\to\reals^n$ on compact and convex product domain $\Omega\subset\reals^n$ has Jacobian $J_f$ which is a $P$-matrix at every $\bomega\in\Omega$,  then for any fixed input  $\bomega^1\in\Omega$, and variable $\bomega^2$ from the domain $\Omega$, for every binary vector $\mathds{B}\in\left\{\mathds{1},\mathds{-1}\right\}^n$ the inequalities
\begin{align*}
    \mathds{B}_1\left(f_1(\bomega^1)\right.&\left.\leq f_1(\bomega^2)\right),~\mathds{B}_1\left(\omega^1_1\geq \omega_1^2\right)\\
    & \vdots&\\
    \mathds{B}_n\left(f_n(\bomega^1)\right.&\left.\leq f_n(\bomega^2)\right),~\mathds{B}_n\left(\omega^1_n\geq \omega^2_n\right)
\end{align*}
have only the solution $\bomega^1=\bomega^2$.  Equivalently (the contrapostive), given inputs $\bomega^1,\bomega^2\neq \bomega^1$, there must exist a dimension $i$ such that $(\omega^1_i-\omega^2_i)\cdot(f_i(\bomega^1)-f_i(\bomega^2))>0$.
\end{theorem}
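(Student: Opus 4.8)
The plan is to reduce \Cref{thm:gndiagconc} to its all-positive special case \Cref{thm:gnpartdiagconc} (the case $\mathds{B}=(\mathds{1},\ldots,\mathds{1})$) by a sign-flipping change of variables that rotates the target orthant encoded by $\mathds{B}$ onto the positive orthant. First I would fix an arbitrary sign vector $\mathds{B}\in\{\mathds{1},\mathds{-1}\}^n$, let $S=\{i:\mathds{B}_i=\mathds{-1}\}$ be the coordinates to be reflected, and write $E$ for the diagonal matrix with $E_{ii}=+1$ for $i\notin S$ and $E_{ii}=-1$ for $i\in S$; note $E=E^{-1}$. Define the reflected function $\tilde f(\tilde\bomega)\coloneqq E\,f(E\tilde\bomega)$ on the reflected domain $\tilde\Omega\coloneqq E\,\Omega$. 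Since $\Omega=\prod_i[a_i,b_i]$, reflecting the coordinates in $S$ turns each interval $[a_i,b_i]$ into $[-b_i,-a_i]$, so $\tilde\Omega$ is again a compact, convex, product domain and $\tilde f$ meets the domain hypothesis of \Cref{thm:gnpartdiagconc}.

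The one substantive step --- and the main obstacle, though it is only a short computation --- is to show that $\tilde f$ inherits the everywhere-$P$-matrix property. By the chain rule $J_{\tilde f}(\tilde\bomega)=E\,J_f(E\tilde\bomega)\,E$, i.e.\ a conjugation of the Jacobian of $f$ by the diagonal sign matrix $E$. For any index set $T$, the principal submatrix of $EME$ indexed by $T$ is $E_T M_T E_T$, whose determinant is $\det(E_T)^2\det(M_T)=\det(M_T)$ because $\det(E_T)=\pm1$. Thus conjugation by $E$ leaves every principal minor unchanged, so $J_{\tilde f}(\tilde\bomega)$ is a $P$-matrix at every $\tilde\bomega\in\tilde\Omega$ (as $E\tilde\bomega$ ranges over $\Omega$, where $J_f$ is a $P$-matrix by hypothesis). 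This is the one place where a genuine argument is needed; everything else is bookkeeping.

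With the hypotheses verified, I would apply \Cref{thm:gnpartdiagconc} to $\tilde f$ at the reflected points $\tilde\bomega^1=E\bomega^1$ and $\tilde\bomega^2=E\bomega^2$. Using $E\tilde\bomega^j=\bomega^j$ gives $\tilde f(\tilde\bomega^j)=E f(\bomega^j)$, so unpacking coordinate-by-coordinate shows that the unsigned inequalities $\tilde f(\tilde\bomega^1)\leq\tilde f(\tilde\bomega^2)$ and $\tilde\bomega^1\geq\tilde\bomega^2$ are \emph{exactly} the $\mathds{B}$-signed inequalities of \Cref{thm:gndiagconc}: for $i\notin S$ they read $f_i(\bomega^1)\leq f_i(\bomega^2)$ and $\omega^1_i\geq\omega^2_i$, and for $i\in S$ they flip to $f_i(\bomega^1)\geq f_i(\bomega^2)$ and $\omega^1_i\leq\omega^2_i$. \Cref{thm:gnpartdiagconc} then forces $\tilde\bomega^1=\tilde\bomega^2$, and since $E$ is invertible this is equivalent to $\bomega^1=\bomega^2$, which is the claim. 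Finally, the contrapositive formulation follows by elementary logic: the $\mathds{B}$-signed system already implies $(\omega^1_i-\omega^2_i)(f_i(\bomega^1)-f_i(\bomega^2))\leq0$ in every coordinate, and conversely, given $\bomega^1\neq\bomega^2$ with all these products $\leq0$ one can choose a consistent $\mathds{B}$ coordinatewise (breaking ties at coordinates where $\omega^1_i=\omega^2_i$ by the sign of $f_i(\bomega^1)-f_i(\bomega^2)$), so the absence of any strictly-positive product would contradict the forward statement just proved.
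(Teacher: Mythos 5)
Your proposal is correct and takes essentially the same route as the paper: both reduce to the all-positive-orthant case (\Cref{thm:gnpartdiagconc}) by conjugating $f$ with a diagonal $\pm1$ reflection of the coordinates in $S$, checking that the reflected domain is still a compact convex product and that the reflected Jacobian is still a $P$-matrix, and then translating the unsigned conclusion back into the $\mathds{B}$-signed system. Your justification that conjugation by $E$ preserves all principal minors (via $\det(E_T)^2=1$) is a cleaner one-line version of the paper's \Cref{lem:signflipping}, which flips one row/column pair at a time and argues term-by-term in the determinant expansion.
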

\begin{proof}
Note that we will write the proof to parallel the argument as given by \citeauthor{GN-65}, and connect it back to the binary vector $\mathds{B}$ as appropriate.

Inputs $\bomega^1,~\bomega^2\in\Omega$ are explicitly indexed by $\left(\omega^1_1,\ldots,\omega^1_n\right)$ and $\left(\omega^2_1,\ldots,\omega^2_n\right)$.  By contradiction, assume $\bomega^1,~\bomega^2$ are distinct but there exists vector $\mathds{B}^*$ such that all of the inequalities listed in the theorem statement are satisfied.

Without loss of generality we may assume there exists index $k$ such that $\omega^2_i\leq \omega^1_i$ for $i\leq k$ and $\omega^2_i\geq\omega^1_i$ for $i>k$.  If $k=n$ (or by symmetry $k=0$) then we are in the exact setting of \Cref{thm:gnpartdiagconc} (here with the vector $\mathds{B}^*=\left\{\mathds{1}\right\}^n$), which requires $\bomega^1=\bomega^2$.

So from here on we assume $0<k<n$.  To satisfy the second inequality in each line of the set of inequalities in the theorem statement, it must be that $\mathds{B}^*$ is the vector of $k$ $\mathds{1}$s followed by $(n-k)$ $\mathds{-1}$s.  Define the analogous mapping $D:\reals^n\to\reals^n$ by
\begin{equation*}
    D(\omega_1,\ldots,\omega_n)=(\omega_1,\ldots,\omega_k,-\omega_{k+1},\ldots,-\omega_n)
\end{equation*}
\noindent Clearly $D$ is a bijection on $\reals^n$ with inverse $D^{-1}=D$, and further $D(\Omega)$ is still a compact and convex product space.  Let $E:D(\Omega)\to\reals^n$ be the composite mapping $E=D\circ f\circ D$.  (I.e., the function $E$ on the domain $D(\Omega)$ operates as follows: the first application of $D$ maps back to $\Omega$, to which $E$ can then properly apply $f$, and finally $D$ is applied again to this output.)  At this point, we confirm that the following inequalities hold by inspection, because the application of $\mathds{B}^*$ to the system of inequalities in the theorem statement dovetails with the use of the mapping $D$.
\begin{equation}
    \label{eqn:orthantmap}
E(D(\bomega^1))\leq E(D(\bomega^2)),~D(\bomega^1)\geq D(\bomega^2)
\end{equation}

\noindent The Jacobian $J_E$ of $E$ is a $P$-matrix because it is obtained from the Jacobian $J_f$ by simple changes of row/column signs which preserve the classification as $P$-matrix.  We use \Cref{lem:signflipping} to make this explicit (given immediately following this proof).  In comparison to the Jacobian of $f$, the Jacobian of $E$ is obtained by multiplying each row and each column of $f$ with index at least $k+1$ by a factor of $-1$.  If we ``transform" the Jacobian of $f$ into the Jacobian of $E$ by considering each $i> k$ in turn one step at a time, by multiplying the $i$ row and $i$ column each by $-1$ in each one step, we have that the resulting matrix is still a $P$-matrix as an invariant after each step (by \Cref{lem:signflipping}), such that $J_E$ is a $P$-matrix when the transformation concludes.

With $J_E$ a $P$-matrix and equation~\eqref{eqn:orthantmap}, we can invoke \Cref{thm:gnpartdiagconc} to conclude that $D(\bomega^1)=D(\bomega^2)$, which immediately implies that $\bomega^1=\bomega^2$ by applying $D^{-1}$ to both sides.  This gives the desired contradiction, as the analyzed contradiction also holds by analogy for $f$.
\end{proof}

\begin{lemma}
\label{lem:signflipping}
Given $K\times K$ matrix $M$ as a $P$-matrix.  For any index $i\in\left\{1,...K\right\}$, the matrix $M'$ resulting from multiplying row $i$ by $-1$ and successively column $i$ by $-1$ is also a $P$-matrix.
\end{lemma}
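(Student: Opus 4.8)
The plan is to show that the combined row-and-column sign flip leaves \emph{every} principal minor of $M$ numerically unchanged, so that $M'$ automatically inherits the positivity of all principal minors and is a $P$-matrix by \Cref{def:pmat}. First I would fix an arbitrary nonempty index set $S\subseteq\{1,\ldots,K\}$, consider the principal submatrix $M'_S$ obtained by keeping the rows and columns indexed by $S$, and split the analysis into two cases according to whether $i\in S$.

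In the case $i\notin S$, neither row $i$ nor column $i$ is selected by $S$, so the sign-flip operations never touch the entries of $M'_S$; hence $M'_S=M_S$ and $\det(M'_S)=\det(M_S)>0$ because $M$ is a $P$-matrix. In the case $i\in S$, the submatrix $M'_S$ is obtained from $M_S$ by multiplying exactly one of its rows (the one indexed by $i$) by $-1$ and exactly one of its columns (again the one indexed by $i$) by $-1$. Invoking the elementary fact that scaling a single row, or a single column, of a square matrix by a scalar multiplies its determinant by that scalar, the two flips contribute a factor $(-1)\cdot(-1)=1$, so $\det(M'_S)=\det(M_S)>0$.

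Combining the two cases, every principal minor of $M'$ equals the corresponding principal minor of $M$ and is therefore strictly positive, which is exactly the condition for $M'$ to be a $P$-matrix. I do not expect a real obstacle here; the entire content is the sign bookkeeping, namely verifying that the row flip and the column flip cancel precisely on those minors that involve index $i$ and are vacuous on those that do not. The reason the order of operations in the statement (``row $i$ then column $i$'') is immaterial is simply that the two determinant factors commute.
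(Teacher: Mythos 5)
Your proposal is correct and follows essentially the same route as the paper: both arguments show that every principal minor of $M'$ is numerically equal to the corresponding principal minor of $M$, splitting on whether the flipped index $i$ belongs to the retained index set. The only difference is that where the paper verifies the equality of determinants via a term-by-term inspection of the Leibniz (permutation) expansion, you invoke the cleaner multilinearity fact that scaling one row and one column by $-1$ multiplies the determinant by $(-1)\cdot(-1)=1$; this is a tidier justification of the same step, not a different proof.
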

\begin{proof}
As a first note, the element of matrix $M'$ at index $(i,i)$ gets multiplied by $-1$ in both the row-multiplication and column-multiplication operations, so its sign remains unchanged.  All other elements of either the $i$ row or $i$ column have sign flipped from $M$.

Consider within matrix $M'$, the determinant of {\em any} principal minor $M''$ of $M'$, including possibly $M'$ itself.  Without loss of generality, the following argument holds for any $M''$, we don't need to explicitly consider any missing indexes from the original $M'$.  First in particular, if $M''$ excludes row/column $i$ then its determinant remains unchanged.

Otherwise we use the algebraic definition of a determinant.  The determinant of $M''$ is a sum over product-terms with the following property: each product-term includes exactly one element from each row and each column of $M''$, and includes such exhaustively.  Any such additive product-term (within the sum making up the determinant calculation) that includes the element of $M'$ at index $(i,i)$ can not include any other element of $M'$ from row $i$ or column $i$, therefore this term is exactly equal to the respective principal minor determinant term when calculated for the matrix $M$.

Any additive term that does not include the element of $M'$ at index $(i,i)$ must use some term as $(i,x)$ and also some term as $(y,i)$ for $x\neq i$ and $y\neq i$, both of which are negated from the corresponding elements at the analogous indexes of $M'$ such that again this determinant (additive) term is equal to the respective determinant term using $M$.

This shows that term by term within their summed computations, the determinants of minors of $M'$ are everywhere equal to the respective determinants of minors of $M$.  The conclusion is that $M'$ is indeed a $P$-matrix, because $M$ is.
\end{proof}

\noindent For completeness, before continuing we restate \Cref{thm:psdunique} and conclude its proof.

\begin{numberedtheorem}{\ref{thm:psdunique}}[\citealp{GN-65}]
A continuously differentiable function $f : \Omega \to
\reals^n$ with compact and convex product domain $\Omega\subset\reals^n$ is one-to-one if its Jacobian is everywhere a $P$-matrix.
\end{numberedtheorem}
\begin{proof}
The statement now follows as a corollary.  By contradiction, assume there exist distinct $\bomega^1,~\bomega^2\in\Omega$ with $f(\bomega^1)=f(\bomega^2)$.  Let $\mathcal{P}$ be the {\em program} of constraints described in the statement of \Cref{thm:gndiagconc}.  Fix a binary vector $\mathds{B}^{\text{RHS}}\in\left\{\mathds{1},\mathds{-1}\right\}$ to satisfy the right-hand side equations of $\mathcal{P}$ for these $\bomega^1,~\bomega^2$.  Assumptions in the theorem statement here meet the conditions of \Cref{thm:gndiagconc}, therefore $\bomega^1 \neq \bomega^2$ implies that the left-hand side equations of $\mathcal{P}$ can not all be satisfied for $\mathds{B}^{\text{RHS}}$.  In particular it cannot be that $f(\bomega^1) = f(\bomega^2)$ (which would satisfy the left-hand side of $\mathcal{P}$).
\end{proof}

\noindent \Cref{thm:gndiagconc} has implications for our proxy games which become apparent in the proof of \Cref{thm:gameunique}.  The intuition was previously described in the discussion immediately following \Cref{thm:gnpartdiagconc}.

\begin{numberedtheorem}{\ref{thm:gameunique}}
A game with $n$ players and
\begin{itemize}
\item a compact and convex product action space $\Omega_1\times...\times\Omega_n=\Omega\subset\reals^n$; \item a continuous and twice-differentiable utility function $\Utils:\Omega\rightarrow\reals^n$ such that:
\begin{itemize}
\item the pseudogradient $\left[\frac{\partial U_i}{\partial \bomega_i}\right]_i$ of the utility function $\Utils$ is an interior $N$-matrix function;
\item and there exists $\bomega^0\in\Omega$ such that the pseudogradient evaluated at $\bomega^0$ is $\mathbf{0}$ (the 0-vector);
\end{itemize}
\end{itemize}
has a unique Nash equilibrium, which is $\bomega^0$, and this equilibrium is pure.
\end{numberedtheorem}
\begin{proof}
For existence, the action vector $\bomega^0$ is assumed to exist.  It is a pure Nash equilibrium by the assumption that first-order conditions at $\bomega^0$ are all identically 0, and utility functions $\Utili$ are strictly concave with respect to their own unilateral changes (except possibly at the single points of the lower and upper boundaries where it can be weakly concave, but this exception at a single boundary point can not affect the uniqueness of a player's optimal action).  This concavity follows from the pseudogradient as an interior $N$-matrix function, such that at all points (except boundaries), the diagonal elements of the pseudogradient's Jacobian $\left[\frac{\partial^2U_i}{\partial\omega_i^2}\right]_i$ must be strictly negative.

The Nash is unique first because the pseudogradient function is one-to-one by application of \Cref{thm:gnext}, so no other action vector $\bomega'\neq \bomega^0$ can also map to $\mathbf{0}$ (under the pseudogradient function).  Next, the rest of this proof is devoted to showing that a second equilibrium $\bomega'$ can not also exist in the boundaries by having non-zero first-order conditions (i.e., if an agent with action on the boundary has a gradient pointing outside the action space).  An outline is given as follows.
\begin{itemize}
\item First we argue to ignore consideration of any coordinates of $\bomega'$ which are in identified boundaries in their respective dimensions, with respect to the pseudogradient function as the output.  Our goal here is to show that $\bomega'$ and $\bomega^0$ are the same in these coordinates.
\item Second, we consider a dimensionally-reduced function $D$ (\Cref{def:dimreduct}) applied to the pseudogradient function, $\bomega^0,~\bomega'$, and $K = K_{\bomega'}=K_{\bomega^0}$ all the same set of dimensions, where $K_{\bomega'},~K_{\bomega^0}$ are the sets of dimensions in which respectively $\bomega'$ and $\bomega^0$ exist in identified boundaries.  Our arguments under $D$ extend by analogy to our original pseudogradient function if and only if the parameters $\mathbf{c}_K$ used in the definition of $D$ represent the same assignment as the values of the respective coordinates in both $\bomega'$ and $\bomega^0$.
\item Finally, we use $D$ to obtain the contradiction and claim uniqueness of Nash equilibrium.
\end{itemize}

Per the outline, first we show $K_{\bomega'} = K_{\bomega^0}$.  Without loss of generality, we analyze identified lower boundaries, with identified upper boundaries by symmetry.  The simple direction to prove is $K_{\bomega^0} \subseteq K_{\bomega'}$.  By contradiction, assume $\bomegai^0$ is the lower bound of dimension $i$ with its lower boundary identified, but $\omega_i'> \omega_i^0$.  But then $\bomega'$ could not be an equilibrium point, because $\partial U/\partial \omega_i$ outputs 0 everywhere on the lower boundary in dimension $i$ (it is constant on the boundary and we know that it outputs 0 at point $\bomega^0$ by assumption), and $\partial U/\partial \omega_i$ is monotone decreasing in $\omega_i$.

We next show $K_{\bomega'} \subseteq K_{\bomega^0}$, which uses a similar but more technical argument.  Consider $\bomega'$ to be in an identified lower boundary in dimension $i$, with general range $[a_i,b_i]$ for dimension $i$.  The derivative $\partial U_i / \partial \omega_i$ is the element of the (output) psuedogradient function with index $i$.  By definition of an identified boundary, the output of $\partial U_i / \partial \omega_i$ is constant for inputs $(a_i,\bomegasmi)$ for all $\bomegasmi$.  At the lower boundary, it can not be that $\partial U_i / \partial \omega_i(a_i,\bomegasmi') > 0$ without contradicting $\bomega'$ as an equilibrium, so it must be that $\partial U_i / \partial \omega_i(a_i,\bomegasmi') \leq 0$.

However because $\partial U_i / \partial \omega_i(a_i,\cdot)$ is constant, then it must also be that $\partial U_i / \partial \omega_i(a_i,\bomegasmi^0) \leq 0$, which implies that $\bomega^0$ must also have $\omega^0_i = a_i$ (and in fact by assumption $\partial U_i / \partial \omega_i(a_i=\omega^0_i,\bomegasmi^0) = 0$).  This follows because any other (larger) value of $\omega^0_i$ would contradict $\bomega^0$ as an equilibrium, from the non-positive derivative at $a_i$ and the strict concavity at all interior points from the pseudogradient being an interior $N$-matrix function.

So we have $\omega^0_i = \omega'_i = a_i$ and $\partial U_i / \partial \omega_i(a_i,\bomegasmi^0)=\partial U_i / \partial \omega_i(a_i,\bomegasmi') = 0$.  The intermediate conclusion here is that dimension $i$ 
can not be used to maintain that $\bomega'$ is distinct from $\bomega^0$.  
Further, the analysis so far has applied for general $i$.  Therefore, it must be that for every dimension $i$ where $\bomega'$ is in an identified boundary in dimension $i$, $\bomega^0$ must be in each of the same identified boundaries; i.e., it must be that $K_{\bomega'} \subseteq K_{\bomega^0}$.  

We continue to the second point of the outline.  From this point on, we consider the dimensionally-reduced function $D$ applied to the pseudogradient function, $\bomega^0, \bomega'$ and $K$ the (common) set of dimensions for which $\bomega'$ and $\bomega^0$ each exist in 
identified boundaries.  The reduction to $D$ in space $\Omega_{-K}$ is faithful for the following analysis because the coordinates of the pseudogradient fixed by $\mathbf{c}_K$ reflect both $\bomega'$ and $\bomega^0$.  Putting together the definitions of a dimensionally-reduced function (applied to $D$) and interior $N$-matrix function (applied to the pseudogradient), we have that $D$ is a strict $N$-matrix function, i.e., its Jacobian is an $N$-matrix everywhere on its (reduced) domain.

We now prove a contradiction.  By the contrapositive of (the $N$-matrix version of) \Cref{thm:gndiagconc}, for the pseudogradient function and our two distinct inputs, there must exist at least one dimension $i$ such that
\begin{align*}
    (\omega'_i - \omega^0_i)\cdot\left(\frac{\partial U_i}{\partial \omega_i}(\bomega')-\frac{\partial U_i}{\partial \omega_i}(\bomega^0)\right) &< 0 \\
    \Leftrightarrow \quad (\omega'_i - \omega^0_i)\cdot \frac{\partial U_i}{\partial \omega_i}(\bomega') &< 0
\end{align*}
where the second line drops the derivative at $\bomega^0$ because it is known to be 0.

If $\omega'_i<\omega^0_i$, it must be that the pseudogradient at $\bomega'$ in dimension $i$ is greater than 0; alternatively if $\omega'_i>\omega^0_i$, this pseudogradient element is less than 0.  But both cases contradict $\bomega'$ as a Nash point because in either case, the gradient points back in the direction of $\bomega^0$, and the action space is convex which therefore guarantees that player $i$ has a better response than $\omega'_i$ when others play $\bomegasmi'$.
\end{proof}


\begin{numberedcorollary}{\ref{thm:nmunique}}
A game with a compact and convex product action space and pseudogradient that is an $N$-matrix function has a unique Nash equilibrium, which is pure.
\end{numberedcorollary}
\begin{proof}
The description of the game here is sufficient to meet the conditions of \Cref{thm:rosenupcontour} (below) from \citep{ros-65}, with existence of pure Nash gauranteed as a result.  Intuitively, existence of pure Nash follows from the combination of continuity of the utility functions and resulting continuity of upper-countour sets, and fixed point theorems on compact and convex spaces.

The intuition for uniqueness is that it follows from \Cref{thm:gndiagconc}, with structure and explanation mostly analogous to the proof of \Cref{thm:gameunique}.  In contrast to \Cref{thm:gameunique} however, because we have a strict $N$-matrix function as the pseudogradient, we do not need to make special arguments regarding identified boundaries.

Formally we argue uniqueness by contradiction.  Assume there exist two distinct pure Nash equilibrium points.  \Cref{thm:psdunique} says there exists a bijection between action space and the image of the pseudogradient function on utility (with the action space as domain).  But \Cref{thm:gndiagconc} requires that there must exist a dimension in which one of the two supposed-distinct equilibrium points has a gradient pointing strictly in the direction of the other, a contradiction because the action space is convex so a preferred deviation much exist.
\end{proof}

\noindent For completeness we give the theorem by \citeauthor{ros-65} referenced in \Cref{thm:nmunique}.

\begin{theorem}[\citealp{ros-65}]
\label{thm:rosenupcontour}
Consider a game with $n$ players and a compact and convex product action space $\Omega$.  Assume the utility function $\Utils$ is continuous and for every player $i$ and vector of others actions' $\bomegasmi$, the function $\Utili(\omega_i,\bomegasmi)$ is concave in $\omega_i$.  There exists a pure Nash equilibrium.
\end{theorem}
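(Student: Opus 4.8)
The plan is to establish existence via a fixed-point argument on the joint best-response correspondence, in the classical style. First I would define, for each player $i$, the best-response correspondence $B_i(\bomegasmi) = \argmax_{\omega_i \in \Omega_i} \Utili(\omega_i,\bomegasmi)$, and assemble these into a joint correspondence $B$ sending each profile $\bomega \in \Omega$ to the product set $\prod_i B_i(\bomegasmi) \subseteq \Omega$. A fixed point of $B$ -- a profile $\bomega^*$ with $\bomega^* \in B(\bomega^*)$ -- is exactly a pure Nash equilibrium, since each coordinate $\omega_i^*$ is then a best response to $\bomegasmi^*$. So it suffices to verify the hypotheses of Kakutani's fixed-point theorem for $B$ on the compact convex set $\Omega$.

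The verification breaks into the three standard properties. Nonemptiness of each $B_i(\bomegasmi)$ follows from the extreme value theorem: $\Utili(\cdot,\bomegasmi)$ is continuous and $\Omega_i$ is compact, so the maximum is attained. Convex-valuedness follows from the concavity hypothesis: since $\Utili(\cdot,\bomegasmi)$ is concave in $\omega_i$ and $\Omega_i$ is convex, its set of maximizers is convex. Upper hemicontinuity (equivalently, closed graph, since the range is contained in the compact set $\Omega$) is the one property that requires genuine work; I would obtain it from Berge's maximum theorem, which applies because the objective $\Utili$ is jointly continuous and the constraint set $\Omega_i$ does not vary with $\bomegasmi$, so the feasibility correspondence is trivially continuous. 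The finite product of nonempty-, convex-valued, closed-graph correspondences retains all three properties, so $B$ is a nonempty-, convex-valued, closed-graph self-map of $\Omega$.

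With these conditions in hand, Kakutani's fixed-point theorem yields a fixed point $\bomega^* \in B(\bomega^*)$, which is the desired pure Nash equilibrium. I expect the main obstacle to be the careful statement and application of the maximum theorem for upper hemicontinuity -- in particular making precise that closed graph together with a compact range gives upper hemicontinuity, and that these properties are preserved under the finite product -- rather than any of the elementary compactness or convexity checks. I would also note that concavity (as opposed to mere quasi-concavity) is used only to secure convexity of the best-response sets; quasi-concavity would suffice for existence, but since the paper hypothesizes concavity I would invoke it directly. This existence result is logically independent of the uniqueness machinery developed earlier (\Cref{thm:gndiagconc} and \Cref{thm:gnext}); those supply uniqueness of the equilibrium, whereas the present argument supplies its existence.
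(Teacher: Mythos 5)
Your proposal is correct, and it is essentially the classical argument. Note that the paper does not actually prove this statement: it is imported verbatim from \citet{ros-65} as a black-box citation, with only a one-sentence gloss (``continuity of the utility functions and resulting continuity of upper-contour sets, and fixed point theorems on compact and convex spaces''), and your Kakutani-based best-response argument is precisely the standard route that gloss alludes to. The only stylistic difference from Rosen's original proof is that he avoids working with the product of individual best-response correspondences by aggregating into the single scalar objective $\rho(\bomega,\mathbf{y})=\sum_i \Utili(y_i,\bomegasmi)$ and applying Kakutani to its argmax correspondence, which makes upper hemicontinuity a one-step application of the maximum theorem rather than a product-preservation argument; both versions are sound, and your observation that quasi-concavity would suffice is also accurate.
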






\subsection{Derivative Calculations; Proofs of \Cref{lem:pderi}, \Cref{lem:concutil}, \Cref{lem:priceid}}
\label{a:pderi}
\ifsodacut
We include the lemma statements but for the purposes of space, these calculations and proofs appear in the full version of the paper.
\else
\ifsoda
\begin{align*}
\intertext{Allocation rule sub-calculations:}
\alloci(\wals) &= \frac{\wali}{\sum_k \wali[k]}\\
\frac{\partial\alloci}{\partial \wali}(\wals) &= \frac{\left(\sum_k \wali[k]\right)-\wali}{\left(\sum_k \wali[k]\right)^2}\\
\frac{\partial\alloci}{\partial \wali[j]}(\wals) &= \frac{-\wali}{\left(\sum_k \wali[k]\right)^2} = \frac{-\alloci(\wals)}{\sum_k \wali[k]}\\
\intertext{Re-stating the bid function:}
\payi(\wals)&=v_i(\wali)-\frac{\int^{\wali}_{\wali(0)}\alloci(z,\walsmi)\vali'(z)dz}{
    \alloci(\wals)}
\end{align*}
\begin{align*}
\intertext{Self-partial:}
&\frac{\partial \payi}{\partial \wali}(\wals)=\vali'(\wali)-\frac{\alloci(\wals)\vali'(\wali)}{\alloci(\wals)}\\
&\quad+\frac{\int^{\wali}_{\wali(0)}\alloci(z,\walsmi)\vali'(z)dz\cdot \frac{\partial \alloci}{\partial \wali}(\wals)}{\alloci^2(\wals)}\\
&=\frac{\int^{\wali}_{\wali(0)}\alloci(z,\walsmi)\vali'(z)dz\cdot \frac{\partial \alloci}{\partial \wali}(\wals)}{\alloci^2(\wals)}\\
&=\frac{\int^{\wali}_{\wali(0)}\alloci(z,\walsmi)\vali'(z)dz\cdot \frac{\left(\sum_k \wali[k]\right)-\wali}{\left(\sum_k \wali[k]\right)^2}}{\left(\frac{\wali}{\sum_k \wali[k]}\right)^2}\\
&= \int^{\wali}_{\wali(0)}\alloci(z,\walsmi)\vali'(z)\left[ \frac{\left(\sum_k \wali[k]\right)-\wali}{\wali^2} \right]dz\\
&= \int^{\wali}_{\wali(0)}\vali'(z)\frac{1}{\wali}\cdot \frac{z}{\left(\sum_k \wali[k]\right)-\wali+z}\left[\frac{\sum_k \wali[k]}{\wali}-1\right]dz
\end{align*}
\footnotesize
\begin{align*}
\intertext{Cross-partials:}
&\frac{\partial \payi}{\partial \wali[j]}(\wals)=
-\frac{\int^{\wali}_{\wali(0)}\frac{\partial \alloci}{\partial \wali[j]}(z,\walsmi)\vali'(z)dz}{\alloci(\wals)}\\
&\quad+ \frac{\frac{\partial \alloci}{\partial \wali[j]}(\wals)\int^{\wali}_{\wali(0)}\alloci(z,\walsmi)\vali'(z)dz}{\alloci^2(\wals)}\\
&= \frac{\int^{\wali}_{\wali(0)}\vali'(z)\left[\alloci(z,\walsmi)\frac{\partial \alloci}{\partial \wali[j]}(\wals)
-\frac{\partial \alloci}{\partial \wali[j]}(z,\walsmi)\alloci(\wals)\right]dz}{\alloci^2(\wals)}\\
&= \frac{\int^{\wali}_{\wali(0)}\vali'(z)\left[\alloci(z,\walsmi)\frac{-\alloci(\wals)}{\sum_k \wali[k]}
-\frac{-\alloci(z,\walsmi)}{\left(\sum_k \wali[k]\right)-\wali+z}\alloci(\wals)\right]dz}{\alloci^2(\wals)}\\
&= \frac{\int^{\wali}_{\wali(0)}\vali'(z)\left[\alloci(z,\walsmi)\frac{-1}{\sum_k \wali[k]}
+\frac{\alloci(z,\walsmi)}{\left(\sum_k \wali[k]\right)-\wali+z}\right]dz}{\frac{\wali}{\sum_k \wali[k]}}\\
&= \int^{\wali}_{\wali(0)}\frac{\vali'(z)}{\wali}\cdot \frac{z}{\left(\sum_k \wali[k]\right)-\wali+z}\left[\frac{\sum_k \wali[k]}{\left(\sum_k \wali[k]\right)-\wali+z}-1\right]dz
\end{align*}
\normalsize
\else
\begin{align*}
\intertext{Allocation rule sub-calculations:}
\alloci(\wals) &= \frac{\wali}{\sum_k \wali[k]}\\
\frac{\partial\alloci}{\partial \wali}(\wals) &= \frac{\left(\sum_k \wali[k]\right)-\wali}{\left(\sum_k \wali[k]\right)^2}\\
\frac{\partial\alloci}{\partial \wali[j]}(\wals) &= \frac{-\wali}{\left(\sum_k \wali[k]\right)^2} = \frac{-\alloci(\wals)}{\sum_k \wali[k]}\\
\intertext{Re-stating the bid function:}
\payi(\wals)&=v_i(\wali)-\frac{\int^{\wali}_{\wali(0)}\alloci(z,\walsmi)\vali'(z)dz}{
    \alloci(\wals)}\\
\intertext{Self-partial:}
\frac{\partial \payi}{\partial \wali}(\wals)&=\vali'(\wali)-\frac{\alloci(\wals)\vali'(\wali)}{\alloci(\wals)}+\frac{\int^{\wali}_{\wali(0)}\alloci(z,\walsmi)\vali'(z)dz\cdot \frac{\partial \alloci}{\partial \wali}(\wals)}{\alloci^2(\wals)}\\
&=\frac{\int^{\wali}_{\wali(0)}\alloci(z,\walsmi)\vali'(z)dz\cdot \frac{\partial \alloci}{\partial \wali}(\wals)}{\alloci^2(\wals)}\\
&=\frac{\int^{\wali}_{\wali(0)}\alloci(z,\walsmi)\vali'(z)dz\cdot \frac{\left(\sum_k \wali[k]\right)-\wali}{\left(\sum_k \wali[k]\right)^2}}{\left(\frac{\wali}{\sum_k \wali[k]}\right)^2}\\
&= \int^{\wali}_{\wali(0)}\alloci(z,\walsmi)\vali'(z)\left[ \frac{\left(\sum_k \wali[k]\right)-\wali}{\wali^2} \right]dz\\
&= \int^{\wali}_{\wali(0)}\vali'(z)\frac{1}{\wali}\cdot \frac{z}{\left(\sum_k \wali[k]\right)-\wali+z}\cdot\left[\frac{\sum_k \wali[k]}{\wali}-1\right]dz\\
\intertext{Cross-partials:}
\frac{\partial \payi}{\partial \wali[j]}(\wals)&=
-\frac{\int^{\wali}_{\wali(0)}\frac{\partial \alloci}{\partial \wali[j]}(z,\walsmi)\vali'(z)dz}{\alloci(\wals)}
+ \frac{\frac{\partial \alloci}{\partial \wali[j]}(\wals)\int^{\wali}_{\wali(0)}\alloci(z,\walsmi)\vali'(z)dz}{\alloci^2(\wals)}\\
&= \frac{\int^{\wali}_{\wali(0)}\vali'(z)\cdot\left[\alloci(z,\walsmi)\frac{\partial \alloci}{\partial \wali[j]}(\wals)
-\frac{\partial \alloci}{\partial \wali[j]}(z,\walsmi)\alloci(\wals)\right]dz}{\alloci^2(\wals)}\\
&= \frac{\int^{\wali}_{\wali(0)}\vali'(z)\cdot\left[\alloci(z,\walsmi)\frac{-\alloci(\wals)}{\sum_k \wali[k]}
-\frac{-\alloci(z,\walsmi)}{\left(\sum_k \wali[k]\right)-\wali+z}\alloci(\wals)\right]dz}{\alloci^2(\wals)}\\
&= \frac{\int^{\wali}_{\wali(0)}\vali'(z)\cdot\left[\alloci(z,\walsmi)\frac{-1}{\sum_k \wali[k]}
+\frac{\alloci(z,\walsmi)}{\left(\sum_k \wali[k]\right)-\wali+z}\right]dz}{\frac{\wali}{\sum_k \wali[k]}}\\
&= \int^{\wali}_{\wali(0)}\vali'(z)\frac{1}{\wali}\cdot\left[\frac{-z}{\left(\sum_k \wali[k]\right)-\wali+z}+\frac{z}{(\left(\sum_k \wali[k]\right)-\wali+z)^2}\cdot\sum_k \wali[k]\right]dz\\
&= \int^{\wali}_{\wali(0)}\vali'(z)\frac{1}{\wali}\cdot \frac{z}{\left(\sum_k \wali[k]\right)-\wali+z}\cdot\left[\frac{\sum_k \wali[k]}{\left(\sum_k \wali[k]\right)-\wali+z}-1\right]dz
\end{align*}
\fi
\fi

\begin{numberedlemma}{\ref{lem:pderi}}
Given the price function $\pays$ for proportional weights, for $j,k\neq i$, the cross derivatives are the same: $\frac{\partial \payi}{\partial w_j} =\frac{\partial \payi}{\partial w_k}$.  Evaluating the Jacobian at $\wals$, further, all elements of the Jacobian matrix $J_{\pays}$ are positive, i.e., $\frac{\partial \payi}{\partial \wali}>0,~\frac{\partial \payi}{\partial w_j}>0$, except at the $\wali(0)$ lower boundary in dimension $i$ where the elements of row $i$ are $\frac{\partial \payi}{\partial \wali}= \frac{\partial \payi}{\partial w_j}=0$.  The lower boundaries are identified.
\end{numberedlemma}
\ifsodacut
\else
\begin{proof}
All cross-derivatives $\frac{\partial \payi}{\partial \wali[j]}$ for fixed $i$ and $j\neq i$ are equal because a $d\wali[j]$ increase in the weight of any other agent $j$ ``looks the same" mathematically to the proportional weights allocation rule of agent $i$, which is $\alloci(\wali) = \frac{\wali}{\wali+\sum_{j\neq i} \wali[j]}$.

We continue by recalling our assumption that weights are strictly positive and strictly increasing in value.  Then all terms in the derivative equations~\eqref{eqn:partialsame} and~\eqref{eqn:partialcross} within the integrals are non-negative everywhere by inspection.  All denominator terms are strictly positive everywhere.

For any dimension $i$, consider $\wali>\wali(0)$.  
For integrand $z$ strictly interior to the endpoints in $\left(\wali(0),\wali\right)$, all terms in the derivative equations are strictly positive everywhere.  With non-negativity everywhere and positivity somewhere, all derivatives evaluate to be strictly positive.
\end{proof}
\fi

\begin{lemma}
\label{lem:concutil}
Each function $\cumimbali$ is strictly concave taking derivatives with respect to $i$, except at the lower end point of its domain where it is weakly concave.
\end{lemma}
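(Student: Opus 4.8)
The plan is to reduce the entire claim to the sign of a single second partial derivative, after which it becomes a direct consequence of \Cref{lem:pderi}. The key observation is that $\cumimbali$ is, by construction in equation~\eqref{eqn:computil}, the integral of the price-imbalance $\imbali$ along the $i$th coordinate starting from the fixed lower limit $\wali(0)$, so the fundamental theorem of calculus immediately yields $\partial \cumimbali / \partial \ewali = \imbali(\ewali,\ewalsmi)$. In words, the first-order condition in the $i$th dimension recovers exactly the imbalance, which is precisely the design goal that makes a zero of $\imbali$ optimal.

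Next I would differentiate a second time. Using $\imbali(\ewali,\ewalsmi) = \rezi - \payi(\ewali,\ewalsmi)$ from equation~\eqref{eqn:overpay}, with the observed price $\rezi$ and the other proxy actions $\ewalsmi$ held constant, I get $\partial^2 \cumimbali / \partial \ewali^2 = \partial \imbali / \partial \ewali = -\partial \payi / \partial \ewali$. Consequently, concavity of $\cumimbali$ in its own dimension is governed entirely by the sign of the self-partial $\partial \payi / \partial \ewali$ of the price function: strict concavity is equivalent to $\partial \payi / \partial \ewali > 0$, and (merely) weak concavity to $\partial \payi / \partial \ewali = 0$.

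It then remains only to invoke \Cref{lem:pderi}, which establishes in weights space that the diagonal entry $\partial \payi / \partial \wali$ of the Jacobian $J_{\pays}$ is strictly positive at every weight $\wali > \wali(0)$ and vanishes exactly at the lower boundary $\wali = \wali(0)$. Substituting this into the expression above gives $\partial^2 \cumimbali / \partial \ewali^2 < 0$ at every interior point of the $i$th coordinate and $= 0$ at its lower endpoint, which is exactly strict concavity in dimension $i$ everywhere except weak concavity at the lower end of the domain.

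There is no substantial obstacle: the argument is a two-line differentiation followed by a citation. The only points requiring care are (i) treating $\rezi$ and the coordinates $\ewalsmi$ of the other proxy players as constant parameters when differentiating in the $i$th dimension, so the cross-partials of $\payi$ play no role, and (ii) correctly isolating the single boundary point $\wali = \wali(0)$, where the self-partial of $\payi$ is zero, as the unique locus of merely weak rather than strict concavity. The genuine content of the lemma lives upstream in \Cref{lem:pderi}, whose derivative computation in equations~\eqref{eqn:partialsame}--\eqref{eqn:partialcross} supplies the strict positivity that this lemma simply consumes.
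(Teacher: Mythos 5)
Your proposal is correct and follows essentially the same route as the paper: both reduce the concavity of $\cumimbali$ in its own coordinate to the sign of the self-partial $\partial \payi/\partial \wali$ via \Cref{lem:pderi}, concluding strict concavity off the lower boundary and weak concavity at $\wali(0)$. Your write-up merely makes the two differentiation steps more explicit than the paper's one-line argument.
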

\ifsodacut
\else
\begin{proof}
We have $\cumimbali(\ewali,\ewalsmi) = \int_{\wali(0)}^{\ewali} \imbali(z,\ewalsmi)\,\dd z = \int_{\wali(0)}^{\ewali}\rezi - \payi(\ewali,\ewalsmi)\,\dd z $, i.e., the function $\cumimbali$ is defined as the integral over the quantity which subtracts the price function $\payi$ from a constant price term $\rezi$.  In \Cref{lem:pderi} (appearing immediately above), function $\payi$ is shown to be monotone strictly increasing on its domain except at the lower bound where its derivative is 0.  Such an integral is concave on its domain as stated.
\end{proof}
\fi


\begin{lemma}
\label{lem:priceid}
Given agents with (unknown) values $\vals\in\left[0,\maxval\right]^n$.  Consider the price function $\pays$ resulting from a strictly increasing, continuous and differentiable proportional weights social choice function $\allocs$, and dominant-strategy incentive-compatible mechanism implementing $\allocs$.  The lower boundaries of weights space are identified boundaries (\Cref{def:identbound}).
\end{lemma}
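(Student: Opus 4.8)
The plan is to verify directly the two defining conditions of an identified boundary (\Cref{def:identbound}) for each lower boundary $\wali=\wali(0)$ of the weights-space domain $\Omega=\prod_k[\wali[k](0),\wali[k](\maxval)]$, taking $f=\pays$ and $c_i=a_i=\wali(0)$. Both conditions follow almost immediately from the partial-derivative facts already collected in \Cref{lem:pderi}, so the work is essentially bookkeeping rather than new computation.

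First I would establish the ``constant on the boundary'' condition. \Cref{lem:pderi} states that when the input lies on the lower boundary in dimension $i$ (that is, $\wali=\wali(0)$), the entire $i$-th row of the Jacobian $J_{\pays}$ vanishes; in particular every cross-partial satisfies $\frac{\partial \payi}{\partial \wali[j]}(\wali(0),\walsmi)=0$ for all $j\neq i$ and all $\walsmi$. Since these cross-partials vanish uniformly over $\walsmi$, the restriction $\payi(\wali(0),\walsmi)$ is constant in $\walsmi$, which is exactly the first bullet of \Cref{def:identbound}. For concreteness I would also note that substituting $\wali=\wali(0)$ into the price function \eqref{eqn:bidfn:w} collapses the integral $\int_{\wali(0)}^{\wali(0)}\!\cdots\,\dd z$ to zero, so the constant value is $\vali(\wali(0))=0$; this is the intuitive statement that a value-$0$ agent pays $0$.

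Second I would establish the ``output unique to the boundary'' condition, for which the remark following \Cref{def:identbound} tells us it suffices to show $\frac{\partial \payi}{\partial \wali}>0$ at every input off the boundary. This is again supplied by \Cref{lem:pderi}, which gives $\frac{\partial \payi}{\partial \wali}(\wals)>0$ whenever $\wali>\wali(0)$. Hence for any fixed $\walsmi$ and any $d_i>\wali(0)$, the fundamental theorem of calculus yields
\[
\payi(d_i,\walsmi)-\payi(\wali(0),\walsmi)=\int_{\wali(0)}^{d_i}\frac{\partial \payi}{\partial \wali}(z,\walsmi)\,\dd z>0,
\]
so $\payi(d_i,\walsmi)\neq\payi(\wali(0),\walsmi)$, verifying the second bullet. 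Combining the two conditions shows that each lower boundary is an identified boundary.

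The argument presents no substantial obstacle once \Cref{lem:pderi} is in hand; the only point requiring care is that the first condition demands the cross-partials vanish for \emph{all} $\walsmi$ on the boundary (in order to conclude constancy), rather than merely at a single point. Since \Cref{lem:pderi} asserts this vanishing uniformly across the boundary, the conclusion is clean, and the upper-boundary analogue (should it be needed) would follow by the symmetric argument using the strict positivity of $\frac{\partial \payi}{\partial \wali}$ below $\wali(\maxval)$.
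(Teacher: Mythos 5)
Your proof is correct and follows essentially the same route as the paper's: both arguments invoke \Cref{lem:pderi} to get the vanishing cross-partials on the lower boundary (giving constancy of $\payi(\wali(0),\cdot)$) and the strictly positive self-partial off the boundary, then integrate the self-partial to conclude the boundary output is not attained elsewhere. Your additional observation that the constant value is $0$ matches the paper's remark after \Cref{lem:identify} and is a harmless bonus.
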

\ifsodacut
\else
\begin{proof}
By our assumptions in \Cref{s:prelim} for a proportional weights social choice function $\allocs$, its (parameter) weight functions are strictly positive, even for an agent with value 0.  Self-partials in equation~\eqref{eqn:partialsame} and cross-partials in equation~\eqref{eqn:partialcross} are well-defined.  By \Cref{lem:pderi}, for each $i$ the cross-partials at the lower bound of weight space $\wali(0)$ are everywhere identically 0, for all $i$, regardless of $\walsmi$, meeting the first requirement in the definition of an identified boundary.  Again by \Cref{lem:pderi}, self-partials $\partial f_i/\partial \omega_i$ are strictly positive everywhere above the lower boundary ($d_i > a_i$, c.f. proof of \Cref{lem:identify}): these are the diagonal element of the Jacobian at index $(i,i)$.  Therefore $f_i(d_i,\bomega_{-i})=f_i(a_i,\bomega_{-i})+\int_{a_i}^{d_i}\left[\partial f_i /\partial \omega_i(z,~\bomegasmi) \right]dz>f_i(a_i,\bomega_{-i})$.
\end{proof}
\fi

\subsection{Proof of \Cref{thm:mpd} in Section~\ref{s:oneweightspsd}}
\label{as:hmatrixpsd}
\ifsodacut
For the purposes of space, the lengthy proof of \Cref{thm:mpd} appears in the full version of the paper.
\fi

\begin{numberedtheorem}{\ref{thm:mpd}}
Consider a $K\times K$ matrix $\scrbh$ with diagonal $\scrlh_1, \scrlh_2, \ldots,
\scrlh_K$ and all other entries equal to 1 (and without loss of
generality $\scrlh_1 \leq \scrlh_2 \leq \ldots \leq \scrlh_K$).  
The following is a complete characterization describing when $\scrbh$ is
positive definite.
\begin{enumerate}
\item if $\scrlh_1 \leq 0$, then the matrix $\scrbh$ is not positive definite;
\item if $\scrlh_1 \geq 1$ and $\scrlh_2 > 1$, then $\scrbh$ is positive definite;
\item if $0<\scrlh_1,\scrlh_2 \leq 1$, then $\scrbh$ is not positive definite;
\item if $0<\scrlh_1< 1$ and $\scrlh_2>1$, then $\scrbh$ is positive definite if and only if $\sum_k \frac{1}{1-\scrlh_k} > 1$.
\end{enumerate}
\end{numberedtheorem}

\ifsodacut
\else
\begin{proof}
To prove positive definiteness in cases (2) and (4), we will show that for any non-zero vector $\vecbz$, it must be true that $\vecbz^\top \scrbh\,\vecbz>0$.  For cases (1) and (3) we give counterexamples of $\vecbz$ for which $\vecbz^\top \scrbh\,\vecbz\leq 0$.  Given the structure of $\scrbh$ (as all ones except the diagonal), we have
\begin{equation}
\label{eqn:zmz}
\vecbz^\top \scrbh\,\vecbz=\left(\sum\nolimits_i z_i\right)^2+\sum\nolimits_i(\scrlh_i-1)z_i^2.
\end{equation}

We recall for use throughout this proof the assumption that, without
loss of generality, the diagonal elements are such that $\scrlh_1 \leq
\scrlh_2 \leq \ldots \leq \scrlh_K$.  We prove each case of the
characterization in turn.

Case (1) is correct by counter-example, setting $\vecbz =
(-1,0,\ldots,0)$.\footnote{Of course, it is a well-known property of
  positive definite matrices $\scrbh$ that all diagonal elements must
  be strictly positive, otherwise they have $\vecbz^\top
  \scrbh\,\vecbz \leq 0$ with a simple counter-example $\vecbz$
  described by all zeroes except $-1$ in the index of the matrix's
  non-positive diagonal element.}

Case (2) is correct by inspection of equation~\eqref{eqn:zmz} in which all terms are non-negative.  The vector $\vecbz$ is non-zero, so either a $(\scrlh_j-1)z_j^2$ term for $j\neq1$ in the second sum is strictly larger than 0,  or all such $z_j$ are 0 but then $z_1\neq 0$ and the first sum-squared is strictly larger than 0.

Case (3) is correct by counter-example, setting $\vecbz = (1,-1,0,\ldots,0)$.

For case (4), we need to prove that when $0<\scrlh_1< 1$ and $\scrlh_2>1$, then the matrix $\scrbh$ is positive definite if and only if $\sum_k \frac{1}{1-\scrlh_k} > 1$.

For this last case, given the assumptions on the $\scrlh_i$ elements, only the $(\scrlh_1-1)z_1^2$ term from equation~\eqref{eqn:zmz} is negative, all other terms are non-negative.  Therefore, from this point on, we can ignore any sub-case where $z_1 = 0$, as some $(\scrlh_j-1)z_j^2$ term for $j\neq 1$ must be strictly positive.

Now consider fixing the value $z_1$ to any real number $\zbar\neq 0$.
We will show that equation~\eqref{eqn:zmz} is strictly positive for any
$\zees \in\reals^{n-1}$.  Specifically, for any fixed $\zbar\neq 0$,
equation~\eqref{eqn:zmz} has a global minimum in variables $\zees$ that is
strictly positive.  This global minimum $\optzees$ satisfies
\ifsoda
\begin{align}
&\optzees =\argmin_{\zees}(\bar{z}_1,\vecbz_{-1})^{\top}\cdot \scrbh\cdot(\bar{z}_1,\vecbz_{-1})\\
\label{eqn:hmatrixtestobjective}
&= \argmin_{\zees}
\left(\bar{z}_1+\sum\nolimits_{j\geq2} z_j\right)^2+\sum\nolimits_{j\geq2}(\scrlh_j-1)z_j^2
\end{align}
\else
\begin{align}
\optzees &=\argmin_{\zees}(\bar{z}_1,\vecbz_{-1})^{\top}\cdot \scrbh\cdot(\bar{z}_1,\vecbz_{-1})\\
\label{eqn:hmatrixtestobjective}
&= \argmin_{\zees}
\left(\bar{z}_1+\sum\nolimits_{j\geq2} z_j\right)^2+\sum\nolimits_{j\geq2}(\scrlh_j-1)z_j^2
\end{align}
\fi
where the second line substitutes equation~\eqref{eqn:zmz} and drops
the constant $\zbar$ term from the right hand sum.  It will be
convenient to denote the sum of the variables as $\zum = \zbar +
\sum_{j\geq 2} \optzeei[j]$.  After the brief argument that the minimizer
$\optzees$ exists and is characterized by its first-order conditions,
we will use first-order conditions on $\optzees$ to write all
variables in terms of $\zum$ which we substitute into \eqref{eqn:zmz}
to analyze.

To show that $\optzees$ exists and is characterized by its first-order
conditions, observe that the polynomial $(\bar{z}_1,\vecbz_{-1})^{\top}\scrbh\,(\bar{z}_1,\vecbz_{-1})$ is a quadratic form with Hessian 
$2\cdot\scrbh_{\left[2:K,2:K\right]}$, i.e., twice the
matrix $\scrbh$ without the first row and column:\ifsoda
~$\text{Hessian}((\bar{z}_1,\vecbz_{-1})^{\top}\scrbh\,(\bar{z}_1,\vecbz_{-1}) )=$
\[
\scrbh_{\left[2:K,2:K\right]}= 
	\begin{bmatrix}
  \scrlh_2  &     1 & \dots  &    1 \\
   1  &  \scrlh_3 & \dots  &    1 \\
 \vdots & \vdots & \ddots & \vdots \\
    1  &      1  & \dots  &      \scrlh_K
\end{bmatrix}.
\]
\else
\[
	\text{Hessian}((\bar{z}_1,\vecbz_{-1})^{\top}\scrbh\,(\bar{z}_1,\vecbz_{-1}) )=\scrbh_{\left[2:K,2:K\right]}=\begin{bmatrix}
  \scrlh_2  &     1 & \dots  &    1 \\
   1  &  \scrlh_3 & \dots  &    1 \\
 \vdots & \vdots & \ddots & \vdots \\
    1  &      1  & \dots  &      \scrlh_K
\end{bmatrix}.
\]
\fi
Matrix $\scrbh_{\left[2:K,2:K\right]}$ is ones except by assumption we
have $\scrlh_j>1$ for $j\geq 2$ in the diagonal; thus, by case (2) of
the theorem, it is positive definite.  A quadratic
form with strictly positive definite Hessian has a unique local minimum which is characterized by its
first-order conditions.

We now use the first-order conditions to write optimizer $\optzees$
of equation~\eqref{eqn:hmatrixtestobjective} in terms of $\zum$.\footnote{Note that line~\eqref{eqn:uniquefocs} is not a definition for $z_j^*$, which appears on both sides of the equation.  The goal is substitution of $z_j^*$ from necessary first-order conditions, not to define it.}
\ifsoda
\begin{align}
\intertext{The following two lines each hold for each $j\geq 2$:}
\label{eqn:firstfocsdef}
0 &= 2\left(\bar{z}_1 + \left(\sum\nolimits_{k \geq 2, k\neq j} z_k^*\right)+(\scrlh_j-1)\cdot z_j^*\right) \\
\intertext{and re-arranging:}
\label{eqn:uniquefocs}
z_j^*&=\frac{1}{1-\scrlh_j}\,\zum 
\end{align}
\else
\begin{align}
\label{eqn:firstfocsdef}
0 &= 2\left(\bar{z}_1 + \left(\sum\nolimits_{k \geq 2, k\neq j} z_k^*\right)+(\scrlh_j-1)\cdot z_j^*\right) &\text{for each $j\geq 2$}\\
\intertext{and re-arranging:}
\label{eqn:uniquefocs}
z_j^*&=\frac{1}{1-\scrlh_j}\,\zum &\text{for each $j\geq 2$}
\end{align}
\fi

We now similarly identify a substitution of $\zbar$ in terms of
$\zum$.  Starting from equation~\eqref{eqn:uniquefocs}, sum the
$z_j^*$ first-order condition equalities over all $j\geq 2$:
\begin{align}
\sum_{j\geq2} z_j^*&=\sum_{j\geq2}\left(\frac{1}{1-\scrlh_j} \zum\right)\\
\intertext{Add $\zbar$ to both sides of the equation:}
1\cdot\left(\bar{z}_1+\sum\nolimits_{j\geq2} z_j^*\right)&=\bar{z}_1+\left(\sum\nolimits_{j\geq2}\frac{1}{1-\scrlh_j}\right)\zum\\
\intertext{Substitute $\zum$ on the left and solve for the right-hand side $\zbar$ term:}
\label{eqn:zone}
\bar{z}_1&=\left(1-\sum\nolimits_{j\geq 2} \frac{1}{1-\scrlh_j}\right)\cdot\zum.
\end{align}
Notice that equation~\eqref{eqn:zone} and the definition of $\zbar \neq
0$ excludes the possibility that $\zum=0$.

In the analysis below, the first line re-writes the objective function
in~\eqref{eqn:zmz}.
The second line substitutes equations~\eqref{eqn:uniquefocs}
and~\eqref{eqn:zone}.  Subsequent lines are elementary manipulations.\ifsoda  For space considerations, we divide up front by $\zum^2$ and re-summarize at the end.

\begin{align*}
&\frac{1}{\zum^2}(\zbar,\optzees)^{\top}\cdot \scrbh\cdot(\zbar,\optzees)\\ 
&= \frac{1}{\zum^2}\left[(\scrlh_1-1)\,\zbar + \zum^2 + \sum\nolimits_{j\geq 2} \left(\scrlh_j - 1\right)\,\optzeei[j] \right]\\
&=\left(\scrlh_1-1\right)\left(1-\sum_{j\geq2}\frac{1}{1-\scrlh_j}\right)^2+
\left(1-\sum_{j\geq2}\frac{1}{1-\scrlh_j}\right)\\
&=\left[\left(\scrlh_1-1\right)\left(1-\sum_{j\geq2}\frac{1}{1-\scrlh_j}\right)^2+1-\sum_{j\geq2}\frac{1}{1-\scrlh_j}\right]\\
&=\left(1-\sum\nolimits_{j\geq2}\frac{1}{1-\scrlh_j}\right)\left(1-\scrlh_1\right)\left[\sum\nolimits_k\frac{1}{1-\scrlh_k}-1\right]
\intertext{such that}
&(\zbar,\optzees)^{\top}\cdot \scrbh\cdot(\zbar,\optzees)=\\
&\zum^2\left(1-\sum\nolimits_{j\geq2}\frac{1}{1-\scrlh_j}\right)\left(1-\scrlh_1\right)\left[\sum\nolimits_k\frac{1}{1-\scrlh_k}-1\right]
\end{align*}
\else
\begin{align*}
\MoveEqLeft
(\zbar,\optzees)^{\top}\cdot \scrbh\cdot(\zbar,\optzees)\\ 
&= (\scrlh_1-1)\,\zbar + \zum^2 + \sum\nolimits_{j\geq 2} \left(\scrlh_j - 1\right)\,\optzeei[j] \\
&=\left(\scrlh_1-1\right)\left(1-\sum_{j\geq2}\frac{1}{1-\scrlh_j}\right)^2\zum^2+
\left(1-\sum_{j\geq2}\frac{1}{1-\scrlh_j}\right)\zum^2\\
&=\zum^2\,\left[\left(\scrlh_1-1\right)\left(1-\sum_{j\geq2}\frac{1}{1-\scrlh_j}\right)^2+1-\sum_{j\geq2}\frac{1}{1-\scrlh_j}\right]\\
&=\zum^2\,\left(1-\sum_{j\geq2}\frac{1}{1-\scrlh_j}\right)\left[\left(1-\scrlh_1\right)\left(\sum_{j\geq2}\frac{1}{1-\scrlh_j}-1\right)+\frac{1-\scrlh_1}{1-\scrlh_1}\right]\\
&=\zum^2\,\left(1-\sum\nolimits_{j\geq2}\frac{1}{1-\scrlh_j}\right)\left(1-\scrlh_1\right)\left[\sum\nolimits_k\frac{1}{1-\scrlh_k}-1\right].
\end{align*}
\fi
Given the assumptions on the $\scrlh_i$ for current case (4), the
first three terms of this product are strictly positive (recalling $0<\scrlh_1<1$ and $\scrlh_j>1$ for $j>1$, and $\zbar \neq 0$ and $\zum\neq 0$, so $(\zum)^2>0$).  

To finish, we
observe that the exact dependence of positive definiteness of the matrix
$\scrbh$ is on the bracketed fourth term (where the first term $k=1$
of the sum is positive and all of the other terms are negative):
\ifsoda
\begin{align*}
\label{eqn:lastline}
&\text{for}~0<\scrlh_1<1 \textrm{ and } \scrlh_j > 1~\forall j\geq 2,~\scrbh \text{ is positive definite }\\
&\textit{iff } \left[\sum\nolimits_k\tfrac{1}{1-\scrlh_k}-1\right] > 0. \qedend
\end{align*}
\else
\begin{equation*}
\label{eqn:lastline}
\textrm{ For } 0<\scrlh_1<1 \textrm{ and } \scrlh_j > 1~\forall j\geq 2,~\scrbh \text{ is positive definite } \textit{iff } \left[\sum\nolimits_k\tfrac{1}{1-\scrlh_k}-1\right] > 0. \qedend
\end{equation*}
\fi
\end{proof}
\fi

\subsection{Lemmas Supporting Theorem~\ref{thm:mpd2} in Section~\ref{s:oneweightspsd}}
\label{as:thmmpd}

\ifsodacut
For the purposes of space, these proofs appear in the full version of the paper.
\fi

\begin{numberedlemma}{\ref{lem:halfw}}
If $h_i\leq 1$, then $w_i> 0.5\sum_k w_k$, and all other weights must have $w_j < 0.5\sum_k w_k$, and all other $h_j>1$.
\end{numberedlemma}
\ifsodacut
\else
\begin{proof}
Writing out $h_i$ from its definition as the ratio of partial derivatives,\ifsoda~$h_i=$
\begin{equation*}
\frac{ \int^{\wali}_{\wali(0)}\vali'(z)\frac{1}{\wali}\cdot \frac{z}{\sum_k \wali[k]-\wali+z}\cdot\left[\frac{\sum_k \wali[k]}{\wali}-1\right]dz}{ \int^{\wali}_{\wali(0)}\vali'(z)\frac{1}{\wali}\cdot \frac{z}{\sum_k \wali[k]-\wali+z}\cdot\left[\frac{\sum_k \wali[k]}{\sum_k \wali[k]-\wali+z}-1\right]dz}
\end{equation*}
\else
\begin{eqnarray*}
&&h_i=\frac{ \int^{\wali}_{\wali(0)}\vali'(z)\frac{1}{\wali}\cdot \frac{z}{\sum_k \wali[k]-\wali+z}\cdot\left[\frac{\sum_k \wali[k]}{\wali}-1\right]dz}{ \int^{\wali}_{\wali(0)}\vali'(z)\frac{1}{\wali}\cdot \frac{z}{\sum_k \wali[k]-\wali+z}\cdot\left[\frac{\sum_k \wali[k]}{\sum_k \wali[k]-\wali+z}-1\right]dz}
\end{eqnarray*}
\fi
If $h_i\leq 1$, by implication it is well-defined so the denominator can not disappear and $\wali>\wali(0)$.  There must exist $ z\in\left(0,\wali\right]$, such that

\begin{eqnarray}
\label{eqn:smallhi}
\frac{\sum_k w_k}{\sum_k w_k-w_i+z}\geq\frac{\sum_k w_k}{w_i}
\end{eqnarray}

\noindent which implies $w_i> 0.5\sum_k w_k$ by noting equal numerators and comparison of denominators.  
The rest of the claim follows as $\wali$ is obviously the only weight more than half the total, and claiming $h_j>1$ for other $j$ is simply an explicit statement of the contrapositive.
\end{proof}
\fi

\Cref{lem:fin} proves the necessary and sufficient lower bound to show that $\pays$ meets the conditions of Theorem~\ref{thm:mpd2} Case (4).  Technical \Cref{lem:perterm} below it is used by \Cref{lem:fin}.

\begin{numberedlemma}{\ref{lem:fin}}
When $h_1<1$ and $h_j>1~\forall j\neq1$, we have $\sum_k \frac{1}{1-h_k}>1$.
\end{numberedlemma}
\ifsodacut
\else
\begin{proof}
\label{proof:fin}
With $h_1<1$ by assumption, then $\wali[1]>0.5\sum_k \wali[k]$ by Lemma \ref{lem:halfw}, and $\alloci[1]>0.5$.
Thus $\alloci[j]<0.5$ for $j\neq 1$ and we can apply Lemma \ref{lem:perterm} (below), to get the first inequality in the following analysis:
\begin{eqnarray*}
\sum_k \frac{1}{1-h_k}&>&\frac{\alloci[1]^2}{2\alloci[1]-1}+\sum_{k>1}\frac{\alloci[k]^2}{2\alloci[k]-1}\\
&\geq&\frac{\alloci[1]^2}{2\alloci[1]-1}+\frac{(1-\alloci[1])^2}{2(1-\alloci[1])-1}\\
&=&1
\end{eqnarray*}

\noindent and with the second step following because $\frac{\alloci[k]^2}{2\alloci[k]-1}\Big|_0=0$ and is a concave function when $0<\alloci[k]<0.5$ and $\sum_{k>1} \alloci[k] = (1-\alloci[1])$  
(its second derivative is $\frac{2}{(2\alloci[k]-1)^3}$ and it acts submodular).
\end{proof}
\fi

\begin{lemma}
When $h_1<1$ and $h_j>1~\forall j\neq1$, then $\forall~i\in\left\{1,\ldots,n\right\}$, we have 
\label{lem:perterm}
$\frac{1}{1-h_i}>\frac{\alloci^2}{2\alloci-1}$.
\end{lemma}
\ifsodacut
\else
\begin{proof}
By subtracting 1 from both sides, it is equivalent to prove the inequality on the right:
\ifsoda
$$\frac{1}{1-h_i}>\frac{\alloci^2}{2\alloci-1}
 \Leftrightarrow \frac{h_i}{1-h_i}>\frac{\alloci^2-2\alloci+1}{2\alloci-1}=\frac{(1-\alloci)^2}{2\alloci-1}$$
\else
$$\frac{1}{1-h_i}>\frac{\alloci^2}{2\alloci-1}
\qquad \Longleftrightarrow\qquad \frac{h_i}{1-h_i}>\frac{\alloci^2-2\alloci+1}{2\alloci-1}=\frac{(1-\alloci)^2}{2\alloci-1}$$
\fi
Working from the definition of $h_i$:
\ifsoda
\begin{align*}
&\frac{h_i}{1-h_i}=\\
&\frac{ \int^{\wali}_{\wali(0)}\vali'(z)\frac{1}{\wali}\cdot \frac{z}{\sum_k \wali[k]-\wali+z}\cdot\left[\frac{\sum_k \wali[k]}{\wali}-1\right]dz}
{ \int^{\wali}_{\wali(0)}\vali'(z)\frac{1}{\wali}\cdot \frac{z}{\sum_k \wali[k]-\wali+z}\cdot\left[\frac{\sum_k \wali[k]}{\sum_k \wali[k]-\wali+z}-\frac{\sum_k \wali[k]}{\wali}\right]dz}
\end{align*}
\else
\begin{eqnarray*}
\frac{h_i}{1-h_i}&=&
\frac{ \int^{\wali}_{\wali(0)}\vali'(z)\frac{1}{\wali}\cdot \frac{z}{\sum_k \wali[k]-\wali+z}\cdot\left[\frac{\sum_k \wali[k]}{\wali}-1\right]dz}
{ \int^{\wali}_{\wali(0)}\vali'(z)\frac{1}{\wali}\cdot \frac{z}{\sum_k \wali[k]-\wali+z}\cdot\left[\frac{\sum_k \wali[k]}{\sum_k \wali[k]-\wali+z}-\frac{\sum_k \wali[k]}{\wali}\right]dz}
\end{eqnarray*}
\fi
The numerator is always positive.

For the denominator, we would like to get a less complex upper bound on it by dropping the $z$ term within the brackets.  Generally we can do this but we have to be careful that the overall sign of the denominator does not change.

For $i\neq1$ and $h_i>1$, then the denominator is negative by simple inspection of the left hand side.
For $i=1$, $h_1<1$, then the denominator is positive.
We relax the denominator and increase it, arguing after the calculations that doing this does not change the sign of the expression.\ifsoda For the bound, only the bracketed term changes, so the following only includes the bracketed term of the denominator.
\begin{align*}
&\left[\frac{\sum_k \wali[k]}{\sum_k \wali[k]-\wali+z}-\frac{\sum_k \wali[k]}{\wali}\right]<\\
&
\left[\frac{\sum_k \wali[k]}{\sum_k \wali[k]-\wali}-\frac{\sum_k \wali[k]}{\wali}\right]=
\left[\frac{\left(\sum_k \wali[k]\right)\left(2\wali-\sum_k \wali[k]\right)}{\wali\left(\sum_k \wali[k]-\wali\right)}\right]
\end{align*}
\else
\begin{align*}
&\int^{\wali}_{\wali(0)}\vali'(z)\frac{1}{\wali}\cdot \frac{z}{\sum_k \wali[k]-\wali+z}\cdot\left[\frac{\sum_k \wali[k]}{\sum_k \wali[k]-\wali+z}-\frac{\sum_k \wali[k]}{\wali}\right]dz\\
&\qquad<
\int^{\wali}_{\wali(0)}\vali'(z)\frac{1}{\wali}\cdot \frac{z}{\sum_k \wali[k]-\wali+z}\cdot\left[\frac{\sum_k \wali[k]}{\sum_k \wali[k]-\wali}-\frac{\sum_k \wali[k]}{\wali}\right]dz\\
&\qquad=
\int^{\wali}_{\wali(0)}\vali'(z)\frac{1}{\wali}\cdot \frac{z}{\sum_k \wali[k]-\wali+z}\cdot\left[\frac{\left(\sum_k \wali[k]\right)\left(2\wali-\sum_k \wali[k]\right)}{\wali\left(\sum_k \wali[k]-\wali\right)}\right]dz
\end{align*}
\fi
The important term is $\left(2\wali-\sum_k \wali[k]\right)$.  For $i=1$, $\wali[1]>0.5\sum_k \wali[k]$ by Lemma~\ref{lem:halfw}, and also for $j\neq 1$, $\wali[j]<0.5\sum_k \wali[k]$ by Lemma~\ref{lem:halfw}.  Then clearly the  denominator is still positive for $i=1$; and still negative for agents $i\neq 1$.
So we give a lower bound on the fraction using the proved upper bound on the denominator.
\ifsoda
\begin{align*}
&\frac{h_i}{1-h_i}>\\
&\frac{ \int^{\wali}_{\wali(0)}\vali'(z)\frac{1}{\wali}\cdot \frac{z}{\sum_k \wali[k]-\wali+z}\cdot\left[\frac{\sum_k \wali[k]}{\wali}-1\right]dz}
{\int^{\wali}_{\wali(0)}\vali'(z)\frac{1}{\wali}\cdot \frac{z}{\sum_k \wali[k]-\wali+z}\cdot\left[\frac{\sum_k \wali[k]}{\sum_k \wali[k]-\wali}-\frac{\sum_k \wali[k]}{\wali}\right]dz}\\
&=\frac{\frac{\sum_k \wali[k]}{\wali}-1}{\frac{\sum_k \wali[k]}{\sum_k \wali[k]-\wali}-\frac{\sum_k \wali[k]}{\wali}}\\
&=\frac{(1-\alloci)^2}{2\alloci-1} \qedend
\end{align*}
\else
\begin{align*}
\frac{h_i}{1-h_i}&>
\frac{ \int^{\wali}_{\wali(0)}\vali'(z)\frac{1}{\wali}\cdot \frac{z}{\sum_k \wali[k]-\wali+z}\cdot\left[\frac{\sum_k \wali[k]}{\wali}-1\right]dz}
{\int^{\wali}_{\wali(0)}\vali'(z)\frac{1}{\wali}\cdot \frac{z}{\sum_k \wali[k]-\wali+z}\cdot\left[\frac{\sum_k \wali[k]}{\sum_k \wali[k]-\wali}-\frac{\sum_k \wali[k]}{\wali}\right]dz}\\
&=\frac{\frac{\sum_k \wali[k]}{\wali}-1}{\frac{\sum_k \wali[k]}{\sum_k \wali[k]-\wali}-\frac{\sum_k \wali[k]}{\wali}}\\
&=\frac{(1-\alloci)^2}{2\alloci-1} \qedend
\end{align*}
\fi
\end{proof}
\fi


\section{Supporting Material for \Cref{s:computation}}
\label{a:computation}

\label{a:supportingalg}

The goal of this section is to show in detail how to reduce the price
inversion question to binary search.  We do this by showing that the
analysis is largely many-to-one separable: we can make meaningful
observations about each agent individually, in particular by treating
the (initially unknown) sum total of all weights $s=\sum_k \wali[k]$
as an independent variable used as input to the analysis of each
agent.

Before getting to the key results, we use a more measured pace than is
possible in the main body of the paper to give some preliminary
analysis of the problem regarding price functions and structure of
search spaces, in particular for ``small'' agents with weight at most
half the total.  We do this in \Cref{a:b:structure} and then the rest
of this section is laid out as follows: \Cref{a:b:alg} gives both
intuition and the fully detailed version of the
algorithm; \Cref{a:b:proofs} gives the proofs of the critical lemmas
and \Cref{thm:simplealg} from \Cref{s:computation}; and finally
technical \Cref{a:b:ends} is used to support \Cref{a:b:proofs} and to describe within the algorithm how we set up ``oracle checks'' to
find the correct sub-space of weight space to search for a solution, and the endpoints of binary search.

\subsection{First Computations and Analysis of the Search Space}
\label{a:b:structure}

This section exhibits the fundamentals of a reduced, separated,
one-agent analysis of the price inversion question.  Note the following explicit conversion of the
function $\payi(\cdot)$ to accept sum $s=\sum_k\wali[k]$ as an input
variable in place of $\walsmi$.  We recall
equation~\eqref{eqn:bidfn:w}:

\begin{eqnarray*}
\label{eqn:bidfn}
\payi(\wals)&=&\vali(\wali)-\frac{\int^{\wali}_{\wali(0)} \alloci(z,\walsmi)\vali'(z)dz}{\alloci(\wals)}
\end{eqnarray*}
where we also recall $\vali(\cdot)$ is overloaded to be the function
that maps from buyer $i$'s weight back to buyer $i$'s value
(well-defined by the assumption that $\wali(\cdot)$ is strictly
increasing).  Re-arranging we have:
\ifsoda
\begin{equation}
\label{eqn:stratiofswi}
\bpayi(s,\wali)=
\vali(\wali)-\frac{s}{\wali}\int^{\wali}_{\wali(0)} \frac{z}{s-\wali+z}\vali'(z)dz
\end{equation}
\else
\begin{eqnarray}
\label{eqn:stratiofswi}
\bpayi(s,\wali)&=&
\vali(\wali)-\frac{s}{\wali}\int^{\wali}_{\wali(0)} \frac{z}{s-\wali+z}\vali'(z)dz
\end{eqnarray}
\fi

The form of equation~\eqref{eqn:stratiofswi} illustrates the critical
relationships between $\bpayi$, $s$, and $\wali$.  Our high level goal
will be to understand the behavior of the function $\bpayi$ in the
space ranging over feasible $s$ and $\wali$, starting with the
technical computations of the partials on $\bpayi$.  Recall from
\Cref{lem:pderi} that functions $\payi$ have the same cross-partials
with respect to $w_j~\forall~j\neq i$.  This property extends to
$\bpayi$:
\begin{align*}
&\bpayi(s+d\wali, \wali+d\wali)=\payi(\wali+d\wali, \walsmi)\\
&\qquad \Rightarrow \qquad \tfrac{\partial \bpayi}{\partial s}d\wali+\tfrac{\partial \bpayi}{\partial \wali}d\wali=\tfrac{\partial \payi}{\partial \wali}d\wali\\
&\bpayi(s,\wali+d\wali)=\payi(\wali+d\wali, w_j-d\wali, \wals_{-i,j})\\
&\qquad\Rightarrow \qquad \tfrac{\partial \bpayi}{\partial \wali}d\wali=\tfrac{\partial \payi}{\partial \wali}d\wali-\tfrac{\partial {\payi}}{\partial w_j}d\wali
\end{align*}
Combining the above equations together, and any $j \neq i$ we get
\begin{align}
\label{eqn:betabardir}
\frac{\partial \bpayi}{\partial \wali}&=\frac{\partial \payi}{\partial \wali}-\frac{\partial \payi}{\partial \wali[j]}\\
\frac{\partial \bpayi}{\partial s}&=\frac{\partial \payi}{\partial \wali[j]} \quad \label{eqn:betabardir2}
\end{align}

We give the intuition for these calculations.  If $\wali$ increases unilaterally without a change in $s$, then it must be that some other $w_j$ decreases by an equal amount.  If we increase $s$ without an observed change in $w_i$, then it must be some other $w_j$ that increased.\footnote{Note that because all the cross-derivatives are the same, it is without loss of generality that we assume that changes $\partial s$ are entirely attributable to one other particular agent $j\neq i$ as $\partial \wali[j]$.}  The result is the symbolic identities as given in equations~\eqref{eqn:betabardir} and~\eqref{eqn:betabardir2} above.  We will evaluate them in more detail in \Cref{thm:smallri} below.



We formally identify three objects of interest (initially discussed in
\Cref{s:computation}, see \Cref{fig:pricelevel2}).  These quantities
are defined for each agent $i$, weight function $\wali$, and the
observed price $\rezi$ of this agent.  Importantly, though the
notation includes the whole profile of observed prices $\rezs$, these
objects only depend on its $i$th coordinate $\rezi$.

\begin{itemize}
\item First, the price level set $\mathcal{Q}^{\rezs}_i$ is defined as
  $\left\{(s,\wali)\,|\,\bpayi(s,\wali)=\rezi\right\}$, i.e., these
  are the $\rezi$ level-sets of $\bpayi(s,\wali)$.  The pertinent
  subset of $\mathcal{Q}^{\rezs}_i$ is
  $\mathcal{P}^{\rezs}_i=\left\{(s,\wali)\,|\,\bpayi(s,\wali)=\rezi~\text{and}~\wali\leq s/2\right\}\subseteq\mathcal{Q}^{\rezs}_i$, i.e.,
  the subset which restricts the set $\mathcal{Q}^{\rezs}_i$ to the
  region where $\wali$ is at most half the total weight
  $s$.\footnote{We can not assume that the set $\mathcal{P}_i^{\rezs}$
    is non-empty without proof.  We prove that it is non-empty in~\Cref{thm:smallri2}.}  These sets are illustrated respectively by the dashed and
  solid lines in \Cref{fig:pricelevel2}.

\item 
  Second, the elements of the price level-set
  $\mathcal{P}_i^{\rezs}$ each have unique $s$ coordinate (see~\Cref{thm:smallri}).  It
  will be convenient to describe it as a function mapping sum $s$ to
  weight $\wali$ 
of agent $i$, parameterized by the price $\rezi$.  Denote this function
  $\zwali(s)$.  This function is illustrated in \Cref{fig:pricelevel2}
  where below the dotted line $\wali=s/2$, the curve is a
  function in $s$.  Qualitatively, it is monotone decreasing and not necessarily convex.

\item Third, $\mathcal{P}_i^{\rezs}$ is non-empty and possesses a
  smallest total weights coordinate $s$ which we define as $\minsumi = \min \{s : (s,\wali) \in
  \mathcal{P}^{\rezs}_i\}$.  In the example of \Cref{fig:pricelevel2},
  $\minsumi$ is the $s$-coordinate of the point where the level-set
  $\mathcal{P}_i^{\rezs}$ intersects the $\wali=s/2$ line.  In the
  case that the entire set $\mathcal{Q}_i^{\rezs}$ is below the $\wali
  = s/2$ line, $\mathcal{P}_i^{\rezs} = \mathcal{Q}_i^{\rezs}$ and
  $\minsumi$ is the sum $s$ that uniquely satisfies $\bpayi(s,\wali(h)) = \rezi$.%
\footnote{Further discussion will be given in \Cref{a:b:ends} where we
  show that $\minsumi$ can be computed via a binary search, between starting lower and upper bounds which are easy to find.}
\end{itemize}

\ifsodacut
\else
\begin{figure*}
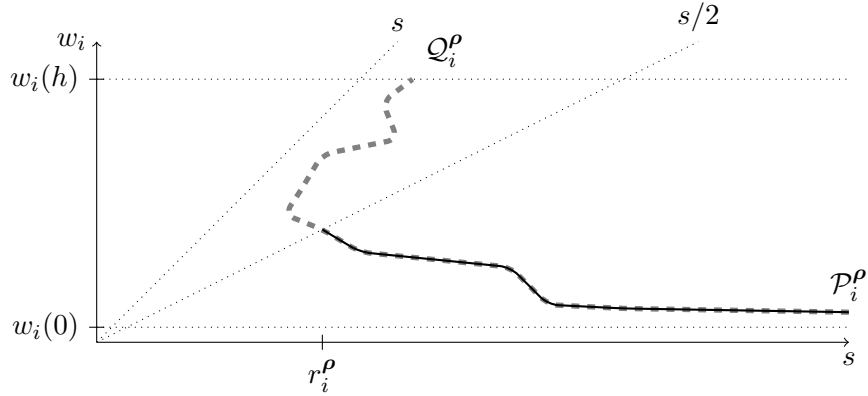

\centering
\pricelevelfigure
\caption{%
The price level set curve $\mathcal{Q}^{\rezs}_i =
  \left\{(s,\wali):\bpayi(s,\wali)=\rezi\right\}$ (thick, gray,
  dashed), is decreasing below the $\wali = s/2$ line
  (\Cref{lem:decreasingweightj}) where it is defined by its subset
  $\mathcal{P}^{\rezs}_i$ (thin, black, solid).  It is bounded above
  by the $\wali = s$ line (trivially as $s$ sums over all weights) and
  the $\wali = \wali(h)$ line (the maximum weight in the support of
  the values), and below by the $\wali = \wali(0)$ line which we have
  assumed to be strictly positive.  $\minsumi$ is the minimum weight-sum consistent
  with observed price $\rezi$ and weights $\wali \leq s/2$.  \textit{This is an exact replica  of~\Cref{fig:pricelevel}, copied here for convenience.}}
  \label{fig:pricelevel2}
\end{figure*}
\fi

Continuing, consider price level set $\mathcal{Q}^{\rezs}_i$.  We note again that $\bpayi(\cdot)$
can be used to map a domain of $(s,\wali)$ to price level sets (as
depicted in \Cref{fig:pricelevel2}).  In this context we return to
analyzing the partial derivatives of $\bpayi(\cdot)$, formally with
\Cref{thm:smallri} (immediately to follow).  Intuitively, the
statement of \Cref{thm:smallri} claims the following, with relation to
\Cref{fig:pricelevel2}:
\begin{itemize}
\item Part 1 of \Cref{thm:smallri}: below the $\wali = s/2$ line,
  starting at any point $(\hat{s},\hat{\wali})$, we strictly ``move
  up'' fixed-price level sets as we move up to
  $(\hat{s},\hat{\wali}+\delta)$, or to the right to
  $(\hat{s}+\delta,\hat{\wali})$.
\item Part 2 of \Cref{thm:smallri}: below the $\wali = s/2$ line,
  price level sets are necessarily decreasing curves; further they are
  defined for arbitrarily large $s$, which reflects the many-to-one
  nature of this analysis: other than the summary statistic $s$,
  nothing specific is known about the other agents, for example we do not need to know the
  number of other agents or their weights functions or bounds on their weights.
\item Additionally, above the $\wali = s/2$ line, we ``move up''
  fixed-price level sets with an increase in $s$ but not necessarily
  with an increase in $\wali$.
\end{itemize}

For use in \Cref{thm:smallri} and the rest of this \Cref{a:computation}, we overload the notation $h_i$ as defined in equation~\eqref{eq:hi} to be a function of $\wali$ and $s$ rather than $\wals$, with the obvious substitution in its definition to replace $\sum_k \wali[k]$ with $s$.

\begin{lemma}
\label{thm:smallri}
Assume $\wali \leq s/2$ and fix the price of agent $i$ to be
$\rezi>0$.  Let $\mathcal{Q}_i^{\rezs}$, $\mathcal{P}_i^{\rezs}$,
$\zwali(s)$ and $\minsumi$ be defined as above, and
$h_i=\frac{\partial \payi}{\partial \wali}/\frac{\partial
  \payi}{\partial \wali[j]}$ extended from equation~\eqref{eq:hi}.  Then
restricting analysis to the cone described by $\wali \leq s/2$ and
non-negative weight $\wali$:
\begin{enumerate}
\item $\bpayi(s,\wali)$ is a continuous and strictly increasing
  function in both variables $s$ and $\wali$, with specifically
  $\frac{\partial\bpayi(s,\wali)}{\partial s}=\frac{\partial
    \payi}{\partial \wali[j]}$ and $\frac{\partial \bpayi(s,\wali)}{\partial \wali} =\frac{\partial
    \payi}{\partial \wali[j]}\cdot(h_i-1)$;
\item $\zwali(s)$ is a well-defined and strictly decreasing function
  on $s \in [\minsumi,\infty)$ with $\frac{d\zwali(s)}{ds} =
    \frac{1}{1-h_i}$; in particular the function is well-defined for
    arbitrarily large $s$ independent of the number of other agents or
    their weight functions;
\item $\zwali(s)$ can be computed to arbitrary precision using binary search.
\end{enumerate}
Further, (1) partially extends such that $\bpayi(s,\wali)$ is
increasing in $s$ with $\frac{\partial\bpayi(s,\wali)}{\partial
  s}=\frac{\partial \payi}{\partial \wali[j]}$ holding everywhere,
(so including above the line $\wali = s/2$).
\end{lemma}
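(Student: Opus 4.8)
The plan is to derive all three parts from the partial-derivative identities~\eqref{eqn:betabardir} and~\eqref{eqn:betabardir2}, combined with the sign information of \Cref{lem:pderi} and the half-weight dichotomy of \Cref{lem:halfw}. For part~(1), I would first use~\eqref{eqn:betabardir2} to read off $\frac{\partial \bpayi}{\partial s} = \frac{\partial \payi}{\partial \wali[j]}$, which \Cref{lem:pderi} guarantees is strictly positive at every point off the lower boundary, with no restriction to the cone; this already establishes the final ``further'' clause of the statement. Then I would rewrite~\eqref{eqn:betabardir} by factoring out the common cross-partial and substituting the definition~\eqref{eq:hi} of $h_i$, obtaining $\frac{\partial \bpayi}{\partial \wali} = \frac{\partial \payi}{\partial \wali[j]}\,(h_i-1)$. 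Inside the cone $\wali \leq s/2$ the contrapositive of \Cref{lem:halfw} yields $h_i > 1$, so $(h_i-1) > 0$ and the $\wali$-partial is strictly positive as well; continuity is inherited from the integral form~\eqref{eqn:stratiofswi} together with the continuous differentiability of the weight functions.

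For part~(2), well-definedness of $\zwali$ is a consequence of part~(1): strict monotonicity of $\bpayi$ in $\wali$ forces the level set $\{\wali : \bpayi(s,\wali) = \rezi\}$ to contain at most one point for each fixed $s$ inside the cone, so the assignment $s \mapsto \wali$ is a genuine function. Differentiating $\bpayi(s,\zwali(s)) = \rezi$ implicitly and solving gives $\frac{d\zwali}{ds} = -\frac{\partial \bpayi/\partial s}{\partial \bpayi/\partial \wali}$; substituting the two partials from part~(1) and dividing numerator and denominator by the cross-partial collapses the ratio to $\frac{1}{1-h_i}$, which is strictly negative since $h_i > 1$, so $\zwali$ is strictly decreasing. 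For the domain, I would invoke \Cref{thm:smallri2} to secure non-emptiness of $\mathcal{P}_i^{\rezs}$ and the lower endpoint $\minsumi$, then continue the curve using its continuous negative derivative: as $s \to \infty$ the decreasing, positive-valued weight converges to a positive infimum (a positive price is inconsistent with non-positive weight), and as $s$ decreases the weight rises until it meets either the $\wali = \wali(h)$ boundary or the $\wali = s/2$ line, the latter realizing $\minsumi$. Independence from the other agents is immediate from~\eqref{eqn:stratiofswi}, in which only the aggregate $s$ and the coordinate $\wali$ appear.

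Part~(3) is then short: by part~(1), $\bpayi(s,\cdot)$ is continuous and strictly increasing on $[\wali(0), s/2]$, and $\zwali(s)$ lies in this interval, so binary search on $\wali$ converges to the unique root of $\bpayi(s,\wali) = \rezi$ at a geometric rate.

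I expect the main obstacle to be the domain argument in part~(2). The derivative computation itself is routine once part~(1) is available, but justifying that $\zwali$ extends as a single-valued function across the entire half-line $[\minsumi,\infty)$---rather than merely locally near one solution---requires combining the non-emptiness guarantee of \Cref{thm:smallri2} with a careful continuation argument that rules out the curve terminating before $s \to \infty$ and correctly identifies the boundary behavior at the lower endpoint.
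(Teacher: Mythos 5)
Your proposal is correct and follows essentially the same route as the paper's own proof: the same reduction of the partials of $\bpayi$ via equations~\eqref{eqn:betabardir} and~\eqref{eqn:betabardir2}, the sign facts from \Cref{lem:pderi} and the contrapositive of \Cref{lem:halfw}, implicit differentiation for $\frac{d\zwali}{ds}=\frac{1}{1-h_i}$, and the continuation argument anchored by \Cref{thm:smallri2} for the domain $[\minsumi,\infty)$. The obstacle you flag in part~(2) is handled in the paper exactly as you outline, so no changes are needed.
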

\begin{proof}

For (1), as in \Cref{s:oneweightspsd}, we set $h_i=\frac{\partial
  \payi}{\partial \wali}/\frac{\partial \payi}{\partial \wali[j]}$,
i.e., the diagonal entry in the Jacobian matrix after the
normalization (divide each row by its common cross-partial term), and with the substitution $s=\sum_k\wali[k]$.

From equation~\eqref{eqn:betabardir}, $\frac{\partial \bpayi}{\partial \wali}=\frac{\partial \payi}{\partial \wali}-\frac{\partial \payi}{\partial \wali[j]}=\frac{\partial \payi}{\partial \wali[j]}\cdot(h_i-1)$.
By \Cref{lem:halfw}, $h_i$ is larger than 1 when $\wali \leq s/2$. By \Cref{lem:pderi}, $\frac{\partial \payi}{\partial \wali[j]}>0$. Hence
$\frac{\partial \payi}{\partial \wali[j]}\cdot(h_i-1)>0$ when $w_i\leq s/2$.
The $\frac{\partial\bpayi}{\partial s}$ direction follows directly from equation~\eqref{eqn:betabardir2} with \Cref{lem:pderi} applying to $\frac{\partial \payi}{\partial \wali[j]}$.  This argument is also sufficient to prove the last claim of the lemma statement extending (1).

For (2), we first observe that the function $\zwali(s)$ is
well-defined (on an appropriate domain) because $\zwali(s)$ uses fixed
$\rezi$, otherwise it would contradict the monotonicity properties
proved in (1) which requires we ``move up" price level sets whenever
we unilaterally increase $\wali$.  Therefore we can take the derivative
with respect to $s$.  We get $\frac{d\zwali(s)}{ds}$ is negative for
$\wali \leq s/2$ by the following calculation (from first-order
conditions as we move along the fixed curve resulting from $\bpayi(\cdot)$
having constant output $\rezi$):
\ifsoda
\begin{align*}
&0=\frac{\partial \bpayi}{\partial \zwali(s)}d\zwali(s)+\frac{\partial \bpayi}{\partial s}ds\qquad \Rightarrow\\
& \qquad \frac{d\zwali(s)}{ds}=\frac{-\frac{\partial \bpayi}{\partial s} }{\frac{\partial \bpayi}{\partial \zwali(s)}}=
-\frac{\frac{\partial \payi}{\partial w_j}}{\frac{\partial \payi}{\partial \wali}-\frac{\partial \payi}{\partial w_j}}=\frac{1}{1-h_i}<0
\end{align*}
\else
\begin{align*}
&0=\frac{\partial \bpayi}{\partial \zwali(s)}d\zwali(s)+\frac{\partial \bpayi}{\partial s}ds\\
&\qquad \Rightarrow \qquad \frac{d\zwali(s)}{ds}=\frac{-\frac{\partial \bpayi}{\partial s} }{\frac{\partial \bpayi}{\partial \zwali(s)}}=
-\frac{\frac{\partial \payi}{\partial w_j}}{\frac{\partial \payi}{\partial \wali}-\frac{\partial \payi}{\partial w_j}}=\frac{1}{1-h_i}<0
\end{align*}
\fi

The last inequality uses \Cref{lem:halfw} from which $\wali \leq s/2$
implies $h_i > 1$.  We next prove for (2) that $\zwali(\cdot)$ and its domain are well-defined.

Technical~\Cref{thm:smallri2} (deferred to \Cref{a:b:ends}) will show that $\mathcal{P}_i^{\rezs}$ is non-empty.  Consider starting at any of its elements.  We can theoretically use its continuous derivative to ``trace out" the curve of the function $\zwali(s)$.  As $s$ increases from the starting point, we note that positive prices can never be consistent with non-positive weights, such that the continuous and negative derivative implies that the function converges to some positive infimum as $s\rightarrow\infty$.  As $s$ decreases, the function increases until either we reach a maximum feasible point with $(s,\zwali(s)=\wali(h))$ from the maximum value type $h$, or otherwise the input-output pair $(s,\zwali(s))$ intersects the line $\wali = s/2$, and minimum total weight $\minsumi$ is realized at the point of intersection.

This shows that ``reals at least $\minsumi$" is a valid domain for $\zwali(s)$, and this completes the first statement in (2).  The second statement of (2) follows because the construction of the set $\mathcal{Q}^{\rezs}_i$ is independent of other agents: for any realization of the set of other agents, their effect is summarized with the variable $s$.

For (3), we note that the output of function $\zwali(s)$ has constant lower-bound $\wali(0)$ and is upper-bounded by $s/2$, so we can indeed run binary search.
\end{proof}

Within \Cref{thm:smallri} we explicitly note the significance of the $h_i$ terms
in derivative calculations.  As the last part of the statement shows,
these derivative calculations also hold for the space $\wali > s/2$
(with a carefully extended interpretation of the $\zwali$ function to
be sure to apply the mapping from $s$ at the correct $\wali$); but we
do not get the contrapositive of \Cref{lem:halfw} in this region
to guarantee the sign of $(1-h_i)$, and so we do not get the monotonicity
property of (2) everywhere.

Further, recall the statement of~\Cref{lem:fin} (originally given on page~\pageref{lem:fin}):
\begin{numberedlemma}{\ref{lem:fin}}
When $h_1<1$ and $h_j>1~\forall j\neq1$, we have $\sum_k \frac{1}{1-h_k}>1$.
\end{numberedlemma}

As economic intuition for this result, we now see that the terms in the sum are exactly the derivatives $\frac{\partial\bpayi(s,\wali)}{\partial s}$, i.e., derivatives of the respective agents' $\rezi$ level set curves.  We will see the importance of~\Cref{lem:fin} below as the key final step in the proof of~\Cref{lem:increasingstratbar}.

\subsection{The Full Algorithm}
\label{a:b:alg}

\ifsodacut
The full description of the algorithm in this appendix section -- which includes intuitive description of each step -- is both lengthy and highly technical.  We leave it to the full version of the paper.  However, to give further support to the proof of \Cref{thm:simplealg}, we do retain one element from this section.

\else
Because the observed profile of prices $\rezs$ is invertible to a
unique profile of weights (from \Cref{s:oneweightspsd}), the quantity
$s =\sum_k \wali[k]$ is uniquely determined by observed prices.  The
intuitive description of the algorithmic strategy to compute the
inversion from prices to weights is as follows.

Motivated by \Cref{a:b:structure}, we intend to split the search space for the unique $s$.  Clearly at most one of the agents can have strictly more than half the weight.
We cover the entirety of weights space by considering $n$ subspaces, representing the $n$ possibilities that any one agent $i^*$ is {\em allowed but not required} to have strictly more than half the weight.  (The region where all agents have at most half the weight is covered by all subspaces, without introducing a conflict.)  
Explicitly, define
\ifsoda
\begin{align}
\label{eqn:spacedef}
&\text{for $i\in\{1,\ldots,n\}$}\qquad \texttt{Space-i} =\\
\notag
& \left\{\wals~|~\wali\text{ unrestricted} \wedge w_j\leq \sum\nolimits_k \wali[k]/2,~\forall~j\neq i\right\}
\end{align}
\else
\begin{align}
\label{eqn:spacedef}
\texttt{Space-i} &= \left\{\wals~|~\wali\text{ unrestricted} \wedge w_j\leq \sum\nolimits_k \wali[k]/2,~\forall~j\neq i\right\}& \text{for $i\in\{1,\ldots,n\}$}
\end{align}
\fi

\noindent Recall the definition of $\zpayi[i^*]$ from line~\eqref{eqn:estbid}:
\begin{equation*}
\zpayi[i^*](s) \vcentcolon=\payi[i^*](\max\{s-\sum\nolimits_{i\neq i^*} \zwali(s),\wali[i^*](0) \}, \zwals_{-i^*}(s))
\end{equation*}

A specific monotonicity property within each \texttt{Space-i} (see
\Cref{lem:increasingstratbar} and its proof) will allow the algorithm
to use a natural binary search for the solution.  Considering such a
search in each of $n$ spaces will deterministically find $\tilde{s}$
to yield a vector of weights $\tilde{\wals}$ as
$((\tilde{s}-\sum\nolimits_{i\neq i^*}
\bwali(\tilde{s})),\zwals_{-i}(\tilde{s}))$, which are arbitrarily close
to the true $s^*$ and true $\wals^*$ (i.e.,
the $\wals^*$ which maps to $\rezs$ under $\pays$).
\fi

\ifsodacut
We recall the definition of $\zpayi[i^*]$ given previously in the main body of the paper, and extend it in terms of $\bpayi[i^*]$ (defined in line~\eqref{eqn:stratiofswi}):
\ifsoda
\begin{align}
\tag{\ref{eqn:estbid}}
&\zpayi[i^*](s) \vcentcolon=\\
\notag
&\payi[i^*](\max\{s-\sum\nolimits_{i\neq i^*} \zwali(s),\wali[i^*](0)\}, \zwals_{-i^*}(s))\\
\label{eqn:zpayi2def}
&=\bpayi[i^*](s,\max\{s-\sum\nolimits_{i\neq i^*} \zwali(s),\wali[i^*](0)\})
\end{align}
\else
\begin{align}
\tag{\ref{eqn:estbid}}
\zpayi[i^*](s) &\vcentcolon=\payi[i^*](\max\{s-\sum\nolimits_{i\neq i^*} \zwali(s),\wali[i^*](0)\}, \zwals_{-i^*}(s))\\
\label{eqn:zpayi2def}
&=\bpayi[i^*](s,\max\{s-\sum\nolimits_{i\neq i^*} \zwali(s),\wali[i^*](0)\})
\end{align}
\fi

Intuitively the definition here is: given $s$, we assign guesses of weights $\zwali[j](s)$ to other agents $j$ and agent $i^*$ gets weight as the balance $s-\sum_{j\neq i^*} \zwali[j](s)$, then we calculate the price charged to agent $i^*$ (simply from Myerson's characterization).
\else
The goal of the algorithm is to find the agent $i^*$ and unique $s$ such that $\zpayi[i^*](s)$ outputs $\rezi[i^*]$, the true payment.  I.e., we search for the equality of $\zpayi[i^*](s) = \rezi[i^*]$.

We now give the full version of the algorithm.  Beyond the outline in the main body of the paper, the most significant new technical piece in the expanded description is the use of $\minsumi[j]$ variables ($j\neq i$) to lower bound the search for $s$ in any given candidate \texttt{Space-i}.  The $\minsumi[j]$ variables were described as the third item of interest in \Cref{a:b:structure}.  They are used in the expanded descriptions of new pre-process step 0, and steps 1(a)(b)(c).  We also newly use $s(h) = \sum_k \wali[k](h)$ to denote the maximum sum of weights possible.

The full algorithm (with intuitive remarks):

\begin{enumerate}
\item[0.] \textit{Pre-process}: For each $i$, compute $\minsumi$:\footnote{See \Cref{a:b:ends} for further explanation.}
\begin{enumerate}
\item (general case: $\mathcal{P}_i^{\rezs} \neq \mathcal{Q}_i^{\rezs}$) 
   if $\bpayi(2\wali(h),\wali(h))\geq \rezi$, run binary search
  ``diagonally" on the line segment of $\wali=s/2$ between $(0,0)$ and
  $(2\wali(h),\wali(h))$ to find an element of $\mathcal{Q}_i^{\rezs}$
  and use its $s$ coordinate as $\minsumi$ (which we can do because
  $\bpayi(\cdot)$ is strictly increasing on this domain); 
\item (edge case: $\mathcal{P}_i^{\rezs}=\mathcal{Q}_i^{\rezs}$)
  otherwise, fix $\wali$ coordinate to its maximum $\wali(h)$ and run
  binary search ``horizontally" to find
  $\hat{s}\in\left[2\wali(h),s(h)\right]$ representing
  $(\hat{s},\wali(h))\in\mathcal{Q}_i^{\rezs}$ (which we can do
  because $\bpayi(\cdot)$ is strictly increasing in $s$ for constant
  $\wali$); set minimum total weight $\minsumi =\hat{s}$.
\end{enumerate}

\item find an agent $i^*$ and search a range $\left[s_L,s_H\right]$ over possible $s$ by iteratively running the following for each fixed assignment of agent $i \in \left\{1,\ldots,n\right\}$:
\begin{enumerate}
\item temporarily set $i^*=i$;
\item determine the range $\left[s_L,s_H\right]$ on which $\zpayi[i^*]$ is well-defined and searching is appropriate:
\begin{itemize}
\item identify a candidate lower bound $s_L=\max_{j\neq i^*} \minsumi[j]$
  (because any smaller $s\in\left[0,s_L\right)$ is outside the domain
  of $\zwali[j]$, for some $j$);
\item run a ``validation check'' on the lower bound, specifically, exit this iteration of the for-loop if we do not observe:
\ifsoda
\begin{align*}
 &\rezi[i^*]\geq \zpayi[i^*](s_L) =\bpayi[i^*](s_L,s_{\text{rat.}})   ~\text{for}~ s_{\text{rat.}}\\
 \notag
 & = \max\{s_L-\sum\nolimits_{k\neq i^*} \zwali[k](s_L),\wali[i^*](0)\}
\end{align*}
\else
\begin{equation*}
\zpayi[i^*](s_L) = \bpayi[i^*](s_L,\max\{s_L-\sum\nolimits_{k\neq i^*} \zwali[k](s_L),\wali[i^*](0)\}) \leq \rezi[i^*]
\end{equation*}
\fi
(because recall the goal of the algorithm, to search for equality of $\zpayi[i^*](s)=\rezi[i^*]$; but by \Cref{thm:smallri}, $\zpayi[i^*]$ is increasing,
then if the inequality here does not hold at the lower bound, the left hand side is
already too big and will never decrease);
\item identify a candidate upper bound $s_H$ using binary search to find $s_H$ as the largest total weight consistent with the maximum weight of agent $i^*$, i.e, such that $s_H-\sum_{k\neq i^*}\zwali[k](s_H)=\wali[i^*](h)$:
\begin{itemize}
\item search for $s_H\in\left[s_L,s(h)\right]$ ($s > s_H$ will ``guess'' impossible weights $\wali[i^*]>\wali[i^*](h)$ as input to $\bpayi[i^*]$, because $\wali[i^*]$ gets the balance of $s$ after subtracting the decreasing functions in $\sum_{k\neq i^*}\wali[k](s)$, see~\Cref{lem:uppersh});
\end{itemize}
\item run a ``validation check'' on the upper bound, specifically, exit this iteration of the for-loop immediately after either of the following fail (in order):
\ifsoda
\begin{align*}
\wali[i^*](0)&\leq s_H-\sum\nolimits_{k\neq i^*}\zpayi[k](s_H)\\
\rezi[i^*] &\leq \zpayi[i^*](s_H) \\
&= \bpayi[i^*](s_H,s_H-\sum\nolimits_{k\neq i^*} \zwali[k](s_H)) 
\end{align*}
\else
\begin{align*}
\wali[i^*](0)&\leq s_H-\sum\nolimits_{k\neq i^*}\zpayi[k](s_H)\\
\rezi[i^*] &\leq \zpayi[i^*](s_H) = \bpayi[i^*](s_H,s_H-\sum\nolimits_{k\neq i^*} \zwali[k](s_H)) 
\end{align*}
\fi
(with the first line checking the rationality of the interim guess of weight $\wali[i^*]$ and the second applying reasoning symmetric to the justification of the oracle on the lower bound);
\end{itemize}
\item permanently fix $i^* = i, s_L, s_H$ and break the for-loop (if this step is reached, then the range $\left[s_L,s_H\right]$ over $s$ is non-empty and in fact it definitively contains a solution by passing the checks at both end points, which is why they are ``validation" checks);
\end{enumerate}
\item use the monotonicity of $\zpayi[i^*]$ to binary search on $s$ for the true $s^*$, converging $\zpayi[i^*](s)$ to $\rezi[i^*]$;
\item when the binary search has been run to satisfactory precision and reached a final estimate $\tilde{s}$, 
output weights $\tilde{\wals}=(\tilde{s}-\sum_{i\neq i^*} \zwali(\tilde{s}),\zwalsmi[i^*](\tilde{s}))$ which invert to values $\evals$ via respective $\vali(\cdot)$ functions.
\end{enumerate}
\fi

\subsection{Proofs of \Cref{lem:decreasingweightj}, \Cref{lem:increasingstratbar}, and \Cref{thm:simplealg} (Algorithm Correctness)}
\label{a:b:proofs}

We now prove the key lemmas claimed in the main body of the paper.
The purpose of \Cref{lem:decreasingweightj} is to show that if we fix
the ``large weight candidate agent'' $i^*$ putting us in
\texttt{Space-i$^*$}, then all other agents have weights that are a
precise, monotonically decreasing function of $s$.  Critically, recall
that we can set $i^*$ to be any agent, it is not restricted to be the
agent (if any) who actually has more than half the weight (according
to the true weights of any specific problem instance).

\begin{numberedlemma}{\ref{lem:decreasingweightj}}
The price level set $\mathcal{Q}^{\rezs}_i$ is a curve; further,
restricting $\mathcal{Q}^{\rezs}_i$ to the region $\wali \leq s/2$,
the resulting subset $\mathcal{P}^{\rezs}_i$ can be written as
$\{(s,\zwali(s)):s \in [\minsumi,\infty)\}$ for a real-valued decreasing
  function $\zwali$ mapping sum $s$ to a weight $\wali$ that is
  parameterized by the observed price $\rezi$.
\end{numberedlemma}
\begin{proof}
This lemma follows as a special case of \Cref{thm:smallri}.
\end{proof}

The purpose of \Cref{lem:increasingstratbar} is to prove that function $\zpayi[i^*]$ is monotone increasing in $s$ within \texttt{Space-i$^*$}; setting up our ability to identify end points $s_L$ and $s_H$ where we run oracle checks to identify if a solution exists between them, i.e., setting up our ability to run binary search for the unique solution $s^*$ in a correct space.

\begin{numberedlemma}{\ref{lem:increasingstratbar}}
For any agent $i^*$ and $s\in [\max_{j\neq i^*}\minsumi[j],\infty)$, function $\zpayi[i^*]$ is weakly increasing; specifically, $\zpayi[i^*]$ is constant when $s-\sum\nolimits_{i\neq i^*} \zwali(s)\leq\wali[i^*](0)$ and strictly increasing otherwise.
\end{numberedlemma}

\begin{proof}
The quantity $s-\sum\nolimits_{i\neq i^*} \zwali(s)$ is monotone
increasing in $s$ as every term in the negated sum is decreasing in
$s$ (\Cref{thm:smallri}).  Therefore there are two cases: the ``guess" of weight
$\wali[i^*]^g\vcentcolon=\max\{s-\sum\nolimits_{i\neq i^*} \zwali(s),\wali[i^*](0)
\}$ lies in one of two ranges that are delineated by the threshold
where the increasing quantity $s-\sum\nolimits_{i\neq i^*} \zwali(s)$ crosses
the constant $\wali[i^*](0)$.

For small weight sums $s$ (below the threshold), the guess $\wali[i^*]$ evaluates to $\wali[i^*](0)$. In this region we have $$\zpayi[i^*](s)=\payi[i^*](\wali[i^*](0) , \zwals_{-i^*}(s))=0$$ because an agent with minimum weight (from value 0) uniquely inverts back to value of 0; and an agent with value 0 pays 0, from the definition of $\payi[i^*]$, see equation~\eqref{eq:winner-pays-bid-strats}.

The remainder of this proof is devoted to showing the second case, that the function $\zpayi[i^*](s)$
is strictly increasing for large
weight sums $s$ where the guess $\wali[i^*]^g$ for the weight of $i^*$ evaluates to
$s-\sum\nolimits_{i\neq i^*} \zwali(s)$.  For the following, we use the result of~\Cref{lem:halfw}
and the definition of $h_i$ in equation~\eqref{eq:hi}.  Note that when we are in
\texttt{Space-i$^*$}, we have $h_{k}>1$ for $k\neq i^*$.
\ifsoda
\begin{align*}
&\frac{d\zpayi[i^*](s)}{ds}=\frac{d \payi[i^*]((s-\sum\nolimits_{i\neq i^*} \zwali(s)),\zwals_{-i^*}(s))}{d s}\\
&\quad=\frac{\partial \payi[i^*]}{\partial \wali[i^*]}\left(1-\sum_{i\neq i^*}\frac{d \zwali(s)}{ds}\right)+\frac{\partial \payi[i^*]}{\partial w_{j\neq i^*}}\sum_{i\neq i^*}\frac{d \zwali(s)}{ds}\\
&\quad=\frac{\partial \payi[i^*]}{\partial \wali[i^*]}\left(1-\sum_{i\neq i^*}\frac1{1-h_i}+\frac{1}{h_{i^*}}\sum_{i\neq i^*}\frac1{1-h_i}\right)\\
&\quad=\frac{\partial \payi[i^*]}{\partial \wali[i^*]}\left[1+\left(\frac{1}{h_{i^*}}-1\right)\sum_{i\neq i^*}\frac1{1-h_i}\right]
\end{align*}
\else
\begin{eqnarray*}
\frac{d\zpayi[i^*](s)}{ds}&=&\frac{d \payi[i^*]((s-\sum\nolimits_{i\neq i^*} \zwali(s)),\zwalsmi[i^*](s))}{d s}\\
&=&\frac{\partial \payi[i^*]}{\partial \wali[i^*]}\left(1-\sum_{i\neq i^*}\frac{d \zwali(s)}{ds}\right)+\frac{\partial \payi[i^*]}{\partial w_{j\neq i^*}}\sum_{i\neq i^*}\frac{d \zwali(s)}{ds}\\
&=&\frac{\partial \payi[i^*]}{\partial \wali[i^*]}\left(1-\sum_{i\neq i^*}\frac1{1-h_i}+\frac{1}{h_{i^*}}\sum_{i\neq i^*}\frac1{1-h_i}\right)\\
&=&\frac{\partial \payi[i^*]}{\partial \wali[i^*]}\left[1+\left(\frac{1}{h_{i^*}}-1\right)\sum_{i\neq i^*}\frac1{1-h_i}\right]
\end{eqnarray*}
\fi
In the second line here, the notation $\frac{\partial \payi[i^*]}{\partial
  w_{j\neq i^*}}$ recalls that all cross-partials are the same for other agents $j$; moving
from the second line to the third line, we replaced $\frac{d
  \zwali(s)}{ds}=1/(1-h_i)$ from Part~2 of
\Cref{thm:smallri}, which also guarantees that each of these terms is strictly negative.  When $h_{i^*}\geq 1$, the total bracketed term is positive, and $\frac{d \payi[i^*]((s-\sum\nolimits_{i\neq i^*}
  \zwali(s)),\zwalsmi[i^*](s))}{d s}>0$.

Alternatively to make an argument when $h_{i^*}<1$, we further rearrange the algebra of the partial.  \ifsoda Reorganizing from the last line and restoring at the bottom:\else Continuing from the last line:\fi
\ifsoda
\begin{align*}
&\frac{1}{\partial\payi[i^*]/\partial\wali[i^*]}\cdot\frac{d\zpayi[i^*](s)}{ds}=\left[1+\left(\frac{1}{h_{i^*}}-1\right)\sum_{i\neq i^*}\frac1{1-h_i}\right]\\
&\quad=\left[\frac{\frac{1}{h_{i^*}}-1}{\frac{1}{h_{i^*}}-1}+\left(\frac{1}{h_{i^*}}-1\right)\sum_{i\neq i^*}\frac1{1-h_i}\right]\\
&\quad=\left[\left(\frac{1}{h_{i^*}}-1\right)\left(\left(\sum_{k}\frac1{1-h_k}\right)+\frac{h_{i^*}}{1-h_{i^*}}-\frac{1}{1-h_{i^*}}\right)\right]\\
&\quad=\left(\frac{1}{h_{i^*}}-1\right)\left(\sum_{k}\frac1{1-h_k}-1\right)
\intertext{such that}
&\frac{d\zpayi[i^*](s)}{ds}=\frac{\partial\payi[i^*]}{\partial\wali[i^*]}\left(\frac{1}{h_{i^*}}-1\right)\left(\sum_{k}\frac1{1-h_k}-1\right)
\end{align*}
\else
\begin{eqnarray*}
\frac{d\zpayi[i^*](s)}{ds}&=&\frac{\partial \payi[i^*]}{\partial \wali[i^*]}\left[1+\left(\frac{1}{h_{i^*}}-1\right)\sum_{i\neq i^*}\frac1{1-h_i}\right]\\
&=&\frac{\partial \payi[i^*]}{\partial \wali[i^*]}\left[\frac{\frac{1}{h_{i^*}}-1}{\frac{1}{h_{i^*}}-1}+\left(\frac{1}{h_{i^*}}-1\right)\sum_{i\neq i^*}\frac1{1-h_i}\right]\\
&=&\frac{\partial \payi[i^*]}{\partial \wali[i^*]}\left[\left(\frac{1}{h_{i^*}}-1\right)\left(\left(\sum_{i\neq i^*}\frac1{1-h_i}\right)+\frac{h_{i^*}}{1-h_{i^*}}+\frac{1}{1-h_{i^*}}-\frac{1}{1-h_{i^*}}\right)\right]\\
&=&\frac{\partial \payi[i^*]}{\partial \wali[i^*]}\left[\left(\frac{1}{h_{i^*}}-1\right)\left(\left(\sum_{k}\frac1{1-h_k}\right)+\frac{h_{i^*}}{1-h_{i^*}}-\frac{1}{1-h_{i^*}}\right)\right]\\
&=&\frac{\partial \payi[i^*]}{\partial \wali[i^*]}\left(\frac{1}{h_{i^*}}-1\right)\left(\sum_{k}\frac1{1-h_k}-1\right)
\end{eqnarray*}
\fi
When $h_{i^*}<1$, this quantity is again necessarily positive (with the last term positive by Lemma \ref{lem:fin}).\footnote{The significance of~\Cref{lem:fin} here was discussed after the original statement of~\Cref{lem:increasingstratbar} at the end of ~\Cref{s:compsum}, and after~\Cref{thm:smallri} at the end of~\Cref{a:b:structure}.}  So again $\frac{d \payi[i^*]((s-\sum\nolimits_{i\neq i^*} \bwali(s)),\zwalsmi[i^*](s))}{d s}>0$.  We conclude that $\zpayi[i^*](s)\vcentcolon=\payi[i^*]((s-\sum\nolimits_{i\neq i^*} \zwali(s)),\zwalsmi[i^*](s))$ is strictly increasing in $s$ for this second case, i.e., the region of large $s$ for which \begin{equation*}\wali[i^*]^g\vcentcolon=\max\{s-\sum\nolimits_{i\neq i^*} \zwali(s),\wali[i^*](0)
\} = s-\sum\nolimits_{i\neq i^*} \zwali(s)\qedend\end{equation*}
\end{proof}

Finally we argue the correctness of the algorithm.  However,
correctness of the technical computations in pre-processing step 0
will be delayed to \Cref{a:b:ends}.


\begin{numberedtheorem}{\ref{thm:simplealg}}
Given weights $\wals$ and payments $\rezs = \pays(\wals)$ according to
a proportional weights social choice function, the algorithm
identifies weights $\tilde{\wals}$ to within $\epsilon$ of the true
weights $\wals$ in time polynomial in the number of agents $n$, the
logarithm of the ratio of high to low weights $\max\nolimits_i\ln
(\wali(h)/\wali(0))$, and the logarithm of the desired precision $\ln
1/\epsilon$.  
\end{numberedtheorem}

\begin{proof}

Fix observed prices $\rezs$ that correspond to true weights $\wals$
with sum $s = \sum_i \wali$.  Fix an agent $i^*$ with $\wali[i^*] >
s/2$ if one exists or $i^* = 1$ if none exists.  Set $s_L =
\max_{i\neq i^*} \minsumi$, and $s_H$ as calculated in the algorithm for
\texttt{Space-i$^*$}.  It must be that $\zpayi[i^*](s_L) \leq
\rezi[i^*] \leq \zpayi[i^*](s_H)$.  The bounds follow by $\wali
\leq s/2$ for all $i\neq i^*$, and \Cref{thm:smallri2} and \Cref{lem:uppersh} (stated and proved in the next
section).  
Monotonicity of $\zpayi[i^*](\cdot)$ then implies
binary search will identify a sum $\tilde{s}$ arbitrarily close to satisfying
$\zpayi[i^*](\tilde{s}) = \rezi[i^*]$.

By the definition of $\zpayi(\cdot)$ and the convergence $\tilde{s}\rightarrow s$,
the weights $\tilde{\wals}=\wals^{\rezs}(\tilde{s})$ satisfy $\pays(\tilde{\wals})
\approx \rezs$.  We discuss rates of convergence below, but this follows because $\zwali[i\neq i^*]$ functions are decreasing in input $\tilde{s}$, so as the range of possible total weight decreases, their range of output also decreases (while still containing the solution).  The range of the guess $\zwali[i^*]$ for the weight of $i^*$ is upper-bounded by a simple additive function of the ranges of possible $s$ and $\walsmi$ (see \Cref{lem:sandwktowi}), so it is also decreasing with each binary search iteration.  By uniqueness of the inverse $\pays^{-1}$, these weights
are converging to the original weights, i.e.,  $\tilde{\wals} \approx \wals$.


In the case where $\wali[i^*] > s/2$, the iterative searches of \texttt{Space-i}
for $i\neq i^*$ will fail as these searches only consider points
$(s,\wali[i^*])$ where $\wali[i^*] < s/2$, but the weights $\wals$ that
corresponds to $\rezs$ are unique (by~\Cref{thm:themainresult}) and do not satisfy $\wali[i^*] < s/2$.
When $\wali[i^*] \leq s$ then all searches, in particular $i^*=1$, will converge to the same result of $\wals$.

Lastly, we show that binary search over $s$-coordinates within \texttt{Space-$i^*$} is sufficient to converge the algorithm's approximate $\tilde{\wals}$ to $\wals$ (measured by $\mathcal{L}_1$-norm distance) at the same assymptotic rate of the binary search on $s$, a rate which has only polynomial dependence on $n$, $\max\nolimits_i \ln(\wali(h)/\wali(0))$, and $\ln 1/\epsilon$.

By \Cref{lem:slopebounds} below, for each agent $k\neq i^*$ there is a bound $B_k$ on the magnitude of the slope of $\frac{\partial \zpayi}{\partial s}$ as a function of the value space and weight functions inputs to the problem.  $B_k$ depends on the factor $\wali/\wali(0)\leq \wali(h)/\wali(0)$ leading to the running time dependence.

Given a binary-search-step range on $s$ with size $S$, for every agent $k\neq i^*$, the size of the range containing $\wali[k]$ can not be larger than $s\cdot B_k$.  Every time the range of $s$ gets cut in half, this upper bound on the range of $\wali[k]$ also gets cut in half.  The convergence of $\tilde{\wal}_{i^*}$ to $\wali[i^*]$ follows from the convergence in coordinates $s, \walsmi[i^*]$ and \Cref{lem:sandwktowi}.
\end{proof}

\ifsodacut
For purposes of space, we leave the lemmas supporting convergence rate claims to the full version of the paper.
\else
We conclude this section with the lemmas supporting the convergence rate claims of~\Cref{thm:simplealg}.  Within the statement of~\Cref{lem:slopebounds} recall that the definition of the derivative was proved by \Cref{thm:smallri}.
\fi

\begin{lemma}
\label{lem:slopebounds}
Given agent $i$ with $\wali\leq s/2$ and function $\zpayi$, the slope $\frac{\partial \zpayi}{\partial s} =\frac{1}{1-h_i}<0$ has magnitude bounded by $\frac{\wali}{2\wali(0)}\leq \frac{\wali(h)}{2\wali(0)}$.
\end{lemma}
\ifsodacut
\else
\begin{proof}
We will show $\big| \frac{1}{1-h_i} \big| \leq
\frac{\wali}{2\wali(0)}$.  To upper bound $\big| \frac{1}{1-h_i}
\big|$, we lower bound $h_i>1$.  Note that a lower bound on $h_i$ will
only be useful for us if it strictly separates $h_i$ above 1.
Substitute $s = \sum_k \wali[k]$ into the definition of $h_i$ in
equation~\eqref{eq:hi} and bound, with justification to follow, as:

\begin{align*}
h_i
&=\frac{ \int^{\wali}_{\wali(0)}\vali'(z)\frac{1}{\wali}\cdot \frac{z}{s-\wali+z}\cdot\left[\frac{s}{\wali}-1\right]dz}{ \int^{\wali}_{\wali(0)}\vali'(z)\frac{1}{\wali}\cdot \frac{z}{s-\wali+z}\cdot\left[\frac{s}{s-\wali+z}-1\right]dz}\\
%
%
&\geq\frac{\left[\frac{s}{\wali}-1\right]} {\left[\frac{s}{s-\wali+\wali(0)}-1\right]} \cdot \frac{ \int^{\wali}_{\wali(0)}\vali'(z)\frac{1}{\wali}\cdot \frac{z}{s-\wali+z}dz}{ \int^{\wali}_{\wali(0)}\vali'(z)\frac{1}{\wali}\cdot \frac{z}{s-\wali+z}dz}\\
%
&\geq
\frac{s-\wali+\wali(0) }{\wali-\wali(0)}
%
%
%
> 1+\frac{2\wali(0)}{\wali} > 1
\end{align*}
The first inequality replaces the integrand $z$ in the bracketted term
in the denominator with its constant lower bound $\wali(0)$ (which
only decreases a denominator, in {\em the} denominator); thereafter both bracketted terms can be
brought outside of their respective integrals.  The second inequality replaces the
numerator with 1 because $\wali\leq s/2$ by statement assumption.  The third (strict) inequality both replaces $s$ with $2\wali$ by the
same reason, and adds $\wali(0)$ to both numerator
and denominator, which makes the fraction smaller because it was
originally larger than 1.

Using this bound we get:
\begin{equation*}
\bigg| \frac{1}{1-h_i} \bigg|\leq \Bigg| \frac{1}{1-\left(1+\frac{2\wali(0)}{\wali}\right)}\Bigg| =\frac{\wali}{2\wali(0)}\leq \frac{\wali(h)}{2\wali(0)}\qedend
\end{equation*} \qedend
\end{proof}
\fi

\begin{lemma}
\label{lem:sandwktowi}
Given agent $i^*$, true $s^*\in\left[s^-,s^+\right]$, and true weights $\wali[k]\in\left[\wali[k]^-,\wali[k]^+\right]$ for agents $k\neq i^*$, which induce the range for $i^*$'s weight of $\wali[i^*]\in\left[s^--\sum_{k\neq i^*} \wali[k]^+,s^+-\sum_{k\neq i^*} \wali[k]^-\right]$.  If the sizes of the ranges $\left[s^-,s^+\right]$ and $\left[\wali[k]^-,\wali[k]^+\right]$ are each individually reduced by (at least) a constant factor $\alpha$, then the size of the range of $\wali[i^*]$ is also reduced by (at least) $\alpha$.
\end{lemma}
\ifsodacut
\else
\begin{proof}
The statement follows immediately from the induced range of $\wali[i^*]$.  Its size is exactly equal to the sum of the $n$ other ranges, i.e.,
\ifsoda
\begin{align*}
    &\bigg| \left[s^--\sum\nolimits_{k\neq i^*} \wali[k]^+,s^+-\sum\nolimits_{k\neq i^*} \wali[k]^-\right]\bigg|\\
    &=\bigg| \left[s^-,s^+\right]\bigg| + \sum_{k\neq i^*}\bigg| \left[\wali[k]^-,\wali[k]^+\right] \bigg| \qedend
\end{align*}
\else
\begin{equation*}
    \bigg| \left[s^--\sum\nolimits_{k\neq i^*} \wali[k]^+,s^+-\sum\nolimits_{k\neq i^*} \wali[k]^-\right]\bigg| = \bigg| \left[s^-,s^+\right]\bigg| + \sum_{k\neq i^*}\bigg| \left[\wali[k]^-,\wali[k]^+\right] \bigg| \qedend
\end{equation*}
\fi
\end{proof}
\fi

\subsection{Correctness of Algorithm Search End Points as Oracle Checks}
\label{a:b:ends}

This section has four purposes:
\begin{itemize}
\item analyze the structure of $\minsumi$ corresponding to level set
  $\mathcal{Q}_i^{\rezs}$;
\item prove the correctness and run-time of the pre-process step 0 of
  the algorithm, which pre-computes $\minsumi$ for all $i$;
\item conclude that the lower bounds $s_L$ of search in any given
  \texttt{Space-i}, determined within each iteration of step 1 of the
  algorithm, are the correct lower bounds of feasibility;
  \item conclude that the upper bounds $s_H$ calculated within each iteration of step 1 are the correct upper bounds of feasibility.
\end{itemize}
\ifsodacut
For purposes of space, we leave the highly technical proofs in this section to the full version of the paper.
\else
For strictly positive observed payment $\rezi > 0$, the level set
$\mathcal{Q}_i^{\rezs}$ takes on the full range of weights $\wali \in
(\wali(\rezi),\wali(h)]$ (the lower bound of $\wali(\rezi)$ will not
play an important role, our algorithms will use the less restrictive
bound of $\wali(0)$ instead).  Our search for the minimum
$s$-coordinate of $\mathcal{P}_i^{\rezs}$, i.e., $\minsumi$, which is
the intersection of $\mathcal{Q}_i^{\rezs}$ with the points below the
$\wali = s/2$ line is either on the $\wali=s/2$ boundary or on the
$\wali = \wali(h)$ boundary.  This follows because constrained to
$\wali \leq s/2$ the level set is given by a decreasing function
(\Cref{lem:decreasingweightj}) and all level sets extend to $s =
\infty$ (this second fact is true, but will not need to be explicitly
proven).  The two cases are depicted in \Cref{fig:pricelevel3}.  For
convenience, we restate the preprocessing step of the algorithm:

\begin{figure*}
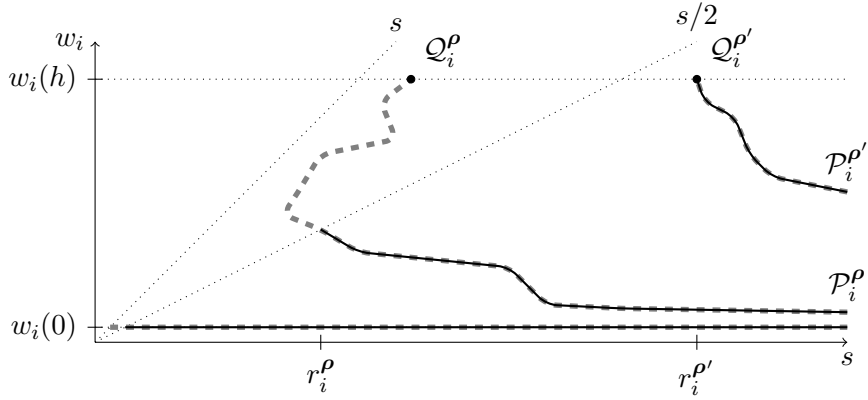

\centering
\pricelevelcasesfigure
\caption{The cases for the initialization of lower bound $\minsumi$
  are depicted.  When $\rezi=0$ both of the corresponding price level sets ${\cal P}_i$ and ${\cal Q}_i$ are on the line
  $\wali = \wali(0)$ (depicted, but not labeled). For observed price
  $\rezi \leq \bpayi(2\wali(h),\wali(h))$ the intermediate level sets
  look like the depicted $\mathcal{P}_i^{\rezs} \neq
  \mathcal{Q}_i^{\rezs}$, and $\minsumi$ corresponds to the
  $s$-coordinate at the intersection with the $\wali = s/2$ line. For
  observed price $\rezi \geq \bpayi(2\wali(h),\wali(h))$ the high
  level sets look like the depicted $\mathcal{P}_i^{\rezs'} = \mathcal{Q}_i^{\rezs'}$, and $\minsumi{}^{{}'}$ corresponds to the
  $s$-coordinate at the intersection with the $\wali = \wali(h)$
  line.}
  \label{fig:pricelevel3}
\end{figure*}

\begin{enumerate}
\item[0.] \textit{Pre-process}: For each $i$, compute $\minsumi$:
\begin{enumerate}
\item (general case: $\mathcal{P}_i^{\rezs} \neq \mathcal{Q}_i^{\rezs}$) 
   if $\bpayi(2\wali(h),\wali(h))\geq \rezi$, run binary search
  ``diagonally" on the line segment of $\wali=s/2$ between $(0,0)$ and
  $(2\wali(h),\wali(h))$ to find an element of $\mathcal{Q}_i^{\rezs}$
  and use its $s$ coordinate as $\minsumi$ (which we can do because
  $\bpayi(\cdot)$ is strictly increasing on this domain); 
\item (edge case: $\mathcal{P}_i^{\rezs}=\mathcal{Q}_i^{\rezs}$) otherwise, fix $\wali$ coordinate to its maximum $\wali(h)$ and
  run binary search ``horizontally" to find
  $\hat{s}\in\left[2\wali(h),s(h)\right]$ representing
  $(\hat{s},\wali(h))\in\mathcal{Q}_i^{\rezs}$ (which we can do
  because $\bpayi(\cdot)$ is strictly increasing in $s$ for constant
  $\wali$); set minimum total weight $\minsumi
  =\hat{s}$.
\end{enumerate}
\end{enumerate}

There is an intuitive explanation to the order of operations in the
pre-processing step 0.  First we check if we are in the general case.
We can do this because price level-sets are strictly increasing on the
line $\wali=s/2$ (see \Cref{thm:smallri2} below, extending
\Cref{thm:smallri}).  So we can check the largest the price at the
largest possible point as $\bpayi(2\wali(h), \wali(h))$; if it is too
big, we can run binary search down to $\bpayi(2\wali(0),\wali(0))=0$;
otherwise we are in the edge case where $\minsumi$ corresponds to
$\wali(h)$.  In this case, we can binary search the line $\wali =
\wali(h)$ for the point with payment $\rezi$ as, again, price level-sets are
strictly increasing (\Cref{thm:smallri}).  The formal proof is given
as \Cref{thm:smallri2}.
\fi

\begin{lemma}
\label{thm:smallri2}
For any realizable payment $\rezi$, price level set
$\mathcal{P}_i^{\rezs}$ is non-empty and its $s$-coordinates are lower
bounded by $\minsumi$ which can be computed to arbitrary precision by a
binary search.
\end{lemma}
\ifsodacut
\else
\begin{proof}  
As mentioned previously, denote the maximum sum of weights possible by $s(h) = \sum_i
\wali(h)$.  To find $\minsumi$, we first focus attention on the horizontal line with
constant weight $\wali(h)$.

A point $(\hat{s},\wali(h))$ on price
level set $\mathcal{Q}_i^{\rezs}$, i.e., with
$\bpayi(\hat{s},\wali(h)) = \rezi$, can be found to arbitrary
precision with binary search over $s \in (\wali(h),s(h)]$.
Correctness of this binary search follows because a realizable payment
$\rezi$ must satisfy $0 = \bpayi(\wali(h),\wali(h)) \leq \rezi \leq
\bpayi(s(h),\wali(h))$ and because increasing $s$-coordinate
corresponds to increasing price-level set on any line with fixed
weight $\wali$ by \Cref{thm:smallri}.  For the lower bound on the
range, an agent wins with certainty and makes no payment when the sum
of the other agent weights is zero; the upper bound is from the
natural upper bound $s \leq s(h)$.

There are now two cases depending on whether this point
$(\hat{s},\wali(h))$ is above or below the $\wali = s/2$ line.\footnote{If on the line, the cases are equal and either suffices.}  If below, then $\minsumi = \hat{s}$ because this point is tight to the maximum weight $\wali(h)$ (see~\Cref{fig:pricelevel3}), and (again by \Cref{thm:smallri}) the slope of curve $\mathcal{P}_i^{\rezs}$ is strictly negative and all smaller $s$ are infeasible.

Alternatively suppose $(\wali(h),\hat{s})$ is above the $\wali = s/2$ line, then $\minsumi$ can be found by searching the $\wali = s/2$ line.  Part (1) of \Cref{thm:smallri} guarantees that points on this line are
consistent with unique and increasing observed prices (partials of the
price function are strictly positive in both dimensions $\wali$ and
$s$, we can first move right $ds$, and then move up $d\wali$, with the
price function strictly increasing as a result of both ``moves'').  On
this line we have $0 = \bpayi(2\wali(0),\wali(0)) \leq \rezi \leq
\bpayi(2\wali(h),\wali(h))$ where the lower bound observes an agent with value 0 to always pay 0, and the upper bound follows from the supposition $\wali(h) \geq \hat{s}/2$ of this case.  Thus, a binary
search of the $\wali = s/2$ line with $\wali \in [\wali(0),\wali(h)]$
is guaranteed to find a point with price arbitrarily close to $\rezi$.
Since $\mathcal{P}_i^{\rezs}$ as a curve is decreasing in $s$, the
identified point, which is in $\mathcal{P}_i^{\rezs}$, has the minimum
$s$-coordinate.

The two cases are exhaustive and so $\minsumi$ is identified and $\mathcal{P}_i^{\rezs}$ is non-empty.
\end{proof}

\noindent We finish the section with the lemma showing the correctness of the
search range of sum $s$ within $[s_L,s_H]$.
\fi

\begin{lemma}
\label{lem:uppersh}
For true weights $\wals$, true weight sum $s = \sum \wali$, and $i^*$ with $\wali \leq s/2$ for $i \neq i^*$, sum $s$ is contained in interval $[s_L,s_H]$ (defined in step~1 of the algorithm for $i^*$).
\end{lemma}
\ifsodacut
\else
\begin{proof}
First for the lower bound $s_L$, the assumption of the lemma requires $i
\neq i^*$ satisfy $\wali \leq s/2$.  Therefore the true pair $(s,\wali)$
must be a point in ${\mathcal P}_i^{\rezs}$, and the true sum $s$
must be at least the lower bound $\minsumi$ for each $i \neq i^*$.

Second for the upper bound $s_H$, recall the definition 
\begin{equation*}
\zpayi[i^*](s) \vcentcolon=\payi[i^*](\max\{s-\sum\nolimits_{i\neq i^*} \zwali(s),\wali[i^*](0)\}, \zwals_{-i^*}(s))
\end{equation*}
which uses a guess at the total weights $\tilde{s}$ to guess the
corresponding the weight of agent $i^*$ as $(\tilde{s}-\sum\nolimits_{i\neq
  i^*} \zwali(\tilde{s}))$.  In fact, this guessed weight is strictly
increasing in $\tilde{s}$ as each term in the negated sum is strictly
decreasing (\Cref{thm:smallri}).  Our choice of $s_H$ equates this
guessed weight with its highest possible value $\wali[i^*](h)$.  By
monotonicity of the guessed weight the true $s$ must be at most $s_H$.
\end{proof}
\fi

\end{appendix}

\end{document}
